\DeclareMathOperator{\tr}{Tr}
\DeclareMathOperator{\poly}{poly}
\DeclareMathOperator{\polylog}{polylog}
\DeclareMathOperator*{\argmin}{argmin}
\DeclareMathOperator{\sign}{sign}
\DeclareMathOperator{\supp}{supp}
\newcommand{\cX}{\mathcal{X}}
\newcommand{\cR}{\mathcal{R}}
\newcommand{\cL}{\mathcal{L}}
\begin{document}

\title{Exponential Error Convergence in Data Classification with Optimized Random Features: Acceleration by Quantum Machine Learning}

\author{\name Hayata Yamasaki \email hayata.yamasaki@gmail.com \\
  \addr IQOQI Vienna, Austrian Academy of Sciences,\\
  Boltzmanngasse 3, 1090 Vienna, Austria\\
  \addr Atominstitut, Technische Universit\"at Wien,\\
  Stadionallee 2, 1020 Vienna, Austria
       \AND
       \name Sho Sonoda \email sho.sonoda@riken.jp \\
       \addr RIKEN AIP,\\
       Nihonbashi 1--4--1, Chuo-ku, Tokyo, Japan}

\editor{}

\maketitle

\begin{abstract}
Classification is a common task in machine learning.
Random features (RFs) stand as a central technique for scalable learning algorithms based on kernel methods,
and more recently proposed \textit{optimized random features}, sampled depending on the model and the data distribution, can significantly reduce and provably minimize the required number of features.
However, existing research on classification using optimized RFs has suffered from computational hardness in sampling each optimized RF\@; moreover, it has failed to achieve the exponentially fast error-convergence speed that other state-of-the-art kernel methods can achieve under a low-noise condition.
To overcome these slowdowns, we here construct a classification algorithm with optimized RFs accelerated by means of quantum machine learning (QML) and study its runtime to clarify overall advantage.
We prove that our algorithm can achieve the exponential error convergence under the low-noise condition even with optimized RFs\@; at the same time, our algorithm can exploit the advantage of the significant reduction of the number of features without the computational hardness owing to QML\@.
These results discover a promising application of QML to acceleration of the leading kernel-based classification algorithm without ruining its wide applicability and the exponential error-convergence speed.
\end{abstract}

\begin{keywords}
classification, optimized random features, kernel methods, quantum machine learning, exponential error convergence
\end{keywords}

\section{\label{sec:intro}Introduction}

\textit{Background}. ---
Classification deals with obtaining a classifier function $f$ to label input data as correctly as possible, from given $N$ examples of input data and their labels.
Kernel methods are a class of widely used methods for solving such a task, owing to their theoretical guarantee and good accuracy~\citep{S5,W2,Steinwart2008}.
However, typical kernel methods, which compute an $N\times N$ Gram matrix for $N$ examples, are not scalable as $N$ gets large.
To scale up kernel methods to big data, random features (RFs) proposed by~\citet{R2,R3} are one of the most highly appreciated techniques of central use in practice, along with other techniques based on low-rank matrix approximation~\citep{10.5555/645529.657980,10.5555/3008751.3008847,10.5555/944790.944812}.
Algorithms with random Fourier features use the fact that any translation-invariant kernel $k:\cX\times\cX\to\mathbb{R}$ on $D$-dimensional input data space $\cX=\mathbb{R}^D$, e.g., a Gaussian kernel, can be represented as an expectation of feature maps $\varphi(v,x)\coloneqq\mathrm{e}^{-2\pi\mathrm{i}v\cdot x}$, i.e.,
\begin{equation}
  k(x,x^\prime)=\int_\mathcal{V}d\tau(v)\overline{\varphi(v,x)}\varphi(v,x^\prime),
\end{equation}
where $\varphi:\mathcal{V}\times\mathcal{X}\to\mathbb{C}$, and $\mathcal{V}=\mathbb{R}^D$ is a parameter space equipped with a probability measure $d\tau(v)$ given by the Fourier transform of $k$; e.g., $d\tau(v)$ of the Gaussian kernel is a Gaussian distribution~\citep{R2}.
With RFs, we represent the non-linear function $f:\mathcal{X}\to\mathbb{R}$ as a linear combination of non-linear feature maps, i.e.,  $\cos$ and $\sin$,
\begin{align}
  \label{eq:estimate}
  &f(x)\approx \hat{f}_{v,\alpha}(x)\coloneqq\sum_{m=0}^{M-1}(\alpha_{2m}\cos(-2\pi v_m\cdot x)+\alpha_{2m+1}\sin(-2\pi v_m\cdot x)).
\end{align}
Learning with RFs is achieved by sampling many feature maps at random parameters $v_0,\ldots,v_{M-1}\in\mathcal{V}$, followed by finding appropriate coefficients $\alpha_0,\ldots,\alpha_{2M-1}\in\mathbb{R}$ by convex optimization using the $N$ examples.
Conventionally, $v_0,\ldots,v_{M-1}$ are sampled from a \textit{data-independent} probability distribution $d\tau(v)$ depending only on the kernel $k$, but this may \textit{require a large number of features} $M=\widetilde{O}(\nicefrac{1}{\epsilon^2})$ for approximating $f$ to accuracy $\epsilon$~\citep{R2,R3}, where $\widetilde{O}$ ignores poly-logarithmic factors.
The requirement of large $M$ \textit{slows down} the decision of all $M$ features, the regression over the $2M$ coefficients, and the evaluation of the learned function $\hat{f}_{v,\alpha}(x)$ in using it after the learning.
Acceleration of kernel methods with RFs is not a specific problem but of central importance for their various applications including the classification.

To achieve such acceleration,~\citet{B1} has proposed to sample $v_0,\ldots,v_{M-1}$ from a \textit{data-optimized} probability distribution that puts greater weight on important features optimized for the data.
The optimized distribution is given by a weighted distribution $q_\lambda^\ast(v)d\tau(v)$, where $q_\lambda^\ast(v)$ will be defined later in~\eqref{eq:q_complex}.
The function $q_\lambda^\ast(v)$ is called the leverage score.
We call features sampled from this data-optimized distribution \textit{optimized random features}.
The use of $q_\lambda^\ast(v)d\tau(v)$ can significantly reduce the required number $M$ of RFs, which is provably optimal up to a logarithmic gap;
e.g., for the Gaussian kernel and a sub-Gaussian data distribution, the required number of optimized RFs for accuracy $\epsilon$ is as small as $M=O(\polylog(\nicefrac{1}{\epsilon}))$~\citep{B1}, which is exponentially smaller than $M=\widetilde{O}(\nicefrac{1}{\epsilon^2})$ of conventional RFs~\citep{R2,R3}.
For regression tasks,~\citet{B1,NIPS2017_6914} had clarified the generalization property of optimized RFs, i.e., the required number $N$ of examples and $M$ of RFs;
based on these works,~\citet{NEURIPS2020_9ddb9dd5} has studied the advantage of the optimized RFs in reducing the overall runtime for the regression.
However, for the classification, the runtime advantage of the optimized RFs has been unknown.

\textit{Problem and results}. ---
In this work, we propose a total design of the setting and the algorithm to use the optimized RFs for a classification task, so as to prove significant runtime advantage of the optimized RFs in the classification.
Progressing beyond regression,~\citet{NIPS2018_7598} has analyzed the generalization property of optimized RFs for classification under a low-noise condition, which is characterized by a constant $\delta>0$ in such a way that larger $\delta$ implies lower noise (as shown later in~\eqref{eq:low_noise}).
Following~\citet{NIPS2018_7598}, we here study a classification task under the low-noise condition with the Gaussian kernel.
Given a desired error $\epsilon>0$,
the task aims to obtain a classifier in the form of $\hat{f}_{v,\alpha}$ in~\eqref{eq:estimate} that reduces the excess classification error to $\epsilon$ compared with the optimal one, as formulated later in~\eqref{eq:excess_testing_error}.
For data sets given according to sub-Gaussian data distributions,~\citet{NIPS2018_7598} has shown that optimized RFs lead to the exponential reduction of the required number $M$ of RFs, similarly to the cases of regression.

However, problematically, the reduction of the number $M$ of RFs does not necessarily imply the speedup in overall runtime since there can be other computational bottlenecks in the algorithm.
Indeed, optimized RFs have been hard to compute in practice, due to the exponential runtime $O(\exp(D))$ in sampling each optimized RF from $q_\lambda^\ast(v)d\tau(v)$ by the classical algorithms~\citep{B1,NIPS2018_7598,Shahrampour2019}; regarding this hardness,~\citet{NIPS2018_7598} does not provide solution to achieve feasible overall runtime.
A possible solution to this bottleneck has recently arisen in the emerging field of \textit{quantum machine learning} (QML), i.e., machine learning with assistance of quantum computer for acceleration~\citep{biamonte2017quantum,doi:10.1098/rspa.2017.0551,dunjko2018machine}.
In particular,~\citet{NEURIPS2020_9ddb9dd5} has established a quantum algorithm, run by quantum computer, to achieve the sampling from $q_\lambda^\ast(v)d\tau(v)$ feasibly in linear runtime $O(D)$, achieving an exponential speedup in $D$ compared to the classical algorithms.
But this quantum algorithm is designed for the regression, and how we can use it for the classification without canceling out the speedup has been unknown.
In addition, for $N$ examples,~\citet{NIPS2018_7598} only showed an error bound scaling polynomially, i.e., $\epsilon=O(\poly(\nicefrac{1}{N}))$ in $N$; after all, the optimized RFs may improve the scaling in the number $M$ of RFs rather than $N$.
This bound in $N$ is exponentially worse than the other state-of-the-art kernel methods for the same classification task under the low-noise condition, which may not use optimized RFs to reduce $M$ but can achieve the exponentially better bound $\epsilon=O(\exp(-N))$ in $N$~\citep{pmlr-v75-pillaud-vivien18a,pmlr-v89-nitanda19a,Yashima2019}.

To address this problem, we here develop a significantly fast algorithm for the classification task using the optimized RFs (Algorithm~\ref{alg:classification}) and quantitatively bound its runtime to prove the runtime advantage of this algorithm without degradation in both $N$ and $M$.
To show the fast runtime, we clarify the appropriate setting for sampling optimized RFs based on the quantum algorithm of~\citet{NEURIPS2020_9ddb9dd5}, and after the sampling, we determine the coefficients of the classifier as efficiently as possible using stochastic gradient descent (SGD)~\citep{H3}.
Our results are summarized as follows, where we ignore negligible polynomial factors of arbitrarily small degrees for simplicity.
\begin{enumerate}
  \item (Eq.~\eqref{eq:bound_lambda_thm}) We show that, with our algorithm, the excess classification error $\epsilon$ \textit{converges at an exponential speed}, i.e., $\epsilon=O(\exp(-N))$ in terms of the number $N$ of examples. This bound exponentially improves the previously known bound $\epsilon=O(\poly(\nicefrac{1}{N}))$ by~\citet{NIPS2018_7598} in cases of using optimized RFs for the same classification task under the low-noise condition.

  \item (Eq.~\eqref{eq:bound_lambda_thm}) We show that our algorithm only requires the number of optimized RFs scaling as $M=O(d(\delta^2)\log(\nicefrac{d(\delta^2)}{\epsilon}))$ for achieving the excess classification error $\epsilon$, where $\delta$ is the constant characterizing the low-noise condition. Here, $d(\cdot)$ is called the \textit{degree of freedom}, defined later in~\eqref{eq:degree_of_freedom}, and shows the \textit{minimum} required number of RFs~\citep{B1}, which our bound achieves up to the logarithmic gap. Significantly, $\epsilon$ \textit{converges at an exponential speed} also in $M$, i.e., $\epsilon=O(\exp(-M))$, achieving as fast scaling in $M$ as that in $N$.

  \item (Eq.~\eqref{eq:T_all}) As a whole, we prove that the runtime of our algorithm is $\widetilde{O}(MD\log(\nicefrac{1}{\epsilon}))$, which is poly-logarithmic in the error $\nicefrac{1}{\epsilon}$ and linear in data dimension $D$. Since we use optimized RFs, the number $M=O(d(\delta^2)\log(\nicefrac{d(\delta^2)}{\epsilon}))$ of RFs can also be significantly smaller than the conventional RFs\@. Thus, our analysis finds a promising application of QML, making it possible to establish a significantly fast algorithm for classification with optimized RFs.
\end{enumerate}

To prove these results, we develop new theoretical techniques that are crucial for the speedup, progressing beyond the previous works by~\citet{NIPS2018_7598,NEURIPS2020_9ddb9dd5,pmlr-v75-pillaud-vivien18a,pmlr-v89-nitanda19a,Yashima2019}.
We will summarize these techniques in Sec.~\ref{sec:runtime}.
Although~\citet{NEURIPS2020_9ddb9dd5} solves the regression problem by the quantum algorithm, how to achieve classification using the quantum algorithm without ruining applicability and the convergence speed of the existing state-of-the-art classification algorithms is far from trivial since the classification is a different task.
After all, the runtime of the regression obtained by~\citet{NEURIPS2020_9ddb9dd5} was
$\tilde{O}(\nicefrac{MD}{\epsilon^2})$ (to accuracy $\epsilon$), but our algorithm achieves the classification task in a different runtime $\tilde{O}(MD\log(\nicefrac{1}{\epsilon}))$
(to the excess classification error $\epsilon$).
The existing analysis by~\citet{NIPS2018_7598} of the same classification task as ours was insufficient to prove this runtime $\tilde{O}(MD\log(\nicefrac{1}{\epsilon}))$ with the optimized RFs; that is, a mere use of the quantum algorithm of~\citet{NEURIPS2020_9ddb9dd5} combined with the analysis by~\citet{NIPS2018_7598} would ruin the exponential convergence speed in the classification under the low-noise condition.
By contrast, our techniques improve the results of~\citet{NIPS2018_7598} exponentially in $\epsilon$, achieving as fast convergence speed as the state-of-the-art algorithms by~\citet{pmlr-v75-pillaud-vivien18a,pmlr-v89-nitanda19a,Yashima2019} in $N$ while significantly saving the required number $M$ of RFs for scalability.
We also remark that~\citet{Yashima2019} has analyzed the use of conventional RFs for the classification under the low-noise condition, but a straightforward application of their technique to our setting using optimized RFs would only lead to the runtime $\tilde{O}(\poly(\nicefrac{1}{\epsilon}))$ in $\epsilon$, ruining our exponentially fast runtime $\tilde{O}(\log(\nicefrac{1}{\epsilon}))$ in $\epsilon$; as we will discuss in Sec.~\ref{sec:runtime}, the techniques developed here are crucial for avoiding this slowdown in bounding the runtime.

\textit{Impact in the field of QML}. ---
The novelty of our results is to accelerate classification, a common task in machine learning, by taking advantage of the exponential speedup in the QML algorithm of~\citet{NEURIPS2020_9ddb9dd5}; significantly, we achieve this acceleration without ruining broad applicability and exponential error convergence of the leading kernel-based algorithms for classification under the low-noise condition by~\citet{pmlr-v75-pillaud-vivien18a,pmlr-v89-nitanda19a,Yashima2019}.
Conventionally, QML algorithms for classification such as quantum support vector machine~\citep{PhysRevLett.113.130503} may achieve large speedups compared to classical algorithms only under restrictive assumptions that matrices involved in the algorithms are sparse or have low rank.
More recent ``quantum-inspired'' classical algorithms for classification~\citep{chen2019,arXiv:1910.06151} also require low rank.
However, the sparsity and low-rankness assumptions limit the power and applicability of the QML algorithms; that is, to take advantage of the large speedups of QML, careful justifications of the sparsity and low-rankness assumptions have been needed~\citep{aaronson2015read}.
By contrast, our algorithm, based on the quantum algorithm of~\citet{NEURIPS2020_9ddb9dd5}, does not require the sparsity or low-rankness assumptions yet can benefit from the exponential quantum speedup in sampling the optimized RFs, leading to the significant reduction of the required number $M$ of RFs as discussed above.

We stress that, despite the efforts to exploit QML for accelerating kernel methods~\citep{Mengoni2019}, it was challenging to use the exponential speedup without ruining applicability due to the sparsity and low-rankness assumptions, except for the regression achieved by~\citet{NEURIPS2020_9ddb9dd5}.
For example, a quantum algorithm for classification by~\citet{pmlr-v97-li19b} may not need the sparsity and low-rankness assumptions, but the speedup compared to classical algorithms is not exponential but polynomial; that is, the classical computation can simulate it up to polynomial-time overhead.
Heuristic QML algorithms for noisy quantum devices such as that of~\citet{havlivcek2019supervised} may not require the sparsity and low-rankness assumptions, but no proof bounds its runtime.
QML may have advantages in learning data obtained from quantum states~\citep{Sweke2021quantumversus,PhysRevLett.126.190505,doi:10.1126/science.abn7293,9719827}, but tasks in the field of machine learning conventionally deal with classical data rather than quantum states; problematically, it is unknown whether QML algorithms in the quantum settings of these works are applicable to acceleration in the common learning tasks for the conventional classical data, such as the classification of classical data.
For a classical data set constructed carefully so that its classification reduces to a variant of Shor's algorithm~\citep{10.1137/S0097539795293172}, QML may achieve the classification super-polynomially faster than classical algorithms~\citep{Yunchao2020}; however, the applicability of such QML to practical data sets has been unknown, unlike the broad applicability of kernel methods implemented by classical computation.

In contrast to these QML algorithms, our algorithm can indeed take advantage of the exponential quantum speedup meaningfully for the acceleration of the classification.
QML using exponential speedup is hard to simulate by classical computation and hard to perform even on near-term noisy small- or intermediate-scale quantum devices.
But we aim at classification at a large scale since we will eventually need large-scale machine learning in practice.
For this reason, we do not use numerical simulation but prove the advantage analytically.
Remarkably, we can obtain the advantage of our algorithm in reducing the number of RFs within the same model as conventional RFs; as a result, the advantage is expected to appear for a practical class of data sets, e.g., those given by a sub-Gaussian or spherical data distribution learned with the Gaussian kernel (see~\eqref{eq:advantage}).
Thus, our results are fundamental for establishing the widely applicable framework of QML that takes advantage of exponential quantum speedups to accelerate various learning tasks, in particular, not only regression but also classification;
moreover, even if large-scale quantum computation is yet to be realized under current technology, our theoretical results have an impact on further developments of quantum technology toward realizing QML, by providing solid motivation with theoretical guarantee.

\section{\label{sec:problem}Setting of Classification with Optimized Random Features}

We consider binary classification with $D$-dimensional input $x\in\cX=\mathbb{R}^D$ and its binary label $y\in\mathcal{Y}=\{-1,+1\}$.
Our learning algorithm takes an approach of semi-supervised learning, where we have many unlabeled input examples and much fewer labeled examples.
Suppose that $N$ pairs of labeled examples $(x_0,y_0),(x_1,y_1),\ldots,(x_{N-1},y_{N-1})\in\cX\times\mathcal{Y}$ are given according to observation of independent and identically distributed (IID) random variables $(X,Y)$ with a probability distribution $\rho(x,y)$.
The marginal probability distribution on $\cX$ for $\rho(x,y)$ is denoted by $\rho_\cX(x)$, and the support of $\rho_\cX$ by $\cX_\rho=\supp(d\rho_\cX)$.
In addition, $N_0$ unlabeled examples $x\in\cX$ ($N_0\gg N$) are given according to $\rho_\cX(x)$.
Given $x\in\cX$, the conditional distribution of $Y$ conditioned on $x$ is denoted by $\rho(y|x)$.

The goal of binary classification is to estimate a function $f:\cX\to\mathbb{R}$ whose sign $\pm$ indicates as correct labels of the input data as possible.
In particular, we want to learn a classifier minimizing the classification error
$\cR(f)\coloneqq\mathbb{E}\left[I(\sign(f(X)),Y)\right]$,
where $\sign(x)=\nicefrac{x}{|x|}$, and $I$ is the $0$-$1$ error function, i.e., $I(y,y^\prime)=0$ if $y=y^\prime$, and $I(y,y^\prime)=1$ if $y\neq y^\prime$.
Let $\cR^\ast$ denote the minimal achievable classification error given by $\cR^\ast\coloneqq\cR(\mathbb{E}[Y|X])$, and $f^\ast$ denote a minimizer of $\cR$, i.e., the optimal classifier, given by~\citep{pmlr-v75-pillaud-vivien18a}
\begin{equation}
  \label{eq:optimal_classifier}
  f^\ast(x)\coloneqq\mathbb{E}[Y|X=x]=\rho(1|x)-\rho(-1|x).
\end{equation}
Given a desired error $\epsilon>0$,
the classification task aims to obtain an estimate $\hat{f}$ of $f^\ast$ with the excess classification error $\cR(\hat{f})-\cR^\ast$ bounded by
\begin{equation}
  \label{eq:excess_testing_error}
  \mathbb{E}[\cR(\hat{f})-\cR^\ast]\leqq\epsilon,
\end{equation}
where the expectation is taken over multiple runs of the learning algorithm since our algorithm will be a randomized algorithm.

However, direct minimization of the classification error  $\cR$ may be intractable due to discontinuity and nonconvexity of $\cR$.
Thus, it is conventional to solve the problem of classification by minimizing a continuous and convex loss function rather than minimizing the classification error directly.
In particular, following~\citet{pmlr-v75-pillaud-vivien18a}, we will use a square loss function, or the loss for short, given by
\begin{equation}
  \label{eq:square_loss}
  \cL(f)\coloneqq\mathbb{E}\Big[{|Y-f(X)|}^2\Big].
\end{equation}

We will learn the classifier based on the kernel methods;
i.e.,  as a model of functions to be learned, we use the reproducing kernel Hilbert space (RKHS) $\mathcal{F}$ associated with the kernel $k$~\citep{S5,W2,Steinwart2008}.
Since our goal is to show a runtime advantage of the optimized RFs in a concrete setting, we here focus on using the Gaussian kernel, i.e., $k\big(x,x^\prime\big)\coloneqq\exp\big(-\gamma\|x-x^\prime\|_2^2\big)$ for $\gamma>0$.
In this case, the RKHS $\mathcal{F}$ of the Gaussian kernel is universal in $L^2(d\rho_\cX)$~\citep{Steinwart2008,JMLR:v7:micchelli06a,S4}; that is, any square-integrable function can be well approximated with our model $\mathcal{F}$.

Our assumptions are as follows.
\begin{enumerate}
  \item As in any implementation of kernel methods by computer, we represent real numbers by a finite number of bits and quantum bits (qubits) in terms of fixed-point or floating-point number representation with sufficiently high precision $\Delta$; by convention, our analysis ignores errors $O(\Delta)$ and poly-logarithmic overheads $O(\polylog(\nicefrac{1}{\Delta}))$ arising from this discretization.
  \item Due to the uniform law of large numbers, we can approximate the true data distribution $\rho_\mathcal{X}$ by the empirical distribution $\hat{\rho}_\mathcal{X}$ of the $N_0$ unlabeled examples with error $O(\nicefrac{1}{N_0^2})$, which we consider to be negligibly small by assuming $N_0\gg N$. In particular, by taking $N_0\approx\nicefrac{1}{\Delta^2}$, we assume $\|\rho_\mathcal{X}-\hat{\rho}_\mathcal{X}\|_1=O(\Delta)$, which is within the error of discretization that we ignore.
\item We assume that $\cX_\rho$ is bounded, and $\rho_\cX$ is uniformly bounded away from $0$ and $\infty$ on $\cX_\rho$.
\item Following the previous work by~\citet{NIPS2018_7598} studying the classification with optimized RFs, we assume a (strong) low-noise condition
\begin{equation}
  \label{eq:low_noise}
  \left|\mathbb{E}[Y|X]\right|>\delta,\quad\text{almost surely},
\end{equation}
where $\delta$ is a fixed parameter with $0<\delta\leqq 1$.
This condition is well studied in the analysis of classification tasks, also known as Massart's low-noise assumption~\citep{koltchinskii2011oracle}.
As the amount of noise gets low, $\delta$ becomes large.
\item We assume that optimal $f^\ast$ on $\cX_\rho$ is a function in our model $\mathcal{F}$, i.e., the RKHS of the Gaussian kernel.
Then, the minimizer of the loss~\eqref{eq:square_loss} in $\mathcal{F}$ is indeed achieved by $f^\ast\in\mathcal{F}$~\citep{pmlr-v75-pillaud-vivien18a}, which our algorithm will estimate by $\hat{f}_{v,\alpha}$ in~\eqref{eq:estimate}.
\end{enumerate}

\section{Main Results}

\begin{algorithm}[t]
  \caption{\label{alg:classification}Classification with optimized RFs.}
  \begin{algorithmic}[1]
    \REQUIRE{Parameter $\lambda$ in~\eqref{eq:bound_lambda_thm}, number $M$ of optimized RFs in~\eqref{eq:bound_lambda_thm}, even number $N\in 2\mathbb{N}$ of labeled examples in~\eqref{eq:bound_lambda_thm}, unlabeled input examples stored in the data structure described in the main text, step sizes $\eta_t$ in~\eqref{eq:eta}, parameter $q_{\min}$ in~\eqref{eq:regularized_loss_alpha}, parameter region $\mathcal{W}\subset\mathbb{R}^{2M}$.}
    \ENSURE{A classifier $\hat{f}_{v,\alpha}$ in the form of~\eqref{eq:estimate} with $v_0,\ldots,v_{M-1}\in\mathcal{V}=\mathbb{R}^D$ and $\alpha_0,\ldots,\alpha_{2M-1}\in\mathbb{R}$ that achieves~\eqref{eq:excess_testing_error}, i.e., the excess classification error bounded by $\epsilon$ in expectation, with exponential error convergence shown in Theorem~\ref{thm:generalization} and within a remarkably short runtime $T_\mathrm{all}$ in~\eqref{eq:T_all}.}
    \FOR{$m\in\left\{0,\ldots,M-1\right\}$}
    \STATE{Sample an optimized RF $v_m\in\mathcal{V}$ according to the optimized distribution $q_\lambda^\ast(v)d\tau(v)$ in~\eqref{eq:q_complex}.}
    \COMMENT{See \textit{Sampling optimized RFs} in Sec.~\ref{sec:algorithm} for detail.}
    \ENDFOR%
    \STATE{Initialize $\alpha^{(0)}\gets{(0,0,\ldots,0)}\in\mathcal{W}$.}
    \COMMENT{See \textit{SGD} in Sec.~\ref{sec:algorithm} for detail.}
    \FOR{$t\in\left\{0,\ldots,N-1\right\}$}
    \STATE{Calculate the prefactor $C(\alpha^{(t)})$ of the unbiased estimate $\hat{g}^{\left(t\right)}$ of the gradient of the regularized testing loss $\cL_\lambda$ in~\eqref{eq:regularized_loss_alpha} at $\alpha^{(t)}$ according to~\eqref{eq:unbiased_estimate}.}
    \STATE{Calculate $\hat{g}^{\left(t\right)}$ according to~\eqref{eq:unbiased_estimate} using $C(\alpha^{(t)})$.}
    \STATE{Set $\alpha^{\left(t+1\right)}\gets\Pi_\mathcal{W}(\alpha^{\left(t\right)}-\eta^{\left(t\right)} \hat{g}^{\left(t\right)})$.}
    \COMMENT{Using projection $\Pi_\mathcal{W}$ onto $\mathcal{W}$.}
    \ENDFOR%
    \STATE{Set ${(\alpha_0,\ldots,\alpha_{2M-1})}\gets(\nicefrac{2}{N})\sum_{t=\nicefrac{N}{2}+1}^{N}\alpha^{(t)}$.}
    \COMMENT{Suffix averaging.}
    \STATE{\textbf{Return } $v_0,\ldots,v_{M-1}\in\mathcal{V}$ and $\alpha_0,\ldots,\alpha_{2M-1}\in\mathbb{R}$.}
    \COMMENT{i.e., $\hat{f}_{v,\alpha}$ with these parameters.}
  \end{algorithmic}
\end{algorithm}

Our main contribution is to develop Algorithm~\ref{alg:classification} using optimized RFs and SGD to solve the classification problem formalized in Sec.~\ref{sec:problem}, and prove its remarkably short runtime.
In Sec.~\ref{sec:algorithm}, we describe the algorithm.
In Sec.~\ref{sec:runtime}, we analyze the generalization property and the runtime.

\subsection{\label{sec:algorithm}Algorithm}

In this subsection, we describe our algorithm for achieving the classification task formulated in Sec.~\ref{sec:problem}, which we show in Algorithm~\ref{alg:classification}.
In this algorithm, we sample optimized RFs based on a quantum algorithm of~\citet{NEURIPS2020_9ddb9dd5} and then perform regression using SGD to minimize a regularized version of the loss.

\textit{Sampling optimized RFs}. ---
Algorithm~\ref{alg:classification} starts with performing sampling of $M$ optimized RFs using a quantum algorithm of~\citet{NEURIPS2020_9ddb9dd5}.
The optimized distribution is given by~\citep{B1}
\begin{equation}
  \label{eq:q_complex}
  q_\lambda^\ast(v)d\tau(v)\propto\braket{\varphi(v,\cdot)|{(\Sigma+\lambda\mathbbm{1})}^{-1}\varphi(v,\cdot)}_{L_2(d\rho_\mathcal{X})}d\tau(v),
\end{equation}
where $\lambda>0$ is a hyperparameter used for regularization, and $\Sigma$ is the \textit{integral operator} $(\Sigma f)(x)=\int_\mathcal{X}d\rho_\mathcal{X}(x^\prime)k(x,x^\prime)f(x^\prime)$ on a space $L^2(d\rho_\mathcal{X})$ of functions, depending both on kernel $k$ and data distribution $\rho_\mathcal{X}$.
The condition on $\lambda$ will be clarified later in~\eqref{eq:bound_lambda_thm}.
By repeating the sampling from $q_\lambda^\ast(v)d\tau(v)$ using the quantum algorithm $M$ times, we obtain $M$ optimized RFs $v_0,\ldots,v_{M-1}$, where the condition on $M$ will be clarified later by~\eqref{eq:bound_lambda_thm}.

Although~\citet{NEURIPS2020_9ddb9dd5} may have developed core methods for this sampling,
the quantum algorithm of~\citet{NEURIPS2020_9ddb9dd5} has several parameters that need to be tuned depending on learning tasks to be solved;
recalling that it is usually hard to attain widely applicable speedup in QML as summarized in Sec.~\ref{sec:intro},
we need to prove that we can choose these parameters appropriately, and we can fulfill the requirement for running this quantum algorithm even in our setting of classification, so as to clarify that the sampling is feasible for our classification task.
The quantum algorithm of~\citet{NEURIPS2020_9ddb9dd5} uses a technique for decomposing representation of $q_\lambda^\ast(v)d\tau(v)$ on the right-hand side of~\eqref{eq:q_complex} into operators that quantum computation can efficiently deal with, and combines this decomposition with two techniques in quantum computation, quantum singular value transformation~\citep{G1} and quantum Fourier transform~\citep{C5,H2}, to achieve exponential speedup in sampling optimized RFs compared to the existing classical sampling algorithms by~\citet{B1,NIPS2018_7598,Shahrampour2019}.
As discussed in Sec.~\ref{sec:intro}, the significance of this quantum algorithm is to avoid restrictive assumptions on sparsity and low-rankness of the operators owing to this decomposition; as a result, the algorithm is widely applicable to representative choices of kernels, including the Gaussian kernel used here~\citep{NEURIPS2020_9ddb9dd5}.
The true data distribution $\rho_\mathcal{X}$ to compute $q_\lambda^\ast(v)$ may be unknown, but we here design our setting to be semi-supervised learning, so that we can use the empirical data distribution $\hat{\rho}_\mathcal{X}$ of unlabeled data as a good approximation of $\rho_\mathcal{X}$ without the cost of labeling all data.
The input model for the quantum algorithm is to prepare $\sum_x\sqrt{\hat{\rho}_\mathcal{X}(x)}\Ket{x}$, i.e., a quantum state that can produce a random bit string sampled from $\hat{\rho}_\mathcal{X}$ as a measurement outcome if measured in the standard basis.
For classical algorithms, sampling from $\hat{\rho}_\mathcal{X}$ can be realized easily in poly-logarithmic runtime $O(D\polylog(N_0))$ in $N_0$, by sampling $n_0\in\{1,\ldots,N_0\}$ from the uniform distribution over $\log_2 (N_0)$ bits and outputting $x_{n_0}$ out of $x_1,\ldots,x_{N_0}$ stored in random access memory (RAM).
As for the quantum algorithm, the preparation of $\sum_x\sqrt{\hat{\rho}_\mathcal{X}(x)}\Ket{x}$ with maintaining quantum superposition may be more technical.
But we show that this preparation is also implementable in runtime $O(D\polylog(N_0))$, by storing the $N_0$ unlabeled examples upon collecting them in a sparse binary tree data structure~\citep{kerenidis_et_al:LIPIcs:2017:8154} with quantum RAM (QRAM)~\citep{PhysRevA.78.052310,PhysRevLett.100.160501}.
See Appendix~F for details.
In our setting of $N_0\approx\nicefrac{1}{\Delta^2}$ with precision $\Delta$, the runtime per inputting $\sum_x\sqrt{\hat{\rho}_\mathcal{X}(x)}\Ket{x}$ is $O(D\polylog(N_0))=O(D\polylog(\nicefrac{1}{\Delta}))$, i.e., the linear scaling in $D$ up to the same overhead as that of discretization, making the quantum algorithm implementable in a feasible runtime.

\textit{SGD}. ---
After sampling $M$ optimized RFs according to $q_\lambda^\ast(v)d\tau(v)$,
Algorithm~\ref{alg:classification} proceeds to optimize coefficients $\alpha={(\alpha_0,\ldots,\alpha_{2M-1})}\in\mathbb{R}^{2M}$ to learn a classifier $\hat{f}_{v,\alpha}$ in~\eqref{eq:estimate}.
For this optimization, we use SGD with suffix averaging, minimizing a regularized version $\cL_\lambda$ of the loss $\cL(\hat{f}_{v,\alpha})$ in~\eqref{eq:square_loss}.
In particular, we write the loss for $\hat{f}_{v,\alpha}$ in~\eqref{eq:estimate} as
$\cL(\alpha)=\cL(\hat{f}_{v,\alpha})\coloneqq
\mathbb{E}\Big[\Big|Y-\sum_{m=0}^{M-1}\big(\alpha_{2m}\cos(-2\pi v_m\cdot X)+
\alpha_{2m+1}\sin(-2\pi v_m\cdot X)\big)\Big|^2\Big]$,
and define the regularized loss as
\begin{equation}
  \label{eq:regularized_loss_alpha}
  \cL_\lambda(\alpha)=\cL_\lambda(\hat{f}_{v,\alpha})\coloneqq\cL(\alpha)+\lambda Mq_{\min}\|\alpha\|_2^2,
\end{equation}
where $\lambda>0$ is the same parameter as that for $q_{\lambda}^\ast$ in~\eqref{eq:q_complex}, and $q_{\min}\coloneqq\min\{q_\lambda^\ast(v_m):m=0,\ldots,M-1\}$.
We use $q_{\min}$ as a hyperparameter to be chosen appropriately prior to the learning.
In the parameter region of sampling optimized RFs that are weighted by importance and that nearly minimize $M$, the minimal weight $q_{\min}$ of the sampled RFs can be considered to be a constant bounded away from $0$.
To guarantee this more explicitly, we can also use a bottom-raised weight in place of $q_\lambda^\ast(v_m)$, e.g., $\nicefrac{q_\lambda^\ast(v_m)}{2}+\nicefrac{1}{2}$;
in this case, each optimized RF $v_m$ is to be sampled with weight $q_\lambda^\ast(v_m)$ once in two samplings in expectation, and hence at most twice as many RFs as those with $q_\lambda^\ast(v_m)$ suffice to achieve the learning, while we can consider $q_{\min}$ to be $\nicefrac{1}{2}$, a constant.
The term $\lambda Mq_{\min}\|\alpha\|_2^2$ makes the regularized loss $\cL_\lambda$ $\mu$-strongly convex for a constant $\mu=\lambda M q_{\min}$, leading to accelerating SGD compared to minimizing $\cL$ without regularization.
The SGD looks for optimal $\alpha\in\mathcal{W}\subset\mathbb{R}^{2M}$ by minimizing $\mathcal{L}_\lambda(\alpha)$ within a parameter region $\mathcal{W}$,
which is chosen as a $2M$-dimensional ball centered at the origin and of a diameter shown later in~\eqref{eq:alpha}.

To guarantee the runtime of SGD theoretically, we clarify the setting of SGD as follows.
In the SGD, we start from an initial point $\alpha^{(0)}\in\mathcal{W}$;
following~\citet{H3}, we here choose $\alpha^{(0)}$ as the origin.
Using $N$ labeled examples, we iteratively update the points $\alpha^{(1)},\ldots,\alpha^{(N)}\in\mathcal{W}$ in total $N$ times, where for each $t\in\{0,\ldots,N\}$, we write
$\alpha^{(t)}=\big(\alpha^{(t)}_0,\ldots,\alpha^{(t)}_{2M-1}\big)\in\mathcal{W}$.
The required number $N$ of labeled examples will be clarified later in~\eqref{eq:bound_lambda_thm}.
We update the point from $\alpha^{(t)}$ to $\alpha^{(t+1)}$ in the $(t+1)$th iteration of SGD using an unbiased estimate $\hat{g}^{(t)}$ of the gradient of $\cL_\lambda$ at $\alpha^{(t)}$, i.e.,
$\mathbb{E}\big[\hat{g}^{(t)}\big]=\nabla\cL_\lambda(\alpha^{(t)})$.
To obtain $\hat{g}^{(t)}$,
we regard the sequence of the $N$ labeled examples $\left(x_0,y_0\right),\ldots,\left(x_{N-1},y_{N-1}\right)$ as a data stream that sequentially provides the examples as observation of IID random variables according to $\rho(x,y)$.
Using these examples, we construct $\hat{g}^{(t)}$ by
\begin{align}
  \label{eq:unbiased_estimate}
\hat{g}^{(t)}\coloneqq C\big(\alpha^{(t)}\big)\left(\begin{matrix}
      \cos(-2\pi v_0\cdot x_t)\\
      \sin(-2\pi v_0\cdot x_t)\\
      \vdots\\
      \cos(-2\pi v_{M-1}\cdot x_t)\\
      \sin(-2\pi v_{M-1}\cdot x_t)
      \end{matrix}\right)+2\lambda Mq_{\min}\left(\begin{matrix}
      \alpha^{(t)}_0\\
      \alpha^{(t)}_1\\
      \vdots\\
      \alpha^{(t)}_{2M-2}\\
      \alpha^{(t)}_{2M-1}
  \end{matrix}\right),
\end{align}
where $C\big(\alpha^{(t)}\big)$ is a prefactor shared among all the $2M$ elements of $\hat{g}^{(t)}$, i.e.,
$C\big(\alpha^{(t)}\big)\coloneqq 2\big(y_t-\sum_{m=0}^{M-1}(\alpha_{2m}^{(t)}\cos(-2\pi v_m\cdot x_t)+\alpha_{2m+1}^{(t)}\sin(-2\pi v_m\cdot x_t))\big)$.
To save runtime in calculating $\hat{g}^{(t)}$, Algorithm~\ref{alg:classification} calculates and stores the shared prefactor $C\big(\alpha^{(t)}\big)$ only once per iteration, and then calculates $\hat{g}^{(t)}$ by multiplying $C^{(t)}$ and $\cos(-2\pi v_0\cdot x_t),\ldots,\sin(-2\pi v_{M-1}\cdot x_t)$ to obtain each of the $2M$ elements on the right-hand side of~\eqref{eq:unbiased_estimate}.
With this $\hat{g}^{(t)}$, SGD would calculate $\alpha^{(t+1)}\in\mathcal{W}$ based on $\alpha^{(t)}-\eta^{(t)}\hat{g}^{(t)}$ using a hyperparameter  $\eta^{(t)}$ representing step size,
which we take as a decaying sequence used by~\citet{H3}, i.e.,
\begin{equation}
  \label{eq:eta}
  \eta^{(t)}=O\left(\nicefrac{1}{\mu t}\right).
\end{equation}
However, if we update $\alpha^{(t)}$ to $\alpha^{(t+1)}$ in this way, $\alpha^{(t+1)}$ may be out of $\mathcal{W}$, which may slow down the SGD potentially.
To avoid this slowdown, the update for each iteration of SGD here uses a projection onto $\mathcal{W}$; that is, $\alpha^{(t+1)}$ is obtained from $\alpha^{(t)}$ by
$\alpha^{(t+1)}=\Pi_\mathcal{W}\left(\alpha^{(t)}-\eta^{(t)}\hat{g}^{(t)}\right)\in\mathcal{W}$,
where $\Pi_\mathcal{W}$ denotes the projection onto $\mathcal{W}$.
Then, the SGD provably converges to the minimizer of the regularized loss $\cL_\lambda$~\citep{H3}.
All the parameters of Algorithm~\ref{alg:classification} are chosen carefully so that this minimizer achieves our classification task; that is, $\hat{f}_{v,\alpha}$ obtained from Algorithm~\ref{alg:classification} achieves~\eqref{eq:excess_testing_error}.

\subsection{\label{sec:runtime}Generalization Property and Runtime}

In this subsection, we show the generalization property of optimized RFs in Algorithm~\ref{alg:classification} for the classification task formulated in Sec.~\ref{sec:problem}, and clarify the advantage of Algorithm~\ref{alg:classification} in terms of runtime.
The minimal required number $M$ of RFs, which is achievable by the optimized RFs, is characterized by the degree of freedom $d(\lambda)$ for an appropriate parameter $\lambda$ in such a way that~\citep{B1}
\begin{equation}
\label{eq:degree_of_freedom}
M=\widetilde{O}(d(\lambda)),\quad d(\lambda)\coloneqq\tr\Sigma{(\Sigma+\lambda\mathbbm{1})}^{-1},
\end{equation}
where $\widetilde{O}$ ignores poly-logarithmic factors, and $\Sigma$ is the integral operator in~\eqref{eq:q_complex}.
For our classification task,
we prove the following theorem on the required number of RFs and labeled examples.
To simplify the presentation, we ignore arbitrarily small-degree polynomial factors in our bounds;
see also the beginning of Appendices for the formal statement of the theorem without ignoring these factors.
Significantly, owing to the low-noise condition~\eqref{eq:low_noise}, we achieve exponentially fast convergence of the excess classification error in $N$ even with optimized RFs, i.e., $\epsilon=O(\exp(-N))$; moreover, we show exponential error convergence also in $M$, i.e., $\epsilon=O(\exp(-M))$, with the optimized RFs to achieve the minimum $M$.
The proof techniques that we develop will be summarized later, after discussing the advantage.

\begin{theorem}[\label{thm:generalization}Informal. Generalization property of optimized RFs in Algorithm~\ref{alg:classification}]
  There exist the parameter $\lambda$, the number $M$ of optimized RFs, and the number $N$ of labeled examples satisfying, up to ignoring polynomial factors of an arbitrarily small degree,
  \begin{align}
    \label{eq:bound_lambda_thm}
    \lambda&=O(\nicefrac{\delta^{2}}{\|f^\ast\|_\mathcal{F}^2}),\quad M=O(d(\lambda)\log(\nicefrac{d(\lambda)}{\epsilon})),\quad N=O\Big(\log(\nicefrac{1}{\epsilon})(\nicefrac{\|f^\ast\|_\mathcal{F}^4}{(\delta^{4} q_{\min}^2)})\Big),
  \end{align}
  such that Algorithm~\ref{alg:classification} can return a classifier $\hat{f}_{v,\alpha}$ satisfying the $\epsilon$-small excess classification error~\eqref{eq:excess_testing_error},
  where $d(\lambda)$ is defined as~\eqref{eq:degree_of_freedom}, $\|f^\ast\|_\mathcal{F}$ is the RKHS norm of the optimal classifier $f^\ast\in\mathcal{F}$ in~\eqref{eq:optimal_classifier}.
\end{theorem}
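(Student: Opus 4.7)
My plan is to reduce the excess classification error to the probability of a uniform deviation between the SGD output $\hat{f}_{v,\alpha}$ and the Bayes classifier $f^\ast$, and then to control this deviation through two ingredients: Bach's approximation guarantee for optimized RFs and a \emph{high-probability} convergence bound for projected SGD with suffix averaging on the $\mu$-strongly convex loss $\cL_\lambda$. Crucially, the exponential rate in $N$ will not come from any exponential-rate SGD, but from the interplay between the strong low-noise condition and a sub-Gaussian-type tail for the SGD iterate.

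First, I would combine the strong low-noise condition $|f^\ast(X)|>\delta$ with the identity $\cR(\hat{f}_{v,\alpha})-\cR^\ast=\mathbb{E}_X[\mathbbm{1}_{\sign\hat{f}_{v,\alpha}(X)\neq\sign f^\ast(X)}\,|f^\ast(X)|]$: any sign flip forces $|\hat{f}_{v,\alpha}(X)-f^\ast(X)|\geqq\delta$, so the excess error is bounded by $\Pr[\|\hat{f}_{v,\alpha}-f^\ast\|_{L^\infty(\cX_\rho)}>\delta]$. Since the Gaussian kernel satisfies $k(x,x)=1$, the reproducing property yields $\|g\|_{L^\infty(\cX_\rho)}\leqq\|g\|_\mathcal{F}$ for $g\in\mathcal{F}$, reducing the task to an RKHS-norm tail $\Pr[\|\hat{f}_{v,\alpha}-f^\ast\|_\mathcal{F}>\delta]\leqq\epsilon$ over the joint randomness of RF sampling and SGD.

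I would then split $\hat{f}_{v,\alpha}-f^\ast=(\hat{f}_{v,\alpha}-\hat{f}_{v,\alpha^\ast})+(\hat{f}_{v,\alpha^\ast}-f^\ast)$, where $\alpha^\ast$ is the population minimizer of $\cL_\lambda$ over $\mathcal{W}$. The approximation term is handled by~\citet{B1}'s optimized-RF theorem at the stated $\lambda=O(\delta^2/\|f^\ast\|_\mathcal{F}^2)$ and $M=O(d(\lambda)\log(d(\lambda)/\epsilon))$, giving RKHS error below $\delta/2$ except on an $\epsilon$-probability failure set over RF sampling. The estimation term requires upgrading the in-expectation suffix-averaging bound of~\citet{H3} for $\cL_\lambda$ (which is $\mu$-strongly convex with $\mu=\lambda M q_{\min}$) to a high-probability bound of the form
\begin{equation*}
  \Pr\!\left[\|\alpha^{(\mathrm{out})}-\alpha^\ast\|_2^2>t\right]\leqq 2\exp\!\left(-c\,\mu N t/G^2\right),
\end{equation*}
obtained through a Freedman-type martingale concentration on the centered stochastic gradients, which are bounded by some $G$ thanks to the projection onto $\mathcal{W}$. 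Combining with the feature-expansion comparison $\|\hat{f}_{v,\alpha}-\hat{f}_{v,\alpha^\ast}\|_\mathcal{F}^2\leqq\|\alpha-\alpha^\ast\|_2^2/q_{\min}$, choosing $t=\Theta(\delta^2 q_{\min})$, and tracking the scalings $\mu=\lambda M q_{\min}$, $\lambda=O(\delta^2/\|f^\ast\|_\mathcal{F}^2)$, and $G^2=\widetilde{O}(\|f^\ast\|_\mathcal{F}^2/q_{\min})$ (arising from the diameter of $\mathcal{W}$ together with $|\cos|,|\sin|\leqq 1$), the condition $N\geqq C(G^2/(\mu\delta^2 q_{\min}))\log(1/\epsilon)$ collapses precisely to $N=O(\log(1/\epsilon)\,\|f^\ast\|_\mathcal{F}^4/(\delta^4 q_{\min}^2))$.

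The main obstacle will be establishing the high-probability exponential tail for the SGD iterate: the standard in-expectation bound $\mathbb{E}[\|\alpha^{(\mathrm{out})}-\alpha^\ast\|_2^2]=O(1/(\mu N))$ combined with Markov's inequality would yield only polynomial-in-$1/\epsilon$ sample complexity, which is exactly the slowdown that a naive adaptation of~\citet{NIPS2018_7598,Yashima2019} would suffer from. Overcoming it needs careful martingale concentration together with the deterministic gradient boundedness from projection onto $\mathcal{W}$ and the decaying step size $\eta^{(t)}=O(1/(\mu t))$, so that the concentration exponent scales like $\mu N$ rather than $\sqrt{\mu N}$. Once this concentration step is secured, the remaining pieces—the low-noise comparison, the optimized-RF approximation, and the bookkeeping of constants—are essentially routine.
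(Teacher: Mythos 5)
Your reduction of the excess classification error to $\Pr[\|\hat{f}_{v,\alpha}-f^\ast\|_{L^\infty(d\rho_\cX)}>\delta]$ via the low-noise condition matches the paper's Proposition~A.1, and your treatment of the SGD estimation term — a high-probability, Freedman-type bound for suffix-averaged projected SGD on the $\mu$-strongly convex $\cL_\lambda$ with $\mu=\lambda Mq_{\min}$, yielding the $\log(\nicefrac{1}{\epsilon})$ factor in $N$ — is essentially the paper's use of the bound of~\citet{H3} in Appendix~D. However, there is a genuine gap in how you handle the approximation term. You claim that the reproducing property gives $\|g\|_{L^\infty}\leqq\|g\|_\mathcal{F}$ and that Bach's theorem delivers ``RKHS error below $\nicefrac{\delta}{2}$'' for $\hat{f}_{v,\alpha^\ast}-f^\ast$. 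Neither step is available: the approximant lives in $\mathcal{F}_M$, which is not contained in $\mathcal{F}$, so $\|\hat{f}_{v,\alpha^\ast}-f^\ast\|_\mathcal{F}$ is not even defined; and the guarantee of~\citet{B1} controls only the $L^2(d\rho_\cX)$ distance, $\min_{\hat f}\|\hat f-f^\ast\|_{L^2(d\rho_\cX)}^2\leqq 4\lambda\|f^\ast\|_\mathcal{F}^2$, not any RKHS distance. Worse, the relevant RKHS-type norm of the difference (in the sum space $\mathcal{F}_M^+$ associated with $k+k_M$) is \emph{not} small — the paper bounds it only by $O(\nicefrac{\|f^\ast\|_\mathcal{F}}{\lambda\sqrt{q_{\min}}})$ — so any route that dominates the $L^\infty$ norm by a first power of an RKHS norm cannot produce the required $<\delta$ bound with the stated $\lambda$ and $M$.

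The missing ingredient is precisely the paper's Appendix~C: an interpolation inequality $\|f\|_{L^\infty(d\rho)}=O\big((1+\sqrt{\nicefrac{\log(\nicefrac{1}{\epsilon})}{M}})^{\nicefrac{p}{2}}\|f\|_{\mathcal{F}_M^+}^{p}\|f\|_{L^2(d\rho)}^{1-p}\big)$ for arbitrarily small $p$, proved by embedding $\mathcal{F}_M^+$ into a Sobolev space and controlling the random Sobolev constant by Hoeffding's inequality for sub-Gaussian variables. This lets the large RKHS norm enter only to the vanishing power $p$ while the small $L^2$ approximation error enters to power $1-p$, and — crucially — it keeps the dependence on $\nicefrac{1}{\epsilon}$ poly-logarithmic, which is exactly what preserves the exponential convergence (a Markov-inequality version of this embedding, as in~\citet{Yashima2019}, would reintroduce a polynomial $\nicefrac{1}{\epsilon}$ factor). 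Without this $L^2\!\to\!L^\infty$ translation, your argument cannot close: the chain from Bach's $L^2$ guarantee to the $L^\infty<\delta$ condition required by the low-noise reduction is broken. You would also need the companion bounds the paper proves to feed into this interpolation, namely the coefficient bound $\|\alpha\|_2\leqq 2\sqrt{2}\|f^\ast\|_\mathcal{F}/\sqrt{Mq_{\min}}$ (stated in terms of the hyperparameter $q_{\min}$ rather than the unknown leverage scores) and the resulting bound on $\|\hat f_{v,\alpha}\|_{\mathcal{F}_M}$, which is where the small-degree polynomial corrections to $\lambda$ and $N$ in the formal theorem come from.
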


\textit{Advantage in terms of runtime}. ---
The novelty of our classification algorithm, Algorithm~\ref{alg:classification}, is to use the optimized RFs sampled by the quantum algorithm within a feasible runtime, which makes it possible to achieve the classification within a remarkably short runtime with the minimal number $M$ of RFs.
For representative choices of kernels such as the Gaussian kernel, the quantum algorithm can sample optimized RFs according to $q_\lambda^\ast(v)d\tau(v)$ within runtime per sampling as fast as $T_\mathrm{sampling}=\widetilde{O}\left(\nicefrac{D}{\lambda}\right)$~\citep{NEURIPS2020_9ddb9dd5}.
We also bound the runtime per iteration of SGD in Algorithm~\ref{alg:classification} as follows.
The prefactor $C^{(t)}$ in~\eqref{eq:unbiased_estimate} consists of the sum of $2M$ terms with each term including an inner product of $D$-dimensional vectors, requiring $O(MD)$ runtime to calculate.
Given $C^{(t)}$, the unbiased estimate $\hat{g}^{(t)}$ in~\eqref{eq:unbiased_estimate} of the gradient consists of $2M$ elements with each element including an inner product of $D$-dimensional vectors, requiring $O(MD)$ runtime to calculate.
Given $\hat{g}^{(t)}$, the update from $\alpha^{(t)}$ to $\alpha^{(t+1)}$ is arithmetics of $2M$-dimensional vectors, requiring $O(M)$ runtime.
Thus, the runtime per iteration is $T_\mathrm{iterate}=O(MD)$.
Consequently, Theorem~\ref{thm:generalization} shows that the overall runtime $T_\mathrm{all}$ of Algorithm~\ref{alg:classification} is, in terms of $M,D,\epsilon$,
\vspace{-0.1cm}
\begin{align}
  \label{eq:T_all}
  T_\mathrm{all}&=MT_\mathrm{sampling}+NT_\mathrm{iterate}=\widetilde{O}(\nicefrac{MD}{\lambda}+NMD)=\widetilde{O}(MD\log(\nicefrac{1}{\epsilon})),
\end{align}
where $M=O(d(\lambda)\log(\nicefrac{d(\lambda)}{\epsilon}))$;
hence, $T_\mathrm{all}$ is as fast as poly-logarithmic in $\nicefrac{1}{\epsilon}$ and linear in $D$.

Remarkably, the use of optimized RFs sampled from $q_\lambda^\ast(v)d\tau(v)$ can significantly reduce the required number $M$ of RFs in~\eqref{eq:T_all}.
How large $M=\widetilde{O}(d(\lambda))$ should be is determined by the decay of eigenvalues ${(\mu_i)}_{i\in\mathbb{N}}$ of the integral operator $\Sigma$ in~\eqref{eq:q_complex}~\citep{B1}.
In the worst case $\mu_i=O(\nicefrac{1}{i})$, we would need as large as $M=\widetilde{O}(d(\lambda))=\widetilde{O}(\nicefrac{1}{\lambda})=\widetilde{O}(\nicefrac{1}{\delta^2})$ in terms of $\delta$ for the low-noise condition~\eqref{eq:low_noise}~\citep{B1}, which is the same scaling as conventional RFs for approximating the square loss function~\citep{R2,R3,NIPS2017_6914}.
\citet{Yashima2019} has also studied an algorithm with the conventional RFs and SGD for the classification task under the low-noise condition, but the required number of RFs has been as large as $M=O(\poly(\nicefrac{1}{\delta}))$ in the same way.
By contrast, for the Gaussian kernel and a sub-Gaussian data distribution, the decay is $\mu_i=O(\exp(-i^{\nicefrac{1}{D}}))$~\citep{NIPS2018_7598,NIPS2007_4ca82782}, and we indeed have an exponential advantage
$M=\widetilde{O}(d(\lambda))=\widetilde{O}(\log^D(\nicefrac{1}{\lambda}))=\widetilde{O}(\log^D(\nicefrac{1}{\delta^2}))$ in $\delta$.
Moreover, for the Gaussian kernel and a uniform data distribution supported on a sphere $\cX_\rho=S^{D-1}\subset\cX=\mathbb{R}^D$,
the decay is $\mu_i=O(\exp(-i) {(\nicefrac{1}{i})}^{i+(\nicefrac{(D-1)}{2})})=O(\exp(-i))$~\citep{AZEVEDO201457,10.1007/11776420_14}, and we have an exponential advantage in $\delta$ that is independent of $D$, i.e.,
\vspace{-0.1cm}
\begin{equation}
  \label{eq:advantage}
  M=\widetilde{O}(d(\lambda))=\widetilde{O}(\log(\nicefrac{1}{\lambda}))=\widetilde{O}(\log(\nicefrac{1}{\delta})).
\end{equation}
Thus, our results with the optimized RFs can be provably advantageous: we can exponentially reduce the required number $M$ of RFs in $\delta$ compared to the conventional RFs, yet without canceling out the exponential error convergence in $N$ shown by~\citet{pmlr-v75-pillaud-vivien18a,pmlr-v89-nitanda19a,Yashima2019}.
We remark that the runtime $T_\mathrm{all}$ in~\eqref{eq:T_all} may still require a polynomial time in $\nicefrac{1}{\lambda}$ and hence in $\nicefrac{1}{\delta}$, but significantly,  once the classifier $\hat{f}_{v,\alpha}$ is learned by Algorithm~\ref{alg:classification}, the required runtime for each evaluation of $\hat{f}_{v,\alpha}$ is as fast as $O(MD)=\widetilde{O}(D\log(\nicefrac{1}{\delta}))$ in the case of~\eqref{eq:advantage}, i.e., exponentially faster in $\delta$ compared to that of conventional RFs with runtime $O(MD)=\widetilde{O}(\nicefrac{D}{\delta^2})$.
This runtime advantage in using the learned classifier $\hat{f}_{v,\alpha}$ is considerable, especially for applications that require real-time computing, e.g., an embedded system, robotics, feedback control in physical experiments, and machine-learning-assisted quantum error correction.

Finally, we remark that, to take this advantage, it is essential to minimize $M$ by sampling from $q_\lambda^\ast(v)d\tau(v)$.
A difficulty in sampling from $q_\lambda^\ast(v)d\tau(v)$ is that $\Sigma$ in~\eqref{eq:q_complex} is infinite-dimensional.
Even if we discretize $\Sigma$ as done by~\citet{NEURIPS2020_9ddb9dd5}, $\Sigma$ becomes an $O(\exp(D))$-dimensional operator, and the matrix inversion in calculating~\eqref{eq:q_complex} would require $O(\exp(D))$ runtime as long as we use existing classical algorithms.
Another classical algorithm by~\citet{B1} estimates the value of  $q_\lambda^\ast(v)$ based on conventional RFs; however,~\citet{B1} argues its hardness in practice, and its runtime in $D$ is unknown in general.
More problematically, even if we could manage to estimate $q_\lambda^\ast(v)$, the sampling would still be computationally hard due to high dimension.
After all, heuristic sampling algorithms such as Markov chain Monte Carlo methods do not provide theoretical guarantee, and methods with theoretical guarantee based on rejection sampling may require $O(\exp(D))$ runtime per sampling in the worst case since $q_\lambda^\ast(v)$ can be exponentially small in $D$.
Note that~\citet{pmlr-v70-avron17a,Liu2019,NEURIPS2020_012d9fe1,pmlr-v97-li19k} also propose to sample RFs from weighted distribution similar to $q_\lambda^\ast(v)$ in polynomial time by classical algorithms;
however, as discussed by~\citet{NEURIPS2020_9ddb9dd5}, sampling from these similar distributions does not necessarily minimize $M$, i.e., does not lead to our theoretically guaranteed advantage~\eqref{eq:advantage}, since the approximations are heuristic unlike that of~\citet{B1,NIPS2018_7598,Shahrampour2019,NEURIPS2020_9ddb9dd5}.
Similarly, an importance-weighted distribution may also be used in low-rank matrix approximation, i.e., column sampling, but algorithms in the setting of the column sampling~\citep{pmlr-v30-Bach13,NIPS2015_5716,NIPS2018_7810} do not apply to RFs~\citep{B1}.
Quasi-Monte Carlo techniques~\citep{10.5555/2946645.3007073,10.5555/3172077.3172095} can also improve $M$, but it is unknown whether they can achieve minimal $M$.
By contrast, our algorithm minimizes $M$ in feasible runtime.

\textit{Proof techniques}. ---
To prove Theorem~\ref{thm:generalization}, we develop the following techniques.
See Appendices for details.

\begin{itemize}
  \item A technical difficulty in analyzing RFs is that, although the optimal classifier $f^\ast\in\mathcal{F}$ is in the RKHS $\mathcal{F}$ associated with the kernel $k$, the classifier $\hat{f}_{v,\alpha}\in\mathcal{F}_M$ obtained from Algorithm~\ref{alg:classification} is in a different RKHS $\mathcal{F}_M$ associated with an approximated kernel $k_M(x,x^\prime)=\sum_{m=0}^{M-1}(\nicefrac{1}{(Mq_\lambda^\ast(v_m))})\overline{\varphi(v_m,x)}\varphi(v_m,x^\prime)$ with RFs $\varphi(v,x)=\mathrm{e}^{-2\pi\mathrm{i}v\cdot x}$~\citep{B1}.
    Earlier studies of exponential error convergence $\epsilon=O(\exp(-N))$ in the kernel-based classification under the low-noise condition by~\citet{pmlr-v75-pillaud-vivien18a,pmlr-v89-nitanda19a} were based on bounding the RKHS norm in $\mathcal{F}$, but these studies do not apply to the optimized RFs since $\hat{f}_{v,\alpha}$ may not be in $\mathcal{F}$.
    Instead, we use the $L^\infty$ norm. In particular, we show under the low-noise condition~\eqref{eq:low_noise} that, if we have with high probability greater than $1-\epsilon$
    \begin{equation}
      \label{eq:distance}
      \|\hat{f}_{v,\alpha}-f^\ast\|_{L^\infty(d\rho_\cX)}<\delta,
    \end{equation}
    then we achieve the goal of our task, i.e.,~\eqref{eq:excess_testing_error}.
    See Appendix~A\@.
    Note that~\citet{Yashima2019} may have also used the $L^\infty$ norm for analyzing the use of conventional RFs for classification,
    but our analysis for showing the advantage of optimized RFs requires different techniques from~\citet{Yashima2019}, as discussed below.

  \item To bound the distance between $\hat{f}_{v,\alpha}$ and $f^\ast$ in~\eqref{eq:distance}, we need to clarify the required number $M$ of optimized RFs for approximating $f^\ast$ by $\hat{f}_{v,\alpha}$, but at the same time, we need to control $\|\alpha\|_2^2$ of the regularized loss $\cL_\lambda$ in~\eqref{eq:regularized_loss_alpha}. As for $M$,~\citet{B1} has shown that for any $f^\ast\in\mathcal{F}$, with $M$ satisfying $M=\widetilde{O}(d(\lambda))$, we can approximate $f^\ast$ by $\hat{f}_{v,\alpha}\in\mathcal{F}_M$ within $O(\lambda)$ error in the $L^2$ norm, i.e., $\min_{\hat{f}_{v,\alpha}}\|\hat{f}_{v,\alpha}-f\|_{L^2(d\rho_\cX)}^2=O(\lambda)$. But in the work of~\citet{B1}, problematically, $\alpha$ was given depending on values of $q_\lambda^\ast(v_0),\ldots,q_\lambda^\ast(v_{M-1})$, and the regularization for $\alpha$ also needed to use all these values.
    This dependency makes the regularization infeasible; after all, each sampling of a single RF $v$ from $q_\lambda^\ast(v)d\tau(v)$ by itself does not provide the value of $q_\lambda^\ast(v)$.
    In Algorithm~\ref{alg:classification}, we only sample from $q_\lambda^\ast(v)d\tau(v)$ without estimating the value of $q_\lambda^\ast(v)$ since such estimation would cancel out the speedup.
    To address this problem, we here prove that this approximation within $O(\lambda)$ error in the $L^2$ norm is possible with $\alpha$ satisfying
    \begin{equation}
      \label{eq:alpha}
      \|\alpha\|_2 \leqq 2\sqrt{2}(\nicefrac{\|f^\ast\|_\mathcal{F}}{\sqrt{Mq_{\min}}}),
    \end{equation}
    where, importantly, the right-hand side is given in terms of a hyperparameter $q_{\min}$ rather than the values of $q_\lambda^\ast(v_0),\ldots,q_\lambda^\ast(v_{M-1})$.
    See Appendix~B\@.
    Owing to the bound~\eqref{eq:alpha}, we can determine the parameter region $\mathcal{W}$ in Algorithm~\ref{alg:classification} by this hyperparameter $q_{\min}$ without estimating $q_\lambda^\ast(v_0),\ldots,q_\lambda^\ast(v_{M-1})$.

  \item
  Since the above bound for approximating $f^\ast$ by $\hat{f}_{v,\alpha}$ is given in terms of the $L^2$ norm, we furthermore need to develop a technique for translating the $L^2$ norm into a bound in terms of the $L^\infty$ norm used in~\eqref{eq:distance}, in such a way that the exponential convergence $O(\log(\nicefrac{1}{\epsilon}))$ in~\eqref{eq:bound_lambda_thm} should not be canceled out.
    The crucial observation here is that such translation may be possible for the Gaussian kernel by continuously embedding its RKHS in the Sobolev space~\citep{Steinwart2008,Yashima2019,steinwart2009optimal}.
    However, as discussed above, the difficulty is that the embedding of the RKHS $\mathcal{F}$ associated with the Gaussian kernel itself is insufficient since $\hat{f}_{v,\alpha}\in \mathcal{F}_M$ may not be in $\mathcal{F}$.
    For conventional RFs,~\citet{Yashima2019} has shown, using a concentration inequality, i.e., Markov's inequality~\citep{vershynin_2018}, that an embedding of $\mathcal{F}_M$ in the Sobolev space also holds with high probability greater than $1-\epsilon$, but a multiplicative factor $(1+\nicefrac{1}{\epsilon})$ appears in the resulting bound on the $L^\infty$ norm.
    This factor is polynomially large in $\nicefrac{1}{\epsilon}$.
    Problematically, this is insufficient in our setting since the polynomially large factor would cancel out the exponential convergence $O(\log(\nicefrac{1}{\epsilon}))$ in~\eqref{eq:bound_lambda_thm}.
    To address this problem, we here show, using another concentration inequality, i.e., Hoeffding's inequality~\citep{vershynin_2018}, that with high probability greater than $\epsilon$, we have
    $\|\hat{f}_{v,\alpha}-f^\ast\|_{L^\infty(d\rho_\cX)}=O\big(\big(1+\sqrt{\nicefrac{\log(\nicefrac{1}{\epsilon})}{M}}\big)^{\nicefrac{p}{2}}
    {(\|\hat{f}_{v,\alpha}\|_{\mathcal{F}_M}\!\!+\|f^\ast\|_{\mathcal{F}})}^{p}\|f\|_{L^2(d\rho_\cX)}^{1-p}\big)$ for arbitrarily small but fixed $p>0$, where $\|\hat{f}_{v,\alpha}\|_{\mathcal{F}_M}$ is the RKHS norm of $\hat{f}_{v,\alpha}$ in $\mathcal{F}_M$.
    See Apppendix~C\@.
    To prove the exponential convergence,
    the poly-logarithmic prefactor $\big(1+\sqrt{\nicefrac{\log(\nicefrac{1}{\epsilon})}{M}}\big)^{\nicefrac{p}{2}}$ in $\nicefrac{1}{\epsilon}$ is essential.

  \item In addition to the above techniques for bounding $M$, we bound the required number $N$ of labeled examples for minimizing the regularized loss $\cL_\lambda$ using the high-probability bound on the number of iterations in SGD~\citep{H3}.
    Our contribution here is to derive a bound in terms of the $L^\infty$ norm in~\eqref{eq:distance}, by combining the above techniques with the technique for SGD\@.
    See Appendix~D\@.
    In particular, our bound on $\|\alpha\|_2$ in~\eqref{eq:alpha} is crucial for the analysis of SGD based on the technique of~\citet{H3} since~\citet{H3} requires projection onto the parameter region $\mathcal{W}$ in each iteration. Moreover, since $\cL_\lambda$ to be minimized is in terms of the $L^2$ norm, the above translation into the $L^\infty$ norm is vital.
\end{itemize}

\section{Conclusion}

We have constructed a significantly fast algorithm (Algorithm~\ref{alg:classification}) for a classification task under a low-noise condition, using optimized random features (RFs) introduced by~\citet{B1}.
In our algorithm, we sample optimized RFs from a data-optimized distribution by a quantum algorithm of~\citet{NEURIPS2020_9ddb9dd5} within a feasible runtime, followed by stochastic gradient descent (SGD) to achieve exponentially fast convergence of the excess classification error $\epsilon=O(\exp(-N))$ for $N$ examples.
For $D$-dimensional data, our algorithm can achieve the error $\epsilon$ within runtime $\widetilde{O}(MD\log(\nicefrac{1}{\epsilon}))$, where $M=O(d\log(\nicefrac{d}{\epsilon}))$ is the required number of RFs determined by the degree of freedom $d$.
This runtime is as fast as poly-logarithmic in $\nicefrac{1}{\epsilon}$ and linear in $D$;
furthermore, the bound shows $\epsilon=O(\exp(-M))$ in $M$, and hence, our algorithm achieves the exponentially fast error convergence in both $N$ and $M$ simultaneously.
Advantageously, this required number $M$ of optimized RFs can be significantly smaller than conventional RFs that are sampled from a data-independent distribution as originally proposed by~\citet{R2,R3}, and is provably optimal up to a logarithmic gap.
Even more remarkably, we can exploit this advantage for the same classification task as conventional RFs, and thus for a practical class of data sets, e.g., those given by a sub-Gaussian or spherical data distribution to be learned with the Gaussian kernel.
These results discover a promising application of quantum machine learning (QML) to acceleration of leading kernel-based classification algorithms without ruining the broad applicability and the exponential error-convergence speed.

\acks{We would like to acknowledge Taiji Suzuki and Atsushi Nitanda for insightful discussion, and Tam\'{a}s Kriv\'{a}chy for helpful comments on the manuscript. This work was supported by JSPS Overseas Research Fellowships, JST PRESTO Grant Number JPMJPR201A, and JSPS KAKENHI 18K18113\@.}

\newpage

\appendix

\section*{Appendices}

In Appendices of the paper ``Exponential Error Convergence in Data Classification with Optimized Random Features: Acceleration by Quantum Machine Learning'',
we present the proof of the main theorem (Theorem~1 in the main text) on the generalization property of optimized random features (RFs) in our algorithm (Algorithm~1 in the main text), by providing the proof techniques summarized in the main text.
Appendices are organized as follows.
In Appendix~\ref{sec:reduction_to_l_infty_norm}, we show how to reduce the analysis of the excess classification error in our theorem to that of the $L^\infty$ norm.
In Appendix~\ref{sec:required_number_of_optimized_random_features_for_function_approximation}, we analyze the required number of optimized RFs for our function approximation.
In Appendix~\ref{sec:translation_of_l_2_distance_into_l_infty_distance}, we show how to obtain the bound in terms of the $L^\infty$ norm from the $L^2$ norm used in the analysis of optimized RFs.
Along with these analysis on optimized RFs, we also analyze stochastic gradient descent (SGD) used for our algorithm in Appendix~\ref{sec:analysis_of_stochastic_gradient_descent}.
Using these results, we present a proof of Theorem~\ref{sthm:generalization} in Appendix~\ref{sec:proof_of_theorem_1}.
In Appendix~\ref{sec:feasibility}, we also explain a feasible implementation of the input model of the quantum algorithm of~\citet{NEURIPS2020_9ddb9dd5} in our setting.

We repeat our algorithm in Algorithm~\ref{salg:classification} for readability and show a formal statement of the main theorem in the main text as Theorem~\ref{sthm:generalization} in the following.
In our analysis, we will write our estimate of the function to be learned as
\begin{equation}
  \label{seq:estimate}
  \hat{f}_{v,\alpha}(x)=\sum_{m=0}^{M-1}(\alpha_{2m}\cos(-2\pi v_m\cdot x)+\alpha_{2m+1}\sin(-2\pi v_m\cdot x))\in\mathcal{F}_M,
\end{equation}
where $\mathcal{F}_M$ is the RKHS associated with the approximated kernel $k_M$ in term of RFs used in place of $k$, i.e.,
\begin{equation}
  \label{seq:k_M}
  k_M(x,x^\prime)=\sum_{m=0}^{M-1}(\nicefrac{1}{(Mq_\lambda^\ast(v_m))})\overline{\varphi(v_m,x)}\varphi(v_m,x^\prime)\quad\text{with $\varphi(v,x)=\mathrm{e}^{-2\pi\mathrm{i}v\cdot x}$}.
\end{equation}
We let $\|\hat{f}_{v,\alpha}\|_{\mathcal{F}_M}$ denote the RKHS norm of $\hat{f}_{v,\alpha}$ in $\mathcal{F}_M$.

\begin{algorithm}[h]
  \caption{\label{salg:classification}Classification with optimized RFs.}
  \begin{algorithmic}[1]
    \REQUIRE{Parameter $\lambda$ in~\eqref{seq:bound_lambda_thm}, number $M$ of optimized RFs in~\eqref{seq:bound_lambda_epsilon_thm}, even number $N\in 2\mathbb{N}$ of labeled examples in~\eqref{seq:bound_T_epsilon_thm}, unlabeled input examples stored in the data structure described in the main text, step sizes $\eta_t$ in~\eqref{seq:eta}, parameter $q_{\min}$ in~\eqref{seq:q_min}, parameter region $\mathcal{W}\subset\mathbb{R}^{2M}$.}
    \ENSURE{A classifier $\hat{f}_{v,\alpha}$ in the form of~\eqref{seq:estimate} with $v_0,\ldots,v_{M-1}\in\mathcal{V}=\mathbb{R}^D$ and $\alpha_0,\ldots,\alpha_{2M-1}\in\mathbb{R}$ that achieves~\eqref{seq:expected_excess_testing_error}, i.e., the excess classification error bounded by $\epsilon$ in expectation, with exponential error convergence shown in Theorem~\ref{sthm:generalization} and within a remarkably short runtime $T_\mathrm{all}$ shown in the main text.}
    \FOR{$m\in\left\{0,\ldots,M-1\right\}$}
    \STATE{Sample an optimized RF $v_m\in\mathcal{V}$ according to the optimized distribution $q_\lambda^\ast(v)d\tau(v)$ in~\eqref{seq:q_complex}.}
    \COMMENT{See \textit{Sampling optimized RFs} in Sec.~3.1 of the main text for detail.}
    \ENDFOR%
    \STATE{Initialize $\alpha^{(0)}\gets{(0,0,\ldots,0)}\in\mathcal{W}$.}
    \COMMENT{See \textit{SGD} in Sec.~3.1 of the main text for detail.}
    \FOR{$t\in\left\{0,\ldots,N-1\right\}$}
    \STATE{Calculate the prefactor $C(\alpha^{(t)})$ of the unbiased estimate $\hat{g}^{\left(t\right)}$ of the gradient of the regularized testing loss $\cL_\lambda$ in~\eqref{seq:regularized_loss_alpha} at $\alpha^{(t)}$ according to~\eqref{seq:prefactor}.}
    \STATE{Calculate $\hat{g}^{\left(t\right)}$ according to~\eqref{seq:unbiased_estimate} using $C(\alpha^{(t)})$.}
    \STATE{Set $\alpha^{\left(t+1\right)}\gets\Pi_\mathcal{W}(\alpha^{\left(t\right)}-\eta^{\left(t\right)} \hat{g}^{\left(t\right)})$.}
    \COMMENT{Using projection $\Pi_\mathcal{W}$ onto $\mathcal{W}$.}
    \ENDFOR%
    \STATE{Set ${(\alpha_0,\ldots,\alpha_{2M-1})}\gets(\nicefrac{2}{N})\sum_{t=\nicefrac{N}{2}+1}^{N}\alpha^{(t)}$.}
    \COMMENT{Suffix averaging.}
    \STATE{\textbf{Return } $v_0,\ldots,v_{M-1}\in\mathcal{V}$ and $\alpha_0,\ldots,\alpha_{2M-1}\in\mathbb{R}$.}
    \COMMENT{i.e., $\hat{f}_{v,\alpha}$ with these parameters.}
  \end{algorithmic}
\end{algorithm}

\begin{theorem}[\label{sthm:generalization}Generalization property of optimized RFs in Algorithm~\ref{salg:classification}]
  Fix arbitrarily small $p\in(0,1)$.
  There exist the parameter $\lambda$, the number $M$ of optimized RFs and the number $N$ of labeled examples satisfying
  \begin{align}
    \label{seq:bound_lambda_thm}
    \lambda&=O\left(\frac{\delta^{2}}{\|f^\ast\|_\mathcal{F}^2}{\left(\frac{\delta}{\|f^\ast\|_{\mathcal{F}}\sqrt{q_{\min}}}\right)}^{-\frac{2p}{1+p}}\right),\\
    \label{seq:bound_lambda_epsilon_thm}
    M&=O\left(d\left(\lambda\right)\log\left(\frac{d\left(\lambda\right)}{\epsilon}\right)\right),\\
    \label{seq:bound_T_epsilon_thm}
    N&=O\left(\log\left(\frac{1}{\epsilon}\right)\frac{\|f^\ast\|_\mathcal{F}^4}{\delta^{4} q_{\min}^2}{\left(\frac{\|f^\ast\|_{\mathcal{F}}}{\lambda\delta\sqrt{q_{\min}}}\right)}^{\frac{4p}{1-p}}\right),
  \end{align}
  such that Algorithm~\ref{sthm:generalization} can return a classifier $\hat{f}_{v,\alpha}$ with the excess classification error bounded by
  \begin{equation}
    \label{seq:expected_excess_testing_error}
    \mathbb{E}\left[\cR(\hat{f}_{v,\alpha})-\cR^\ast\right]\leqq\epsilon,
  \end{equation}
  where $C>0$ is a constant independent of $p$, $d(\lambda)$ is defined as
  \begin{equation}
    \label{seq:degree_of_freedom}
    d(\lambda)\coloneqq\tr\Sigma{(\Sigma+\lambda\mathbbm{1})}^{-1},
  \end{equation}
  $\|f^\ast\|_\mathcal{F}$ is the RKHS norm of the optimal classifier in the RKHS $\mathcal{F}$ of the Gaussian kernel $k$
  \begin{equation}
    \label{seq:f_optimal}
    f^\ast(x)\coloneqq\mathbb{E}[Y|X=x]=\rho(1|x)-\rho(-1|x)\in\mathcal{F}.
  \end{equation}
\end{theorem}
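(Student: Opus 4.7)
The plan is to reduce the excess classification error bound to a high-probability bound on $\|\hat{f}_{v,\alpha} - f^\ast\|_{L^\infty(d\rho_\cX)}$, and then establish the latter by combining three ingredients: an $L^2$ approximation result for optimized RFs, a Sobolev-type translation from $L^2$ to $L^\infty$ valid on $\mathcal{F}_M$, and an SGD convergence analysis for the regularized loss $\cL_\lambda$. Putting these together will yield the stated scalings for $\lambda$, $M$, and $N$.

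First, I would show the reduction claimed in~\eqref{eq:distance}. Under the low-noise condition~\eqref{eq:low_noise}, if $\|\hat{f}_{v,\alpha} - f^\ast\|_{L^\infty(d\rho_\cX)} < \delta$ then $\sign(\hat{f}_{v,\alpha}) = \sign(f^\ast)$ $\rho_\cX$-almost everywhere, so $\cR(\hat{f}_{v,\alpha}) = \cR^\ast$; hence on the good event of probability at least $1-\epsilon$ the excess error vanishes, while on the bad event it is at most $1$, giving $\mathbb{E}[\cR(\hat{f}_{v,\alpha})-\cR^\ast] \leqq \epsilon$. This turns the problem into certifying~\eqref{eq:distance} with failure probability at most $\epsilon$. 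Next, following the $L^2$ approximation argument sketched in the main text (refined from~\citet{B1}), I would exhibit, for $M = O(d(\lambda)\log(d(\lambda)/\epsilon))$ optimized RFs drawn from $q_\lambda^\ast(v)d\tau(v)$, a vector $\alpha^\sharp$ with $\|\alpha^\sharp\|_2 \leqq 2\sqrt{2}\|f^\ast\|_\mathcal{F}/\sqrt{Mq_{\min}}$ as in~\eqref{eq:alpha} so that $\min_{\alpha}\|\hat{f}_{v,\alpha}-f^\ast\|_{L^2(d\rho_\cX)}^2 = O(\lambda)$. The crucial feature here is that the bound on $\|\alpha^\sharp\|_2$ is expressed through the hyperparameter $q_{\min}$ alone, which lets us fix the projection set $\mathcal{W}$ without needing the individual leverage-score values.

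The second key step is the $L^2$-to-$L^\infty$ translation. Since the Gaussian-kernel RKHS $\mathcal{F}$ embeds continuously into a Sobolev space, I would extend this to the random RKHS $\mathcal{F}_M$ by applying Hoeffding's inequality to a bound on a Sobolev norm expressed as an average of $M$ bounded quantities over the sampled $v_m$'s. This gives, with probability at least $1-\epsilon$, an interpolation inequality of the form
\begin{equation*}
\|\hat{f}_{v,\alpha}-f^\ast\|_{L^\infty(d\rho_\cX)} = O\Bigl(\bigl(1+\sqrt{\log(1/\epsilon)/M}\bigr)^{p/2}(\|\hat{f}_{v,\alpha}\|_{\mathcal{F}_M}+\|f^\ast\|_\mathcal{F})^{p}\|\hat{f}_{v,\alpha}-f^\ast\|_{L^2(d\rho_\cX)}^{1-p}\Bigr)
\end{equation*}
for arbitrarily small $p>0$. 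Because the prefactor is only poly-logarithmic in $1/\epsilon$, an $L^2$ bound that decays as a positive power of $\lambda$ propagates to an $L^\infty$ bound that is still at most $\delta$ once $\lambda$ is chosen as in~\eqref{seq:bound_lambda_thm}. Using~\eqref{eq:alpha} to bound $\|\hat{f}_{v,\alpha}\|_{\mathcal{F}_M}$ in terms of $\|f^\ast\|_\mathcal{F}/\sqrt{q_{\min}}$, and absorbing the $p$-dependent overhead into the stated polynomial factors, gives exactly the $\lambda$-scaling in~\eqref{seq:bound_lambda_thm}.

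Finally, I would invoke the SGD analysis of~\citet{H3} for the $\mu$-strongly convex regularized loss $\cL_\lambda$ with $\mu = \lambda M q_{\min}$ and bounded-gradient stochastic estimate~\eqref{eq:unbiased_estimate}, applied on the projected region $\mathcal{W}$ prescribed by~\eqref{eq:alpha}. With suffix averaging and the step sizes~\eqref{eq:eta}, the number of iterations needed to drive the excess regularized loss below a target $\xi$ with failure probability $\epsilon$ is $N = O(\log(1/\epsilon)/(\mu \xi))$ up to polylogarithmic factors. Choosing $\xi$ so that the resulting $L^2$-error, pushed through the interpolation inequality above, is less than $\delta$ yields the scaling in~\eqref{seq:bound_T_epsilon_thm}; combining the resulting high-probability $L^\infty$ bound with the reduction step closes the proof. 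The main obstacle I anticipate is the $L^2$-to-$L^\infty$ translation: I must simultaneously obtain an embedding that is valid for the random RKHS $\mathcal{F}_M$ (not just $\mathcal{F}$) and keep the $\epsilon$-dependence of the constant poly-logarithmic, since any polynomial factor in $1/\epsilon$ there would destroy the exponential convergence $N = O(\log(1/\epsilon))$ in~\eqref{seq:bound_T_epsilon_thm}. The choice of Hoeffding (rather than Markov, as in~\citet{Yashima2019}) and the introduction of the small exponent $p$ are precisely what avoid this pitfall.
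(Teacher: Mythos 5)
Your proposal follows essentially the same route as the paper: the reduction of the excess classification error to a high-probability $L^\infty$ bound under the low-noise condition, the $L^2$ approximation with the $q_{\min}$-based constraint \eqref{eq:alpha} on $\|\alpha\|_2$, the Sobolev-embedding interpolation for $\mathcal{F}_M$ sharpened via Hoeffding's inequality to keep the prefactor poly-logarithmic in $\nicefrac{1}{\epsilon}$, and the strongly convex SGD analysis with $\mu=\lambda Mq_{\min}$. The one imprecision is your claim that \eqref{eq:alpha} bounds $\|\hat{f}_{v,\alpha}\|_{\mathcal{F}_M}$ by $\nicefrac{\|f^\ast\|_\mathcal{F}}{\sqrt{q_{\min}}}$: by \eqref{seq:rkhs_norm_F_M} the $\mathcal{F}_M$-norm weights the coefficients by $\sqrt{Mq_\lambda^\ast(v_m)}$, so one additionally needs the leverage-score upper bound $q_{\max}\leqq\nicefrac{(\nicefrac{1}{\lambda})}{d(\lambda)}=O(\nicefrac{1}{\lambda})$, which is exactly what produces the $\lambda$-dependence inside the ${(\nicefrac{\|f^\ast\|_{\mathcal{F}}}{(\lambda\delta\sqrt{q_{\min}})})}^{\nicefrac{4p}{(1-p)}}$ factor of \eqref{seq:bound_T_epsilon_thm}.
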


\section{Reduction of analysis of excess classification error to $L^\infty$ norm}%
\label{sec:reduction_to_l_infty_norm}

As discussed in Sec.~3.2 of the main text,
we reduce the analysis of the excess classification error~\eqref{seq:expected_excess_testing_error} to an evaluation of the $L^\infty$ norm.
Whereas some existing analysis of SGD under the low-noise condition
\begin{equation}
  \label{seq:low_noise}
  \left|\mathbb{E}[Y|X]\right|>\delta\quad\text{almost surely for some $0<\delta\leqq 1$},
\end{equation}
by~\citet{pmlr-v75-pillaud-vivien18a,pmlr-v89-nitanda19a} have used a reduction of the analysis of~\eqref{seq:expected_excess_testing_error} to that of the RKHS norm $\|\cdot\|_\mathcal{F}$, this reduction is not straightforwardly applicable to our estimate $\hat{f}\in\mathcal{F}_M$ since $\mathcal{F}_M$ is not necessarily included in $\mathcal{F}$.
Instead, the $L^\infty$ norm is used for the analysis of SGD under the low-noise condition by~\citet{Yashima2019}, but~\citet{Yashima2019} shows a statement on the number $M$ of RFs in high probability, and conditioned on this, a statement on the number of iterations in SGD (i.e., the number $N$ of labeled examples in our setting) in (conditional) expectation.
In contrast, we here want to show statements in expectation as in~\eqref{seq:expected_excess_testing_error}, rather than the statement in high probability and conditional expectation.
We here show the following proposition.
Note that the following proposition is written in terms of $\hat{f}\in\mathcal{F}_M$ and $f^\ast\in\mathcal{F}$ for clarity but can be generalized to $\hat{f},f^\ast\in L^\infty(d\rho_\cX)$ using the same proof.

\begin{proposition}[\label{sprp:l_infty_bound}Reduction of analysis of excess classification error to $L^\infty$ norm]
  Let $\hat{f}\in\mathcal{F}_M$ be a random function obtained by a randomized algorithm.
  Given the function $f^\ast\in\mathcal{F}$ defined as~\eqref{seq:f_optimal},
  if it holds with high probability greater than $1-\epsilon$ that
  \begin{equation}
    \label{seq:l_infty_condition}
    \|\hat{f}-f^\ast\|_{L^\infty(d\rho_\cX)}<\delta,
  \end{equation}
  then we have
  \begin{equation}
    \mathbb{E}\left[\cR(\hat{f})-\cR^\ast\right]\leqq\epsilon.
  \end{equation}
\end{proposition}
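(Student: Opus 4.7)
The plan is to combine a pointwise sign-agreement argument with the low-noise condition~\eqref{seq:low_noise}, and then split the expectation over the high-probability event on which~\eqref{seq:l_infty_condition} holds and its complement.

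First I would recall the standard identity for the excess $0$-$1$ error in binary classification: since $f^\ast(x) = \mathbb{E}[Y|X=x] = \rho(1|x) - \rho(-1|x)$ and the Bayes-optimal classifier is $\sign(f^\ast)$, a direct computation gives
\begin{equation*}
  \cR(\hat{f}) - \cR^\ast = \mathbb{E}\bigl[|f^\ast(X)|\,\mathbbm{1}_{\{\sign(\hat{f}(X))\neq\sign(f^\ast(X))\}}\bigr].
\end{equation*}
In particular, the integrand vanishes at every $x$ where $\sign(\hat{f}(x)) = \sign(f^\ast(x))$, so $0 \leqq \cR(\hat{f}) - \cR^\ast \leqq 1$ always, and the excess error is zero whenever the two signs agree $\rho_\cX$-almost everywhere.

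Next I would use the low-noise condition~\eqref{seq:low_noise}, namely $|f^\ast(x)| > \delta$ for $\rho_\cX$-almost every $x$, together with the hypothesis~\eqref{seq:l_infty_condition}. On the event $E\coloneqq\{\|\hat{f}-f^\ast\|_{L^\infty(d\rho_\cX)}<\delta\}$, for $\rho_\cX$-almost every $x$ we have $|\hat{f}(x) - f^\ast(x)| < \delta < |f^\ast(x)|$, which forces $\sign(\hat{f}(x)) = \sign(f^\ast(x))$. Hence on $E$, the indicator in the identity above is zero $\rho_\cX$-almost everywhere, so $\cR(\hat{f}) - \cR^\ast = 0$.

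Finally I would split the expectation over the randomness of the algorithm. Using $\Pr[E] > 1-\epsilon$ from the hypothesis, together with the trivial bound $\cR(\hat{f}) - \cR^\ast \leqq 1$ on the complement $E^c$, I obtain
\begin{equation*}
  \mathbb{E}[\cR(\hat{f}) - \cR^\ast] = \mathbb{E}[(\cR(\hat{f})-\cR^\ast)\mathbbm{1}_E] + \mathbb{E}[(\cR(\hat{f})-\cR^\ast)\mathbbm{1}_{E^c}] \leqq 0 + \Pr[E^c] \leqq \epsilon,
\end{equation*}
which is the desired conclusion. I do not expect a genuine obstacle here: the only subtlety is verifying the excess-error identity in terms of $|f^\ast|$ (a short argument from the definitions of $\cR$ and $\cR^\ast$) and being careful that the $L^\infty$-norm is taken with respect to $d\rho_\cX$ so that the pointwise sign-agreement holds $\rho_\cX$-almost surely, which is exactly what is needed to make the indicator vanish inside $\mathbb{E}_{X\sim\rho_\cX}$.
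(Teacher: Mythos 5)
Your proposal is correct and follows essentially the same route as the paper's proof: sign agreement of $\hat{f}$ and $f^\ast$ forced by the low-noise condition together with the $L^\infty$ bound, zero excess error on that high-probability event, and the trivial bound $\cR(\hat{f})-\cR^\ast\leqq 1$ on the complement. The only cosmetic difference is that you make the excess-error identity $\cR(\hat{f})-\cR^\ast=\mathbb{E}[|f^\ast(X)|\mathbbm{1}_{\{\sign(\hat{f}(X))\neq\sign(f^\ast(X))\}}]$ explicit, whereas the paper cites Lemma~1 of \citet{pmlr-v75-pillaud-vivien18a} for the same fact.
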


\begin{proof}
  The proof follows from a similar argument to the proof of Lemma~1 in the work by~\citet{pmlr-v75-pillaud-vivien18a}.
  Due to the low-noise condition~\eqref{seq:low_noise} and the definition~\eqref{seq:f_optimal} of $f^\ast$, if $\hat{f}$ satisfies~\eqref{seq:l_infty_condition}, then we have for almost all $x\in\cX_\rho$
  \begin{equation}
    \sign(\hat{f}(x))=\sign(f^\ast(x)).
  \end{equation}
  Then, for $\hat{f}$ satisfying~\eqref{seq:l_infty_condition}, it holds that
  \begin{equation}
    \cR(\hat{f})-\cR^\ast=0,
  \end{equation}
  which holds with probability greater than $1-\epsilon$.
  Since we have $\cR(\hat{f})-\cR^\ast\leqq 1$ for any $\hat{f}$ not necessarily satisfying~\eqref{seq:l_infty_condition}, it holds that
  \begin{equation}
    \mathbb{E}\left[\cR(\hat{f})-\cR^\ast\right]\leqq(1-\epsilon)\times 0+\epsilon\times 1\leqq\epsilon.
  \end{equation}
\end{proof}

\section{Required number of optimized random features for function approximation}%
\label{sec:required_number_of_optimized_random_features_for_function_approximation}

Due to Proposition~\ref{sprp:l_infty_bound}, to achieve the goal~\eqref{seq:expected_excess_testing_error} of our classification task, it suffices to obtain $\hat{f}_{v,\alpha}(x)$ that has the form of~\eqref{seq:estimate} and satisfies~\eqref{seq:l_infty_condition} with the high probability;
to bound the required number $M$ of optimized RFs for satisfying this, as discussed in Sec.~3.2 of the main text,
we here analyze the distance in the $L^2$ norm between $\hat{f}_{v,\alpha}\in\mathcal{F}_M$ and $f^\ast\in\mathcal{F}$, which will be translated to the $L^\infty$ norm used in~\eqref{seq:l_infty_condition} later in Appendix~\ref{sec:translation_of_l_2_distance_into_l_infty_distance}.
In the following, we summarize the results by~\citet{B1} on the required number $M$ of optimized RFs for function approximation, and then proceed to present our results on bounding $M$ and also $\|\alpha\|_2^2$, where $\alpha={(\alpha_0,\ldots,\alpha_{2M-1})}\in\mathbb{R}^{2M}$ is the coefficients in~\eqref{seq:estimate}.

The results by~\citet{B1} can be formulated in terms of a real-valued feature map rather than the complex-valued feature map used in the main text, i.e.,
\begin{equation}
  \label{seq:varphi_defn}
  \varphi(v,x)=\mathrm{e}^{-2\pi\mathrm{i}v\cdot x},
\end{equation}
where $v\in\mathcal{V}=\mathbb{R}^D$, $x\in\mathcal{X}=\mathbb{R}^D$, and $D$ is the dimension of input data.
To represent the real-valued kernel $k$,
instead of complex-valued $\varphi(v,x)$,
it is also possible to use a real-valued feature map $\varphi(v,b,\cdot)$ with a parameter space $(v,b)\in\mathcal{V}\times\mathcal{B}$ given by~\citep{B1,R2}
\begin{equation}
  \label{seq:varphi_defn_real}
  \varphi(v,b,x)\coloneqq\sqrt{2}\cos(-2\pi v\cdot x+2\pi b),
\end{equation}
where the space for
\begin{equation}
  b\in\mathcal{B}\coloneqq[0,1]
\end{equation}
is equipped with the uniform distribution.
We write
\begin{equation}
  \label{seq:tau_v_b}
  d\tau(v,b)\coloneqq d\tau(v)\,db.
\end{equation}
Indeed, the kernel $k$ can be represented by
\begin{align}
  k(x,x^\prime)&=\int_\mathcal{V}d\tau(v)\overline{\varphi(v,x)}\varphi(v,x^\prime)\\
               &=\int_\mathcal{V}d\tau(v)(\cos(-2\pi v\cdot x)\cos(-2\pi v\cdot x^\prime)+\sin(-2\pi v\cdot x)\sin(-2\pi v\cdot x^\prime))\\
               &=\int_\mathcal{V}d\tau(v)\int_\mathcal{B}db \sqrt{2}\cos(-2\pi v\cdot x+2\pi b)\times\sqrt{2}\cos(-2\pi v\cdot x^\prime+2\pi b)\\
               &=\int_\mathcal{V}d\tau(v)\int_\mathcal{B}db \varphi(v,b,x)\varphi(v,b,x^\prime).
\end{align}
Using the real-valued feature maps, we can represent any function $f\in\mathcal{F}$ in the RKHS associated with $k$ as~\citep{B1}
\begin{equation}
  f(x)=\Braket{\alpha(\cdot,\cdot)|\varphi(\cdot,\cdot,x)}_{L^2(d\tau(v,b))},
\end{equation}
where $\alpha(v,b)$ is a function satisfying $\|\alpha\|_{L^2(d\tau)}<\infty$ and can be regarded as coefficients of the Fourier-basis function $\varphi(\cdot,\cdot,x)$.
In this case of using real-valued feature maps,
conventional algorithms using random features by~\citet{R2,R3} start with sampling $M$ $(D+1)$-dimensional parameters
\begin{equation}
  (v_0,b_0),\ldots,(v_{M-1},b_{M-1})\in\mathcal{V}\times\mathcal{B}
\end{equation}
from the distribution $d\tau(v,b)=d\tau(v)db$ corresponding to the kernel $k$,
so as to determine $M$ features $\varphi(v_m,b_m\cdot)$.
Using these features, the kernel can be approximated as~\citep{R2}
\begin{align}
  k(x,x^\prime)&=\int_\mathcal{V}d\tau(v)\overline{\varphi(v,x)}\varphi(v,x^\prime)\nonumber\\
               &=\int_\mathcal{V}d\tau(v)\int_\mathcal{B}db{\varphi(v,b,x)}\varphi(v,b,x^\prime)\nonumber\\
               &\approx\frac{1}{M}\sum_{m=0}^{M-1}\varphi(v_m,b_m,x)\varphi(v_m,b_m,x^\prime).
\end{align}
Moreover, $f\in\mathcal{F}$ can be approximated as~\citep{R3}
\begin{align}
  f(x)&=\int_\mathcal{V}d\tau(v,b)\alpha(v,b)\varphi(v,b,x)\\
            &\approx\sum_{m=0}^{M-1}\alpha_m\varphi(v_m,b_m,x),
\end{align}
where $\alpha_0,\ldots,\alpha_{M-1}\in\mathbb{R}$ are some coefficients.
As summarized in the main text,
to achieve the learning to accuracy $O({\epsilon})$, we need to sample a sufficiently large number $M$ of features.
Once we fix $M$ features, we calculate coefficients $\alpha_m$ by linear (or ridge) regression using the given examples~\citep{R3,carratino2018learning,NIPS2017_6914}.
Then, to minimize $M$,~\citet{B1} provides an optimized probability density function $q_{\lambda}^\ast(v,b)$ for $d\tau(v,b)$ given by
\begin{align}
  \label{seq:q}
  q_{\lambda}^\ast\left(v,b\right)\propto{\braket{\varphi\left(v,b,\cdot\right)|{\left(\Sigma+{\epsilon}\mathbbm{1}\right)}^{-1}\varphi\left(v,b,\cdot\right)}_{L^2(d\rho_\cX)}},
\end{align}
where $\lambda$ is a parameter for regularization, and $\Sigma:L^2(d\rho_\cX)\to L^2(d\rho_\cX)$ is the integral operator~\citep{C2}
\begin{equation}
  \left(\Sigma f\right)\left(x^\prime\right)\coloneqq\int_\mathcal{X}d\rho\left(x\right)\,k\left(x^\prime,x\right)f\left(x\right).
\end{equation}
The normalization yields
\begin{equation}
  \int_\mathcal{V}d\tau(v)\int_\mathcal{B} db q_{\lambda}^\ast\left(v,b\right)=1.
\end{equation}
As shown in the following lemma,~\citet{B1} shows that it suffices to sample $M$ features from $q_{\lambda}^\ast(v)d\tau(v)$ with $M$ bounded by
\begin{equation}
  \label{seq:M_bound}
  M=O\left(d\left({\lambda}\right)\log\left(\frac{d\left({\lambda}\right)}{\epsilon}\right)\right),
\end{equation}
so as to achieve the learning to accuracy $O(\lambda)$ with high probability greater than $1-\epsilon$,
where $d\left({\lambda}\right)$ is given by~\eqref{seq:degree_of_freedom}.

\begin{lemma}[\label{slem:bach}\citet{B1}: Real-valued optimized random features]
  For any $\lambda>0$ and any $\epsilon\in(0,1)$,
  let $(v_0,b_0),\ldots,(v_{M-1},b_{M-1})\in\mathcal{V}\times\mathcal{B}$ be sampled IID from the density $q_\lambda^\ast(v,b)$ with respect to $d\tau(v,b)$.
  If $M$ satisfies
  \begin{equation}
    \label{seq:M_achievablity}
    M\geqq 5d(\lambda)\ln\left(\frac{16d(\lambda)}{\epsilon}\right),
  \end{equation}
  then, with high probability greater than $1-\epsilon$, it holds for any $f\in\mathcal{F}$ that
  \begin{equation}
    \label{seq:error_bound}
    \min_{\hat{f}}\|\hat{f}-f\|_{L^2(d\rho_\cX)}^2\leqq 4\lambda\|f\|_\mathcal{F}^2,
  \end{equation}
  where the minimum is taken over all the functions
  \begin{equation}
    \label{seq:hat_f}
    \hat{f}=\sum_{m=0}^{M-1}\frac{\beta_m}{\sqrt{Mq_\lambda^\ast(v_m,b_m)}}\varphi(v_m,b_m,\cdot)
  \end{equation}
  with coefficients $\beta={(\beta_0,\ldots,\beta_{M-1})}\in\mathbb{R}^M$ satisfying
  \begin{equation}
    \label{seq:constraint}
    \|\beta\|_2\leqq2\|f\|_\mathcal{F}.
  \end{equation}
\end{lemma}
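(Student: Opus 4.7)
The plan is to prove Lemma~\ref{slem:bach} by recasting the $L^2$ approximation of $f\in\mathcal{F}$ as a concentration statement about an empirical covariance operator built from importance-rescaled features, for which the leverage-score density~\eqref{seq:q} supplies exactly the right deterministic pointwise bound. First I would introduce the rescaled feature map
\begin{equation}
  \psi(v,b,\cdot) \coloneqq \frac{\varphi(v,b,\cdot)}{\sqrt{q_\lambda^\ast(v,b)}} \in L^2(d\rho_\cX),
\end{equation}
and observe that because $k(x,x') = \int \psi(v,b,x)\psi(v,b,x')\,q_\lambda^\ast(v,b)\,d\tau(v,b)$, the rank-one operator $\psi(v,b,\cdot)\otimes\psi(v,b,\cdot)$ on $L^2(d\rho_\cX)$ has expectation equal to $\Sigma$ when $(v,b)\sim q_\lambda^\ast d\tau$. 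The empirical operator
\begin{equation}
  \Sigma_M \coloneqq \frac{1}{M}\sum_{m=0}^{M-1}\psi(v_m,b_m,\cdot)\otimes\psi(v_m,b_m,\cdot)
\end{equation}
is therefore an unbiased estimator of $\Sigma$, and any $\hat f$ of the form~\eqref{seq:hat_f} equals $\frac{1}{\sqrt{M}}\sum_m\beta_m\,\psi(v_m,b_m,\cdot)$, so the lemma reduces to a spectral comparison between $\Sigma_M$ and $\Sigma$.

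The second step is a non-commutative Bernstein inequality applied to the IID zero-mean operators
\begin{equation}
  Z_m \coloneqq (\Sigma+\lambda\mathbbm{1})^{-1/2}\bigl(\psi(v_m,b_m,\cdot)\otimes\psi(v_m,b_m,\cdot)-\Sigma\bigr)(\Sigma+\lambda\mathbbm{1})^{-1/2}.
\end{equation}
The key identity, which is precisely the motivation for the choice~\eqref{seq:q}, is that for every $(v,b)$,
\begin{equation}
  \bigl\|(\Sigma+\lambda\mathbbm{1})^{-1/2}\psi(v,b,\cdot)\bigr\|_{L^2(d\rho_\cX)}^2
  = \frac{\braket{\varphi(v,b,\cdot)|(\Sigma+\lambda\mathbbm{1})^{-1}\varphi(v,b,\cdot)}_{L^2(d\rho_\cX)}}{q_\lambda^\ast(v,b)}
  = d(\lambda),
\end{equation}
so each $Z_m$ has operator norm bounded \emph{deterministically} by $d(\lambda)$ and variance parameter of the same order. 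Operator Bernstein then yields
\begin{equation}
  \bigl\|(\Sigma+\lambda\mathbbm{1})^{-1/2}(\Sigma-\Sigma_M)(\Sigma+\lambda\mathbbm{1})^{-1/2}\bigr\| \leqq \tfrac{1}{2}
\end{equation}
with probability at least $1-\epsilon$ whenever $M \gtrsim d(\lambda)\log(d(\lambda)/\epsilon)$, matching~\eqref{seq:M_achievablity} up to constants. This event is uniform in $f$, which is what the lemma requires.

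Conditioned on this event, I would derive both~\eqref{seq:error_bound} and~\eqref{seq:constraint} by analyzing the ridge-regularized least-squares problem $\min_\beta \|f - \frac{1}{\sqrt M}\sum_m\beta_m\psi(v_m,b_m,\cdot)\|_{L^2(d\rho_\cX)}^2 + \lambda\|\beta\|_2^2$. The normal equations give the closed form $\hat f = \Sigma_M(\Sigma_M+\lambda\mathbbm{1})^{-1}f$ and $f-\hat f = \lambda(\Sigma_M+\lambda\mathbbm{1})^{-1}f$; writing $f=\Sigma^{1/2}g$ with $\|g\|_{L^2}=\|f\|_\mathcal{F}$ and using the operator comparison $(\Sigma_M+\lambda\mathbbm{1})^{-1}\preceq 2(\Sigma+\lambda\mathbbm{1})^{-1}$ from the concentration step, a direct functional-calculus estimate gives $\|\hat f - f\|_{L^2}^2 \leqq 4\lambda\|f\|_\mathcal{F}^2$. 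For the coefficient bound~\eqref{seq:constraint}, I would compare the optimal regularized value to the value at the importance-weighted candidate $\beta^0_m = \alpha^\star(v_m,b_m)/\sqrt{M q_\lambda^\ast(v_m,b_m)}$ obtained from the integral representation $f = \int \alpha^\star(v,b)\varphi(v,b,\cdot)\,d\tau$ with $\|\alpha^\star\|_{L^2(d\tau)}=\|f\|_\mathcal{F}$; an unbiasedness calculation shows $\mathbb{E}\|\beta^0\|_2^2 = \|f\|_\mathcal{F}^2$ and yields the stated constant $2\|f\|_\mathcal{F}$ on the high-probability event already used.

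The main obstacle is the Bernstein step: the sample-complexity parameter must be the degree of freedom $d(\lambda)$ rather than any ambient dimension, and this is only achieved thanks to the precise normalization in~\eqref{seq:q} which makes $\|(\Sigma+\lambda\mathbbm{1})^{-1/2}\psi(v,b,\cdot)\|^2$ a constant independent of $(v,b)$. Any other sampling density would either inflate this pointwise bound or the variance of $Z_m$, breaking the $d(\lambda)\log(d(\lambda)/\epsilon)$ scaling; once this tight concentration is in hand, the passage to the ridge-regression bounds~\eqref{seq:error_bound} and~\eqref{seq:constraint} is essentially a bookkeeping exercise in functional calculus.
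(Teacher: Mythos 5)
The paper does not actually prove Lemma~\ref{slem:bach}: it imports the statement verbatim from \citet{B1} (it is a restatement of Proposition~1 of that work), and the appendix only builds on it via Proposition~\ref{sprp:complex_optimized_random_feature}. Your reconstruction is essentially Bach's original argument, and its skeleton is sound: the whitened rank-one operators with the deterministic bound $\|(\Sigma+\lambda\mathbbm{1})^{-1/2}\psi(v,b,\cdot)\|^2=d(\lambda)$ (which is exactly what the leverage-score normalization buys), an operator concentration inequality giving $(\Sigma_M+\lambda\mathbbm{1})^{-1}\preceq 2(\Sigma+\lambda\mathbbm{1})^{-1}$ on an event of probability $1-\epsilon$ once $M$ is of order $d(\lambda)\log(d(\lambda)/\epsilon)$, and then the ridge closed form $f-\hat f=\lambda(\Sigma_M+\lambda\mathbbm{1})^{-1}f$ combined with $f=\Sigma^{1/2}g$ to get $\|f-\hat f\|_{L^2}^2\leqq 4\lambda\|f\|_{\mathcal{F}}^2$. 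Two caveats. First, the concentration step must use an infinite-dimensional (intrinsic-dimension) Bernstein inequality, e.g.\ Minsker's, since $\Sigma$ acts on $L^2(d\rho_\mathcal{X})$ and a naive matrix Bernstein would produce a logarithm of the ambient dimension rather than of $d(\lambda)$; this is precisely how the $\log(d(\lambda)/\epsilon)$ factor arises in \citet{B1}, and your sketch should name it. Second, your route to the coefficient bound~\eqref{seq:constraint} via $\mathbb{E}\|\beta^0\|_2^2=\|f\|_{\mathcal{F}}^2$ and a comparison of objective values is the weak link: an expectation bound on the candidate $\beta^0$ only converts to a high-probability bound through Markov, which would degrade the failure probability and not hold deterministically on the event you already conditioned on. The clean fix, consistent with the rest of your argument, is to use the closed form $\beta=\Psi^*(\Sigma_M+\lambda\mathbbm{1})^{-1}f$ directly: then $\|\beta\|_2^2=\langle f,\Sigma_M(\Sigma_M+\lambda\mathbbm{1})^{-2}f\rangle\leqq\langle f,(\Sigma_M+\lambda\mathbbm{1})^{-1}f\rangle\leqq 2\langle f,(\Sigma+\lambda\mathbbm{1})^{-1}f\rangle\leqq 2\|f\|_{\mathcal{F}}^2$ on the same event, which gives $\|\beta\|_2\leqq\sqrt{2}\|f\|_{\mathcal{F}}\leqq 2\|f\|_{\mathcal{F}}$ as required.
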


As discussed in Sec.~3.2 of the main text, the problem of Lemma~\ref{slem:bach} arises from the fact that the values of $q_\lambda^\ast(v_0,b_0),\ldots,q_\lambda^\ast(v_{M-1},b_{M-1})$ in~\eqref{seq:hat_f} are unknown throughout Algorithm~\ref{salg:classification},
and we here show how to address this problem.
Our approach is to derive a new constraint on the coefficients in place of~\eqref{seq:constraint}, so that the dependency on $v_m$ and $b_m$ can be eliminated.
To eliminate the dependency on $v_m$,
we can use a hyperparameter $q_{\min}$ in place of the values of $q_\lambda^\ast$, i.e.,
\begin{equation}
  \label{seq:q_min}
  q_{\min}\coloneqq\min\{q_\lambda^\ast(v_m):m=0,\ldots,M-1\}.
\end{equation}
As for eliminating the dependency on $b_m$,
in contrast to Lemma~\ref{slem:bach} using the sampling from $q_\lambda^\ast(v,b)d\tau(v,b)$ defined in terms of a real-valued feature $\varphi(v,b,x)$ given by~\eqref{seq:varphi_defn_real},
we use in Algorithm~\ref{salg:classification} the optimized distribution
\begin{equation}
  \label{seq:q_complex}
  q_\lambda^\ast(v)d\tau(v)\propto\braket{\varphi(v,\cdot)|{(\Sigma+\lambda\mathbbm{1})}^{-1}\varphi(v,\cdot)}_{L_2(d\rho_\mathcal{X})}d\tau(v),
\end{equation}
defined in terms of a complex-valued feature $\varphi(v,x)$ given by~\eqref{seq:varphi_defn}.
Note that this is feasible in our setting since the quantum algorithm of~\citet{NEURIPS2020_9ddb9dd5} indeed performs sampling from $q_\lambda^\ast(v)d\tau(v)$ defined in terms of the complex-valued feature.
As expected, we here argue that the difference between the real-valued and complex-valued features in defining $q_\lambda^\ast$ does not affect the applicability of our algorithm to the learning of real-valued functions.
To see this, observe that $q_\lambda^\ast(v)d\tau(v)$ is a marginal distribution of $q_\lambda^\ast(v,b)d\tau(v,b)$, i.e.,
\begin{align}
  &q_\lambda^\ast(v)=\int_\mathcal{B}db\, q_\lambda^\ast(v,b)\\
  &\propto{\braket{\cos\left(-2\pi v\cdot(\cdot)\right)|{\left(\Sigma+{\lambda}\mathbbm{1}\right)}^{-1}\cos\left(-2\pi v\cdot(\cdot)\right)}_{L^2(d\rho_\cX)}}\\
  &\quad+{\braket{\sin\left(-2\pi v\cdot(\cdot)\right)|{\left(\Sigma+{\lambda}\mathbbm{1}\right)}^{-1}\sin\left(-2\pi v\cdot(\cdot)\right)}_{L^2(d\rho_\cX)}}.
\end{align}
That is, sampling $v\in\mathcal{V}$ from $q_\lambda^\ast(v)d\tau(v)$ followed by sampling $b\in\mathcal{B}$ from a conditional distribution
\begin{equation}
  \label{seq:q_lambda_conditioned}
  q_\lambda^\ast(b|v)db\coloneqq\frac{q_\lambda^\ast(v,b)}{q_\lambda^\ast(v)}db
\end{equation}
is exactly equivalent to sampling $(v,b)\in\mathcal{V}\times\mathcal{B}$ from $q_\lambda^\ast(v,b)d\tau(v,b)$.
However, whatever $b\in\mathcal{B}$ is sampled, we can expand each term in the representation~\eqref{seq:hat_f} of $\hat{f}$ as
\begin{align}
  \frac{\beta_m}{\sqrt{Mq_\lambda^\ast(v_m)}}\varphi(v_m,b_m,x)&=\frac{\beta_m}{\sqrt{Mq_\lambda^\ast(v_m)}}\sqrt{2}\cos(-2\pi v_m\cdot x+2\pi b_m)\\
  \label{seq:cos_sin}
                                                                   &=\alpha_{2m}\cos(-2\pi v_m\cdot x)+\alpha_{2m+1}\sin(-2\pi v_m\cdot x),
\end{align}
by appropriately choosing the coefficients $\alpha_{2m},\alpha_{2m+1}\in\mathbb{R}$ for each $m\in\{0,\ldots,M-1\}$.
As a result,~\eqref{seq:cos_sin} becomes independent of $b_m$.
Based on these observations, we show the following proposition in place of Lemma~\ref{slem:bach}.

\begin{proposition}[\label{sprp:complex_optimized_random_feature}Complex-valued optimized random features]
  For any $\lambda>0$ and any $\epsilon\in(0,1)$, let $v_0,\ldots,v_{M-1}\in\mathcal{V}$ be sampled IID from the density $q_\lambda^\ast(v)$ with respect to $d\tau(v)$.
  If $M$ satisfies
  \begin{equation}
    \label{seq:M_achievablity_complex}
    M\geqq 5d(\lambda)\ln\left(\frac{16d(\lambda)}{\epsilon}\right),
  \end{equation}
  then, with high probability greater than $1-\epsilon$, it holds for any $f\in\mathcal{F}$ that
  \begin{equation}
    \label{seq:error_bound_complex}
    \min_{\hat{f}_{v,\alpha}}\|\hat{f}_{v,\alpha}-f\|_{L^2(d\rho_\cX)}^2\leqq 4\lambda\|f\|_\mathcal{F}^2,
  \end{equation}
  where the minimum is taken over all the functions $\hat{f}_{v,\alpha}(x)\in\mathcal{F}_M$ in the form of~\eqref{seq:estimate}
  with coefficients $\alpha={(\alpha_0,\ldots,\alpha_{2M-1})}\in\mathbb{R}^{2M}$ satisfying
  \begin{equation}
    \label{seq:constraint_complex}
    \|\alpha\|_2\leqq\frac{2\sqrt{2}\|f\|_\mathcal{F}}{\sqrt{Mq_{\min}}},
  \end{equation}
  and $q_{\min}$ is a constant given by~\eqref{seq:q_min}.
\end{proposition}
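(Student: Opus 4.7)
The plan is to reduce Proposition~\ref{sprp:complex_optimized_random_feature} to Lemma~\ref{slem:bach} by exploiting that sampling $v\in\mathcal{V}$ from the marginal $q_\lambda^\ast(v)d\tau(v)$ combined with an auxiliary conditional draw $b\sim q_\lambda^\ast(b|v)db$ has the same joint law as sampling $(v,b)\in\mathcal{V}\times\mathcal{B}$ from $q_\lambda^\ast(v,b)d\tau(v,b)$, a consequence of the definition $q_\lambda^\ast(b|v) = q_\lambda^\ast(v,b)/q_\lambda^\ast(v)$. The hypothesis~\eqref{seq:M_achievablity_complex} then matches the hypothesis~\eqref{seq:M_achievablity} of Lemma~\ref{slem:bach} verbatim, so the latter yields, with probability at least $1-\epsilon$, a vector $\beta\in\mathbb{R}^M$ with $\|\beta\|_2\leqq 2\|f\|_\mathcal{F}$ such that the real-valued $\hat{f}$ in~\eqref{seq:hat_f} satisfies $\|\hat{f}-f\|_{L^2(d\rho_\cX)}^2\leqq 4\lambda\|f\|_\mathcal{F}^2$.

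The next step is to translate this $\hat{f}$ from the shifted-cosine basis $\{\varphi(v_m,b_m,\cdot)\}$ into the $\cos/\sin$ basis of~\eqref{seq:estimate} via the angle-addition identity $\cos(A+B)=\cos A\cos B-\sin A\sin B$ with $A=-2\pi v_m\cdot x$ and $B=2\pi b_m$. This identifies $\alpha_{2m}$ and $\alpha_{2m+1}$ explicitly as scalar multiples of $\beta_m\cos(2\pi b_m)$ and $-\beta_m\sin(2\pi b_m)$, respectively. Because the basis change is purely algebraic, the $L^2$ error bound transfers immediately from $\hat{f}$ to $\hat{f}_{v,\alpha}$, giving~\eqref{seq:error_bound_complex} once the norm bound~\eqref{seq:constraint_complex} is in place. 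A direct application of the Pythagorean identity $\cos^2(2\pi b_m)+\sin^2(2\pi b_m)=1$ collapses the $b_m$-dependence into $\alpha_{2m}^2+\alpha_{2m+1}^2 = 2\beta_m^2/(Mq_\lambda^\ast(v_m,b_m))$, reducing the whole proposition to controlling $\|\alpha\|_2^2 = (2/M)\sum_m\beta_m^2/q_\lambda^\ast(v_m,b_m)$.

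The main obstacle is precisely this norm estimate: the natural expression features the joint density $q_\lambda^\ast(v_m,b_m)$ in the denominator, whereas the definition of $q_{\min}$ and the target~\eqref{seq:constraint_complex} refer only to the marginal $q_\lambda^\ast(v_m)$. To bridge this gap, I would absorb the conditional factor $\sqrt{q_\lambda^\ast(b_m|v_m)}$ into a redefined coefficient $\tilde{\beta}_m=\beta_m/\sqrt{q_\lambda^\ast(b_m|v_m)}$ from the outset, so that the representer takes the form $\hat{f}(x)=\sum_m\tilde{\beta}_m\varphi(v_m,b_m,x)/\sqrt{Mq_\lambda^\ast(v_m)}$, and then re-run the importance-sampling and operator-concentration steps underlying Lemma~\ref{slem:bach} using $q_\lambda^\ast(v)d\tau(v)$ alone as the sampling measure to certify $\|\tilde{\beta}\|_2\leqq 2\|f\|_\mathcal{F}$. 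Once this is in place, $q_\lambda^\ast(v_m)\geqq q_{\min}$ yields $\|\alpha\|_2^2\leqq 8\|f\|_\mathcal{F}^2/(Mq_{\min})$, i.e.,~\eqref{seq:constraint_complex}. The crucial point is that, because $b_m$ appears in the final classifier only through $(\cos(2\pi b_m),\sin(2\pi b_m))$ and is otherwise redundant with the freedom in $(\alpha_{2m},\alpha_{2m+1})$, this reparametrization is legitimate and removes the dependence on the joint density that would otherwise preclude using $q_{\min}$ as a pre-chosen hyperparameter in Algorithm~\ref{salg:classification}.
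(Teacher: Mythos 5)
Your proposal follows the paper's own proof for most of its length: both reduce the statement to Lemma~\ref{slem:bach} via the factorization $q_\lambda^\ast(v,b)=q_\lambda^\ast(v)\,q_\lambda^\ast(b|v)$, both convert the shifted cosine $\sqrt{2}\cos(-2\pi v_m\cdot x+2\pi b_m)$ into the $\cos/\sin$ basis by the angle-addition identity, and both collapse the $b_m$-dependence of the norm with $\cos^2+\sin^2=1$ to arrive at $\alpha_{2m}^2+\alpha_{2m+1}^2=2\beta_m^2/(Mq_\lambda^\ast(v_m,b_m))$. The divergence is in the final step. The paper simply replaces the joint density $q_\lambda^\ast(v_m,b_m)$ in the denominator by $q_{\min}$, even though $q_{\min}$ is defined in~\eqref{seq:q_min} via the marginal $q_\lambda^\ast(v_m)$ --- an inequality that is not automatic, since $q_\lambda^\ast(v_m)$ is the average of $q_\lambda^\ast(v_m,b)$ over $b$ and the conditional factor can dip below one. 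You correctly single this out as the main obstacle and propose to absorb $\sqrt{q_\lambda^\ast(b_m|v_m)}$ into rescaled coefficients $\tilde{\beta}_m$, at the price of having to re-derive the Bach-type concentration bound $\|\tilde{\beta}\|_2\leqq 2\|f\|_\mathcal{F}$ with the complex feature map $\varphi(v,x)=\mathrm{e}^{-2\pi\mathrm{i}v\cdot x}$ and the marginal density $q_\lambda^\ast(v)d\tau(v)$ as the sampling measure. That re-derivation is plausible --- the underlying result of \citet{B1} is stated for general feature maps and applies to the complex-valued one, after which taking real and imaginary parts of the complex coefficients yields the $\cos/\sin$ expansion directly --- but in your write-up it is asserted rather than carried out, and it is the entire content of the norm bound~\eqref{seq:constraint_complex}. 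So: same skeleton as the paper, a more careful treatment of the joint-versus-marginal density issue, but the one step that genuinely departs from Lemma~\ref{slem:bach} is left as a sketch and would need to be executed for the argument to be complete.
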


\begin{proof}
  Lemma~\ref{slem:bach} shows that if we sample $v_m\in\mathcal{V}$ according to $q_\lambda^\ast(v)d\tau(v)$ and sample $b_m\in\mathcal{B}$ according to $q_\lambda^\ast(b|v_m)db$ in~\eqref{seq:q_lambda_conditioned} for each $m\in\{0,\ldots,M-1\}$,
  then, with high probability greater than $1-\epsilon$, we have the bound~\eqref{seq:error_bound} with
  \begin{align}
    \hat{f}(x)&=\sum_{m=0}^{M-1}\frac{\beta_m}{\sqrt{Mq_\lambda^\ast(v_m,b_m)}}\varphi(v_m,b_m,x)\\
           &=\sum_{m=0}^{M-1}\frac{\beta_m}{\sqrt{Mq_\lambda^\ast(v_m,b_m)}}\sqrt{2}\cos(-2\pi v_m\cdot x+2\pi b_m)\\
           &=\sum_{m=0}^{M-1}\left(\frac{\beta_m}{\sqrt{Mq_\lambda^\ast(v_m,b_m)}}\sqrt{2}\cos(2\pi b_m)\cos(-2\pi v_m\cdot x)\right.\\
           &\quad\left.-\frac{\beta_m}{\sqrt{Mq_\lambda^\ast(v_m,b_m)}}\sqrt{2}\sin(2\pi b_m)\sin(-2\pi v_m\cdot x)\right).
  \end{align}
  where $\beta$ satisfies the constraint~\eqref{seq:constraint}.
  Choosing coefficients $\alpha_{2m},\alpha_{2m+1}$ for each $m\in\{0,\ldots,M-1\}$ as
  \begin{align}
    \alpha_{2m}&\coloneqq\frac{\beta_m}{\sqrt{Mq_\lambda^\ast(v_m)}}\sqrt{2}\cos(2\pi b_m),\\
    \alpha_{2m+1}&\coloneqq-\frac{\beta_m}{\sqrt{Mq_\lambda^\ast(v_m)}}\sqrt{2}\sin(2\pi b_m),
  \end{align}
  we have $\hat{f}_{v,\alpha}$ in the form of~\eqref{seq:estimate} that satisfies the bound~\eqref{seq:error_bound_complex}.
  The constraint~\eqref{seq:constraint_complex} of $\alpha$ is obtained from the constraint~\eqref{seq:constraint} of $\beta$ by
  \begin{align}
    \|\alpha\|_2&=\sqrt{\sum_{m=0}^{2M-1}\alpha_m^2}\\
                &=\sqrt{\sum_{m=0}^{M-1}\left(\frac{\beta_m^2}{Mq_\lambda^\ast(v_m,b_m)}2\cos^2(2\pi b_m)+\frac{\beta_m^2}{Mq_\lambda^\ast(v_m,b_m)}2\sin^2(2\pi b_m)\right)}\\
                &=\sqrt{\sum_{m=0}^{M-1}\frac{2\beta_m^2}{Mq_\lambda^\ast(v_m,b_m)}}\\
                &\leqq\sqrt{\sum_{m=0}^{M-1}\frac{2\beta_m^2}{Mq_{\min}}}\\
                &=\frac{\sqrt{2}\|\beta\|_2}{\sqrt{Mq_{\min}}}\\
                &\leqq\frac{2\sqrt{2}\|f\|_\mathcal{F}}{\sqrt{Mq_{\min}}}.
  \end{align}
\end{proof}

\section{Translation from $L^2$ norm into $L^\infty$ norm}%
\label{sec:translation_of_l_2_distance_into_l_infty_distance}

As discussed in Sec.~3.2 of the main text, we here show how to translate the $L^2$ norm used so far into a bound in the $L^\infty$ norm used in the left-hand side of~\eqref{seq:l_infty_condition}.
To bound the $L^\infty$ norm on the left-hand side of~\eqref{seq:l_infty_condition} by the $L^2$ norm,
we use the fact that the kernel is Gaussian, $\supp(\rho_\cX)\subset\mathbb{R}^D$ is bounded, and $\rho_\cX$ has a density that is uniformly bounded away from $0$ and $\infty$ on $\supp(\rho_\cX)$.
In this setting,~\citet{Yashima2019} shows an upper bound of the $L^\infty$ norm in terms of the RKHS norm in $\mathcal{F}$ and the $L^2$ norm; in particular, for any $p\in(0,1)$, there exists a constant $C_p>0$ such that it holds for any $f\in\mathcal{F}$ that
\begin{align}
  \label{seq:kernel_condition}
  \|f\|_{L^\infty(d\rho_\cX)}\leqq C_p\|f\|_\mathcal{F}^p\|f\|_{L^2(d\rho_\cX)}^{1-p}.
\end{align}
Although~\eqref{seq:kernel_condition} is in terms of the RKHS norm in $\mathcal{F}$,~\citet{Yashima2019} also shows a similar bound that holds with high probability greater than $1-\epsilon$ for functions represented in terms of RFs sampled from the data-independent distribution $d\tau(v)$; however, this bound includes a polynomially large factor in $\nicefrac{1}{\epsilon}$.
We here improve and generalize this bound; significantly, our bound depends only poly-logarithmically on $\nicefrac{1}{\epsilon}$, unlike that shown by~\citet{Yashima2019}.
Consequently, our result here achieves an exponential improvement in $\nicefrac{1}{\epsilon}$, and this improvement will be crucial in the proof of Theorem~\ref{sthm:generalization} not to cancel out the exponential convergence of classification error in SGD under the low-noise condition shown by~\citet{pmlr-v75-pillaud-vivien18a,pmlr-v89-nitanda19a,Yashima2019}.

To show our result here, recall that the function $\hat{f}_{v,\alpha}$ in~\eqref{seq:estimate} used for our function approximation is in the RKHS $\mathcal{F}_M$ associated with a kernel $k_M$ in~\eqref{seq:k_M},
and $\mathcal{F}_M$ is not necessarily included in $\mathcal{F}$ associated with the kernel $k$.
The RHKSs $\mathcal{F}$ and $\mathcal{F}_M$ are characterized by~\citep{B1}
\begin{align}
  \mathcal{F}&=\left\{\int_\mathcal{V}d\tau(v)\int_\mathcal{B}db \beta(v,b)\varphi(v,b,\cdot):\beta\in L^2(d\tau)\right\},\\
  \mathcal{F}_M&=\left\{\sum_{m=0}^{M-1}\frac{\beta_{m}}{\sqrt{q_\lambda^\ast(v_m,b_m)}}\varphi(v_m,b_m,\cdot):\|\beta\|_2<\infty\right\}.
\end{align}
For $f\in\mathcal{F}$ and $\hat{f}\in\mathcal{F}_M$, the norms in these RKHS are given by~\citep{B1}
\begin{align}
  \|f\|_\mathcal{F}&=\inf\left\{{\|\beta\|}_{L^2(d\tau(v,b))}:f=\int_\mathcal{V}d\tau(v)\int_\mathcal{B}db \beta(v,b)\varphi(v,b,\cdot)\right\},\\
  \label{seq:rkhs_norm_F_M}
  \|\hat{f}\|_{\mathcal{F}_M}&=\inf\left\{{\|\beta\|}_2:\hat{f}=\sum_{m=0}^{M-1}\frac{\beta_{m}}{\sqrt{Mq_\lambda^\ast(v_m)}}\varphi(v_m,b_m,\cdot)\right\}.
\end{align}
To deal with functions in $\mathcal{F}$ and $\mathcal{F}_M$ at the same time,
let $\mathcal{F}_M^+$ be the RKHS associated with a kernel
\begin{equation}
 \label{seq:kernel_f_M_plus}
  k+k_M,
\end{equation}
so that, for any $f\in\mathcal{F}$ and $\hat{f}\in\mathcal{F}_M$, we have $f,\hat{f}\in\mathcal{F}_M^+$~\citep{Steinwart2008}.
The RKHS norm of $f\in\mathcal{F}_M^+$ is given by
\begin{equation}
  \label{seq:rkhs_norm_f_M_plus}
  \|f\|_{\mathcal{F}_M^+}=\inf\left\{{\|f_1\|}_{\mathcal{F}}+{\|f_2\|}_{\mathcal{F}_M}:f=f_1+f_2,f_1\in\mathcal{F},f_2\in\mathcal{F}_M\right\}.
\end{equation}
Then, in place of~\eqref{seq:kernel_condition}, we show a bound in terms of the RKHS norm in $\mathcal{F}_M^+$ and the $L^2$ norm that is applicable to functions represented by optimized RFs sampled from the optimized distribution $q_\lambda^\ast(v)d\tau(v)$, as shown in the following proposition.

\begin{proposition}[\label{sprp:Gaussian}Bound of $L^\infty$ norm in terms of RKHS norm and $L^2$ norm]
  Fix an arbitrarily small $p\in(0,1)$.
  For any $\epsilon>0$,
  with high probability greater than $1-\epsilon$ in sampling $M$ random features from the optimized distribution $q_\lambda^\ast(v)d\tau(v)$,
  it holds for any $f\in\mathcal{F}_M^+$ that
  \begin{align}
    \label{seq:l_infty_condition_in_l_2}
    \|f\|_{L^\infty(d\rho)}=O\left({\left(1+\sqrt{\frac{1}{M}\log\left(\frac{1}{\epsilon}\right)}\right)}^{\frac{p}{2}}\|f\|_{\mathcal{F}_M^+}^{p}\|f\|_{L^2(d\rho)}^{1-p}\right).
  \end{align}
  where the constant factor depends on $p$.
\end{proposition}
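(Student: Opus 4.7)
The plan is to route the $L^\infty$ bound through the Sobolev scale and to upgrade the Markov-type argument of~\citet{Yashima2019} to one based on Hoeffding's inequality; the latter is exactly what converts the polynomial $\nicefrac{1}{\epsilon}$ prefactor into a poly-logarithmic one. Concretely, for any $\theta>\nicefrac{D}{2}$ and $s>\theta$, the Sobolev embedding $H^\theta(\cX_\rho)\hookrightarrow L^\infty(\cX_\rho)$ and the standard interpolation inequality $\|f\|_{H^\theta}\leq \|f\|_{L^2}^{1-\theta/s}\|f\|_{H^s}^{\theta/s}$ combine, together with the assumption that $\rho_\cX$ is bounded above and below on the bounded set $\cX_\rho$, to give
\[
\|f\|_{L^\infty(d\rho)} \leq C_{s,\theta}\,\|f\|_{L^2(d\rho)}^{1-p}\,\|f\|_{H^s(\cX_\rho)}^{p},\quad p\coloneqq\nicefrac{\theta}{s}.
\]
Choosing $\theta$ just above $\nicefrac{D}{2}$ and $s=\nicefrac{\theta}{p}$ realizes any prescribed $p\in(0,1)$, with the constant depending on $p$. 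It then suffices to establish a high-probability embedding
\[
\|f\|_{H^s(\cX_\rho)} \leq C\sqrt{1+\sqrt{\frac{1}{M}\log\left(\frac{1}{\epsilon}\right)}}\,\|f\|_{\mathcal{F}_M^+}
\]
for every $f\in\mathcal{F}_M^+$, with probability at least $1-\epsilon$ over the random features $v_0,\ldots,v_{M-1}$.

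I will derive this embedding by decomposing $f=f_1+f_2$ almost-attaining the infimum in~\eqref{seq:rkhs_norm_f_M_plus}. The contribution from $f_1\in\mathcal{F}$ is handled deterministically via $\|f_1\|_{H^s}\leq C_s\|f_1\|_{\mathcal{F}}$, the standard continuous embedding of the Gaussian RKHS into every Sobolev space on a bounded domain. For $f_2=\sum_{m=0}^{M-1}\beta_m\varphi(v_m,\cdot)/\sqrt{Mq_\lambda^\ast(v_m)}\in\mathcal{F}_M$ with $\|f_2\|_{\mathcal{F}_M}=\|\beta\|_2$ by~\eqref{seq:rkhs_norm_F_M}, the triangle inequality and Cauchy--Schwarz yield
\[
\|f_2\|_{H^s(\cX_\rho)}^2 \leq \|\beta\|_2^2 \cdot \frac{1}{M}\sum_{m=0}^{M-1}\frac{\|\varphi(v_m,\cdot)\|_{H^s(\cX_\rho)}^2}{q_\lambda^\ast(v_m)}.
\]
The summands on the right are IID under $v_m\sim q_\lambda^\ast\,d\tau$ with finite expectation $\int\|\varphi(v,\cdot)\|_{H^s(\cX_\rho)}^2\,d\tau(v)=O(1)$, since $\|\varphi(v,\cdot)\|_{H^s(\cX_\rho)}^2=O((1+|v|^2)^s)$ on the bounded set $\cX_\rho$ while the Gaussian tail of $d\tau$ provides all moments. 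Hoeffding's inequality then controls the empirical average by its expectation plus $O(\sqrt{\log(\nicefrac{1}{\epsilon})/M})$ with probability at least $1-\epsilon$, which, combined with the deterministic bound on $f_1$ and an infimum over decompositions, yields the required embedding; substituting it into the interpolation inequality delivers the proposition.

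The principal obstacle is making Hoeffding's inequality rigorously applicable, since the summands $\|\varphi(v_m,\cdot)\|_{H^s(\cX_\rho)}^2/q_\lambda^\ast(v_m)$ are not uniformly bounded a priori: $\|\varphi(v,\cdot)\|_{H^s(\cX_\rho)}$ grows polynomially in $|v|$, and $1/q_\lambda^\ast(v)$ can itself be unbounded over $\mathcal{V}$. I plan to handle this by combining (i) the bottom-raised weight $\nicefrac{q_\lambda^\ast(v)}{2}+\nicefrac{1}{2}$ prescribed in Algorithm~\ref{salg:classification}, which forces $1/q_\lambda^\ast(v_m)\leq 2$ uniformly, with (ii) a truncation of $v$ to a ball of radius $R=O(\polylog(\nicefrac{1}{\epsilon}))$, chosen via the Gaussian tails of $d\tau$ so that no sampled $v_m$ lies outside the ball except on an event of probability at most $\nicefrac{\epsilon}{2}$. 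On the resulting high-probability event the summands are bounded by a factor only poly-logarithmic in $\nicefrac{1}{\epsilon}$, and Hoeffding applies cleanly. This is precisely the step at which the polynomial-in-$\nicefrac{1}{\epsilon}$ factor incurred by the Markov-based argument of~\citet{Yashima2019} is replaced by the poly-logarithmic prefactor, which, as emphasized in the main text, is essential for preserving the exponential error convergence of Theorem~\ref{sthm:generalization}.
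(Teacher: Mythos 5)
Your proposal follows essentially the same route as the paper's proof: the $L^\infty$ bound is obtained by interpolating through a Sobolev norm of order just above $\nicefrac{D}{2}$, and the Sobolev norm of a function in $\mathcal{F}_M^+$ is controlled by its $\mathcal{F}_M^+$ norm times an empirical average of the form $\frac{1}{M}\sum_m \nicefrac{(\text{polynomial in } v_m)}{q_\lambda^\ast(v_m)}$, to which Hoeffding's inequality is applied; your explicit decomposition $f=f_1+f_2$ with an infimum over splittings is the same device as the paper's mixed measure $\tau^+ = \sum_m \nicefrac{\delta(\cdot-v_m)}{(Mq_\lambda^\ast(v_m))} + \tau$, just written additively. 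The one genuine divergence is how the concentration step is justified. The paper asserts that $X_m=\nicefrac{v_m^{2\alpha}}{q_\lambda^\ast(v_m)}$ is sub-Gaussian (on the grounds that $\tau$ is Gaussian) and invokes the sub-Gaussian form of Hoeffding, which yields exactly the stated prefactor $\bigl(1+\sqrt{\nicefrac{\log(\nicefrac{1}{\epsilon})}{M}}\bigr)^{\nicefrac{p}{2}}$; this is actually the delicate point of the paper's argument, since a degree-$\geq 2$ monomial of a Gaussian is sub-exponential rather than sub-Gaussian and $\nicefrac{1}{q_\lambda^\ast}$ is not a priori bounded. Your replacement --- truncating $v$ to a ball of radius $O(\polylog(\nicefrac{1}{\epsilon}))$ via the Gaussian tail of $q_\lambda^\ast\,d\tau$ (note $q_\lambda^\ast=O(\nicefrac{1}{\lambda})$ uniformly) and invoking the bottom-raised weight to force $\nicefrac{1}{q_\lambda^\ast(v_m)}\leqq 2$ --- is more self-contained, at the cost of an extra $\polylog(\nicefrac{1}{\epsilon})$ factor inside the prefactor and of committing to the bottom-raised variant of the sampling (which the paper presents as optional, and which requires consistently replacing $q_\lambda^\ast$ in $k_M$ and $q_{\min}$). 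The weaker prefactor is harmless downstream: in Proposition~\ref{sthm:T_infty} the prefactor is absorbed using $M=\Omega(d(\lambda)\log(\nicefrac{d(\lambda)}{\epsilon}))$ and is raised to the arbitrarily small power $\nicefrac{p}{2}$, so your version still supports Theorem~\ref{sthm:generalization}. You should, however, state the truncation argument carefully (condition on the event that all $M$ samples land in the ball, and account for the shift of the conditional mean), and be explicit that the bound you prove is $\bigl(1+\polylog(\nicefrac{1}{\epsilon})\sqrt{\nicefrac{\log(\nicefrac{1}{\epsilon})}{M}}\bigr)^{\nicefrac{p}{2}}$ rather than literally the displayed~\eqref{seq:l_infty_condition_in_l_2}.
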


\begin{proof}
  Our proof follows from a similar argument to the proof of Theorem~2 of the work by~\citet{Yashima2019}, based on embedding of $\mathcal{F}_M^+$ in the Sobolev space.
  Let $W^m(\cX_\rho)$ denote a Sobolev space of order $m$, i.e.,
  \begin{equation}
    W^m(\cX_\rho)\coloneqq\left\{f\in L^2(\cX_\rho):\text{$\partial^{(\alpha)}f\in L^2(\cX_\rho)$ exists for all $\alpha\in\mathbb{N}^D$ with $|\alpha|\leqq m$.}\right\},
  \end{equation}
  where $\partial^{(\alpha)}$ is the $\alpha$th weak derivative for a multiindex $\alpha=(\alpha^{(1)},\ldots,\alpha^{(D)})\in\mathbb{N}^D$ with  $|\alpha|=\sum_{d=1}^D\alpha^{(d)}$.
  Choose an arbitrary integer $m\geqq\nicefrac{D}{2}$, and fix $\alpha=(\alpha^{(1)},\ldots,\alpha^{(D)})\in\mathbb{N}^D$ with $|\alpha|=\sum_{d=1}^D\alpha^{(d)}=m$.

  \citet{Yashima2019} shows that there exists a constant $C_1>0$ such that
  \begin{equation}
    \label{seq:1}
    \|f\|_{L^\infty(\cX_\rho)}\leqq C_1\|f\|_{W^m(\cX)}^{\nicefrac{d}{2m}}\|f\|_{L^2(d\rho_\mathcal{X})}^{1-\nicefrac{d}{2m}}.
  \end{equation}
  Since $\cX_\rho$ is bounded, we have
  \begin{equation}
    \label{seq:2}
    \|f\|_{L^\infty(d\rho)}=\|f\|_{L^\infty(\cX_\rho)}.
  \end{equation}
  Under our assumption that $\rho$ is uniformly bounded away from $\infty$ on $\cX_\rho$, there exists a constant $C_2>0$ such that
  \begin{equation}
    \label{seq:3}
    \|f\|_{L^2(d\rho)}=C_2\|f\|_{L^2(\cX_\rho)}.
  \end{equation}
  Thus, from~\eqref{seq:1},~\eqref{seq:2}, and~\eqref{seq:3}, we obtain
  \begin{equation}
    \label{seq:4}
    \|f\|_{L^\infty(d\rho)}\leqq C_3\|f\|_{W^m(\cX)}^{\nicefrac{d}{2m}}\|f\|_{L^2(\cX_\rho)}^{1-\nicefrac{d}{2m}},
  \end{equation}
  where $C_3>0$ is a constant.

  Then, to bound $\|f\|_{W^m(\cX)}$ in~\eqref{seq:4} in terms of the RKHS norm of $\mathcal{F}_M^+$ in~\eqref{seq:rkhs_norm_f_M_plus},
  we define
  \begin{equation}
    \tau^+(v)\coloneqq\sum_{m=0}^{M-1}\frac{1}{Mq_\lambda^{\ast}(v_m)}\delta(v-v_m)+\tau(v),
  \end{equation}
  where $\delta$ is the delta function, and $q_\lambda^{\ast}(v_m)$ appears in the denominator due to the importance sampling of $v_m$ according to $q_\lambda^{\ast}(v_m)d\tau(v_m)$~\citep{B1}.
  Similarly to~\eqref{seq:tau_v_b}, we write
  \begin{equation}
    d\tau^+(v,b)=d\tau^+(v)\,db,
  \end{equation}
  where $b\in\mathcal{B}$ is used for the real-valued feature map $\varphi(v,b,x)$ given by~\eqref{seq:varphi_defn_real}.
  The kernel $k+k_M$ in~\eqref{seq:kernel_f_M_plus} is written as
  \begin{equation}
    (k+k_M)(x,x^\prime)=\int_{\mathcal{V}}d\tau^+(v)\overline{\varphi(v,x)}\varphi(v,x^\prime),
  \end{equation}
  and for any $f\in\mathcal{F}_M^+$, there exists $g\in L^2(d\tau^+(v,b))$ such that~\citep{B1}
  \begin{align}
    f(x)&=\int_{\mathcal{V}\times\mathcal{B}}d\tau^+(v,b) g(v,b)\varphi(v,b,x),\\
    \|f\|_{\mathcal{F}_M^+}&=\|g\|_{L^2(d\tau^+(v,b))}.
  \end{align}
  For $\alpha={(\alpha^{(1)},\ldots,\alpha^{(D)})}\in\mathbb{N}^D$ and $v={(v^{(1)},\ldots,v^{(D)})}\in\mathcal{V}$, we write
  \begin{align}
    v^\alpha&=\prod_{d=1}^D{\left(v^{(d)}\right)}^{\alpha^{(d)}},\\
    \partial^\alpha&=\partial_1^{\alpha^{(1)}}\cdots\partial_D^{\alpha^{(D)}}.
  \end{align}
  Then, we have
  \begin{align}
    \|\partial^\alpha f\|_{L^2(\cX_\rho)}^2&=\int_{\cX_\rho}dx\,{\left(\partial^\alpha_x\int_{\mathcal{V}\times\mathcal{B}}d\tau^+(v,b) g(v,b)\varphi(v,b,x)\right)}^2\\
                                           &\leqq\int_{\cX_\rho}dx\,{\left(\int_{\mathcal{V}\times\mathcal{B}}d\tau^+(v,b) |g(v,b)|\partial^\alpha_x\varphi(v,b,x)\right)}^2\\
                                           &\leqq\|g\|_{L^2(d\tau^+(v,b))}^2\int_{\cX_\rho}dx\,\int_{\mathcal{V}\times\mathcal{B}}d\tau^+(v,b) {\left|\partial^\alpha_x\varphi(v,b,x)\right|}^2\\
                                           \label{seq:C_4}
                                           &\leqq\|g\|_{L^2(d\tau^+(v,b))}^2\int_{\cX_\rho}dx\,\int_{\mathcal{B}}db\,\int_{\mathcal{V}}d\tau^+(v) C_4v^{2\alpha}\\
                                           \label{seq:C_4_2}
                                           &\leqq C_4 \mathrm{vol}(\cX_\rho)\|f\|_{\mathcal{F}_M^+}^2\left(\mathbb{E}_{v\sim\tau}[v^{2\alpha}]+\sum_{m=0}^{M-1}\frac{1}{Mq_\lambda^\ast(v_m)}v_m^{2\alpha}\right),
  \end{align}
  where~\eqref{seq:C_4} with a constant factor $C_4>0$ follows from the fact that $\varphi$ is a random Fourier feature, i.e., $\sin$ and $\cos$.
  Since $\tau$ is Gaussian for the Gaussian kernel,
  the term $\mathbb{E}_{v\sim\tau}[v^{2\alpha}]$ in~\eqref{seq:C_4_2} is finite, i.e.,
  \begin{equation}
    \mathbb{E}_{v\sim\tau}[v^{2\alpha}]<\infty.
  \end{equation}
  Moreover, to bound the other term in~\eqref{seq:C_4_2}, i.e.,
  \begin{equation}
    \sum_{m=0}^{M-1}\frac{1}{Mq_\lambda^\ast(v_m)}v_m^{2\alpha},
  \end{equation}
  for $m\in\{0,\ldots,M-1\}$, define an IID random variable
  \begin{equation}
    X_m\coloneqq\frac{1}{q_\lambda^\ast(v_m)}v_m^{2\alpha}.
  \end{equation}
  Recall that the importance sampling of $v_m$ is performed according to $q_\lambda^\ast(v_m)d\tau(v_m)$, where the distribution $\tau$ is Gaussian; in addition, we have $2|\alpha|=2\sum_{d=1}^D\alpha^{(d)}=2m\geqq D\geqq 1$.
  Hence, $X_m$ is a sub-Gaussian random variable.
  Therefore, Hoeffding's inequality for the sub-Gaussian random variable yields~\citep{vershynin_2018}
  \begin{equation}
    \label{seq:concentration}
    \Pr\left\{\frac{1}{M}\sum_{m=0}^{M-1}X_m\geqq t\right\}\leqq 2\exp\left(-\frac{C_5t^2}{\frac{1}{M^2}\sum_{m=0}^{M-1}\|X_m\|_{\psi_2}^2}\right),
  \end{equation}
  where $C_5>0$ is a constant, and $\|X\|_{\psi_2}$ is given by
  \begin{equation}
    \|X\|_{\psi_2}\coloneqq\inf\left\{t>0:\mathbb{E}\left[\exp\left(\frac{X^2}{t^2}\right)\leqq 2\right]\right\}.
  \end{equation}
  Considering the variance of the Gaussian distribution $\tau$ to be constant, we have
  \begin{equation}
    \frac{1}{M^2}\sum_{m=0}^{M-1}\|X_m\|_{\psi_2}^2=O\left(\frac{1}{M}\right).
  \end{equation}
  Thus, with probability greater than $1-\epsilon$, it follows from~\eqref{seq:concentration} that
  \begin{equation}
    \frac{1}{M}\sum_{m=0}^{M-1}X_m\leqq O\left(\sqrt{\frac{1}{M}\log\left(\frac{1}{\epsilon}\right)}\right).
  \end{equation}
  Therefore, with probability greater than $1-\epsilon$, we obtain a bound of $\|f\|_{W^m(\cX)}$ in~\eqref{seq:4} given by
  \begin{equation}
    \|f\|_{W^m(\cX_\rho)}^2=O\left(\left(1+\sqrt{\frac{1}{M}\log\left(\frac{1}{\epsilon}\right)}\right)\|f\|_{\mathcal{F}_M^+}^2 \right),
  \end{equation}
  where $C_4$, $\mathrm{vol}(\cX_\rho)$, and $\mathbb{E}_{v\sim\tau}\left[v^{2\alpha}\right]$ in~\eqref{seq:4} are included in a constant factor.

  Consequently, by taking $p=\nicefrac{d}{2m}$, we have
  \begin{equation}
    \|f\|_{L^\infty(d\rho)}=O\left({\left(1+\sqrt{\frac{1}{M}\log\left(\frac{1}{\epsilon}\right)}\right)}^{\frac{p}{2}}\|f\|_{\mathcal{F}_M^+}^{p}\|f\|_{L^2(d\rho)}^{1-p}\right).
  \end{equation}
\end{proof}

\section{Approximation bound for stochastic gradient descent}%
\label{sec:analysis_of_stochastic_gradient_descent}

In this section, we analyze the SGD in Algorithm~\ref{salg:classification}.
Proposition~\ref{sprp:Gaussian} shows that we can bound the $L^\infty$ norm in~\eqref{seq:l_infty_condition} of Proposition~\ref{sprp:l_infty_bound} by bounding the $L^2$ norm on the left-hand side of~\eqref{seq:l_infty_condition_in_l_2}, and hence toward our goal of bounding the $L^\infty$ norm, it suffices to analyze the minimization of the loss in terms of the $L^2$ norm by the SGD\@.
Based on the approximation of functions with optimized random features in Proposition~\ref{sprp:complex_optimized_random_feature},
let $\hat{f}^\ast$ be a minimizer of the loss $\cL$, i.e.,
\begin{align}
\label{seq:hat_f_ast}
\hat{f}^\ast\coloneqq\argmin_{\hat{f}}&\Big\{\cL(\alpha):\hat{f}(x)=\sum_{m=0}^{M-1}\left(\alpha_{2m}\cos(-2\pi v_m\cdot x)+\alpha_{2m+1}\sin(-2\pi v_m\cdot x)\right),\nonumber\\
&\|\alpha\|_2\leqq\frac{2\sqrt{2}\|f^\ast\|_\mathcal{F}}{\sqrt{Mq_{\min}}}\Big\}\in\mathcal{F}_M,
\end{align}
where $\cL$ is defined as
\begin{equation}
  \label{seq:loss_alpha}
  \cL(\alpha)=\cL(\hat{f}_{v,\alpha})\coloneqq\mathbb{E}\Big[\Big|\sum_{m=0}^{M-1}\big(\alpha_{2m}\cos(-2\pi v_m\cdot X)+\alpha_{2m+1}\sin(-2\pi v_m\cdot X)\big)-Y\Big|^2\Big].
\end{equation}
Also let $\hat{f}^\ast_\lambda$ be a minimizer of the regularized loss $\cL_\lambda$, i.e.,
\begin{align}
\label{seq:hat_f_ast_lambda}
\hat{f}^\ast_\lambda\coloneqq\argmin_{\hat{f}}&\Big\{\cL_\lambda(\alpha):\hat{f}(x)=\sum_{m=0}^{M-1}\left(\alpha_{2m}\cos(-2\pi v_m\cdot x)+\alpha_{2m+1}\sin(-2\pi v_m\cdot x)\right),\nonumber\\
                                              &\|\alpha\|_2\leqq\frac{2\sqrt{2}\|f^\ast\|_\mathcal{F}}{\sqrt{Mq_{\min}}}\Big\}\in\mathcal{F}_M,
\end{align}
where $\cL_\lambda$ is defined as
\begin{equation}
  \label{seq:regularized_loss_alpha}
  \cL_\lambda(\alpha)=\cL_\lambda(\hat{f}_{v,\alpha})\coloneqq\cL(\alpha)+\lambda Mq_{\min}\|\alpha\|_2^2.
\end{equation}
We bound $\|\hat{f}^\ast_\lambda-f^\ast\|_{L^2(d\rho_\cX)}$ in the $L^2$ norm as follows.

\begin{proposition}[\label{sthm:M}Bound on $\|\hat{f}^\ast_\lambda-f^\ast\|_{L^2(d\rho_\cX)}$]
  For any $\lambda>0$,
  if $M$ satisfies the condition~\eqref{seq:M_achievablity_complex} in Proposition~\ref{sprp:complex_optimized_random_feature},
  then it holds with probability greater than $1-\epsilon$ that
  \begin{equation}
    \|\hat{f}^\ast_\lambda-f^\ast\|_{L^2(d\rho_\cX)}^2+\lambda Mq_{\min}\|\alpha\|_2^2\leqq 12\lambda\|f^\ast\|_\mathcal{F}^2,
  \end{equation}
  where $\alpha$ is the coefficients of $\hat{f}^\ast_\lambda$ in~\eqref{seq:hat_f_ast_lambda}.
\end{proposition}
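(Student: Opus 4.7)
The plan is to combine the bias-variance decomposition of the square loss with Proposition~\ref{sprp:complex_optimized_random_feature}, using $\hat{f}^\ast_\lambda$'s optimality to compare it against the approximant guaranteed by that proposition. First, since $f^\ast(x) = \mathbb{E}[Y\mid X=x]$, the standard square-loss identity
\begin{equation}
\cL(\alpha) = \cL^\ast + \|\hat{f}_{v,\alpha} - f^\ast\|_{L^2(d\rho_\cX)}^2,\quad\cL^\ast\coloneqq\mathbb{E}[|Y - f^\ast(X)|^2],
\end{equation}
holds for every $\alpha$, so minimizing $\cL$ (or $\cL_\lambda$) over the feasible set is equivalent, up to the additive constant $\cL^\ast$, to minimizing the $L^2$ distance to $f^\ast$ (with the ridge penalty appended in the regularized case).

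Next, I would invoke Proposition~\ref{sprp:complex_optimized_random_feature} with $f = f^\ast\in\mathcal{F}$: provided $M$ satisfies~\eqref{seq:M_achievablity_complex}, with probability at least $1-\epsilon$ there exists a coefficient vector $\tilde\alpha\in\mathbb{R}^{2M}$ satisfying both the norm bound $\|\tilde\alpha\|_2 \leq 2\sqrt{2}\|f^\ast\|_\mathcal{F}/\sqrt{Mq_{\min}}$ and the approximation bound $\|\hat{f}_{v,\tilde\alpha} - f^\ast\|_{L^2(d\rho_\cX)}^2 \leq 4\lambda\|f^\ast\|_\mathcal{F}^2$. Crucially, this $\tilde\alpha$ lies inside the feasible set of the constrained optimization problem~\eqref{seq:hat_f_ast_lambda} defining $\hat{f}^\ast_\lambda$, which is exactly the reason Proposition~\ref{sprp:complex_optimized_random_feature} was set up with the norm bound~\eqref{seq:constraint_complex}.

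The main step is then a one-line comparison. Since $\hat{f}^\ast_\lambda$ minimizes $\cL_\lambda$ over the feasible set and $\tilde\alpha$ is feasible,
\begin{equation}
\cL(\alpha^\ast_\lambda) + \lambda M q_{\min}\|\alpha^\ast_\lambda\|_2^2 \;\leq\; \cL(\tilde\alpha) + \lambda M q_{\min}\|\tilde\alpha\|_2^2.
\end{equation}
Subtracting $\cL^\ast$ from both sides and using the bias-variance identity on each side converts this into
\begin{equation}
\|\hat{f}^\ast_\lambda - f^\ast\|_{L^2(d\rho_\cX)}^2 + \lambda M q_{\min}\|\alpha^\ast_\lambda\|_2^2 \;\leq\; \|\hat{f}_{v,\tilde\alpha} - f^\ast\|_{L^2(d\rho_\cX)}^2 + \lambda M q_{\min}\|\tilde\alpha\|_2^2.
\end{equation}
Plugging in the two bounds guaranteed by Proposition~\ref{sprp:complex_optimized_random_feature} gives $\leq 4\lambda\|f^\ast\|_\mathcal{F}^2 + \lambda M q_{\min}\cdot 8\|f^\ast\|_\mathcal{F}^2/(Mq_{\min}) = 12\lambda\|f^\ast\|_\mathcal{F}^2$, which is precisely the claimed bound.

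\textbf{Main obstacle.} There is no serious analytic obstacle: the argument is a standard regularized-ERM comparison. The only subtlety, which is the whole point of Proposition~\ref{sprp:complex_optimized_random_feature} over the earlier Lemma~\ref{slem:bach}, is that the comparator $\tilde\alpha$ must be feasible for the constraint set defined via the data-independent hyperparameter $q_{\min}$ rather than via the unknown values $q_\lambda^\ast(v_m)$. Checking that the normalization in~\eqref{seq:constraint_complex} exactly matches the penalty coefficient $\lambda M q_{\min}$ in~\eqref{seq:regularized_loss_alpha}, so that the penalty term contributes only $8\lambda\|f^\ast\|_\mathcal{F}^2$, is the part I would verify with care; once that matching is confirmed, the constant $12$ in the statement falls out directly.
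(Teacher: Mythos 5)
Your proof is correct and follows essentially the same route as the paper: invoke Proposition~\ref{sprp:complex_optimized_random_feature} to produce a feasible comparator, observe that $\cL$ and the squared $L^2$ distance to $f^\ast$ differ only by a constant (the paper verifies this by the explicit expansion in~\eqref{seq:minimizer_loss}--\eqref{seq:minimizer_loss_ast}, you by the equivalent bias--variance identity), and compare the regularized minimizer against that comparator to get $4\lambda\|f^\ast\|_\mathcal{F}^2+8\lambda\|f^\ast\|_\mathcal{F}^2=12\lambda\|f^\ast\|_\mathcal{F}^2$. The only cosmetic difference is that the paper routes the comparison through the unregularized minimizer $\hat{f}^\ast$ of~\eqref{seq:hat_f_ast} as an intermediate step, whereas you compare $\hat{f}^\ast_\lambda$ directly to the approximant $\tilde\alpha$; both yield the identical constant.
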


\begin{proof}
  Due to $f^\ast\in\mathcal{F}$, Proposition~\ref{sprp:complex_optimized_random_feature} shows that with probability greater than $1-\epsilon$, there exists
  \begin{equation}
    \hat{f}(x)=\sum_{m=0}^{M-1}\left(\alpha_{2m}\cos\left(-2\pi v_m\cdot x\right)+\alpha_{2m+1}\sin\left(-2\pi v_m\cdot x\right)\right)\in\mathcal{F}_M
  \end{equation}
  such that
  \begin{align}
    \label{seq:L_2_bound}
    \|\hat{f}-f^\ast\|_{L^2(d\rho_\cX)}^2&\leqq 4\lambda\|f^\ast\|_\mathcal{F}^2,\\
    \label{seq:condition_hat_f_norm}
    \|\alpha\|_2&\leqq\frac{2\sqrt{2}\|f^\ast\|_\mathcal{F}}{\sqrt{Mq_{\min}}}.
  \end{align}
  By definition~\eqref{seq:hat_f_ast} of $\hat{f}^\ast$,
  the function $\hat{f}^\ast$ minimizes
  \begin{align}
    \label{seq:minimizer_loss}
    &\mathbb{E}\left[\left|\hat{f}(X)-Y\right|^2\right]\nonumber\\
    &=\int_{\mathcal{X}\times\mathcal{Y}}d\rho(x,y){\left(\hat{f}(x)-y\right)}^2\nonumber\\
    &=\int_{\mathcal{X}\times\mathcal{Y}}d\rho(x,y){\left({(\hat{f}(x))}^2-2\hat{f}(x)y+y^2\right)}\nonumber\\
    &=\int_{\mathcal{X}}d\rho_\mathcal{X}(x)\left({(\hat{f}(x))}^2-2\hat{f}(x)(\rho(1|x)-\rho(-1|x))\right)+{\left(\int_{\mathcal{X}\times\mathcal{Y}}d\rho(x,y)y^2\right)}\nonumber\\
    &=\int_{\mathcal{X}}d\rho_\mathcal{X}(x)\left({(\hat{f}(x))}^2-2\hat{f}(x)f^\ast(x)\right)+\int_{\mathcal{X}\times\mathcal{Y}}d\rho(x,y)y^2,
  \end{align}
  where the second last line follows from the definition~\eqref{seq:f_optimal} of $f^\ast$.
  Since the last term $\int_{\mathcal{X}\times\mathcal{Y}}d\rho(x,y)y^2$ is a constant, $\hat{f}^\ast$ is also a minimizer of
  \begin{align}
    \label{seq:minimizer_loss_ast}
    &\left\|\hat{f}-f^\ast\right\|_{L^2(d\rho_\cX)}^2\\
    &=\mathbb{E}\left[\left|\hat{f}(X)-f^\ast(X)\right|^2\right]\\
    &=\int_{\mathcal{X}}d\rho_\mathcal{X}(x)\left({(\hat{f}(x))}^2-2\hat{f}(x)f^\ast(x)\right)+{\left(\int_{\mathcal{X}}d\rho_\cX(x){(f^\ast(x))}^2\right)}.
  \end{align}
  Therefore, due to~\eqref{seq:L_2_bound}, we have
  \begin{equation}
    \|\hat{f}^\ast-f^\ast\|_{L^2(d\rho_\cX)}^2\leqq \|\hat{f}-f^\ast\|_{L^2(d\rho_\cX)}^2\leqq 4\lambda\|f^\ast\|_\mathcal{F}^2.
  \end{equation}

  Consequently, using the same observation as~\eqref{seq:minimizer_loss} and~\eqref{seq:minimizer_loss_ast}, the minimizer $\hat{f}^\ast_\lambda$ in~\eqref{seq:hat_f_ast_lambda} of the regularized testing loss satisfies
  \begin{align}
    \label{seq:hat_f_ast_lambda_f_ast_bound}
    &\|\hat{f}^\ast_\lambda-f^\ast\|_{L^2(d\rho_\cX)}^2+\lambda Mq_{\min}\|\alpha(\hat{f}^\ast_\lambda)\|_2^2\\
    &\leqq\|\hat{f}^\ast-f^\ast\|_{L^2(d\rho_\cX)}^2+\lambda Mq_{\min}\|\alpha(\hat{f}^\ast)\|_2^2\\
    &\leqq 4\lambda\|f^\ast\|_\mathcal{F}^2+8\lambda\|f^\ast\|_\mathcal{F}^2\\
    &=12\lambda\|f^\ast\|_\mathcal{F}^2,
  \end{align}
  where $\alpha(\hat{f}^\ast_\lambda)$ and $\alpha(\hat{f}^\ast)$ are coefficients $\alpha$ for $\hat{f}^\ast_\lambda$ in~\eqref{seq:hat_f_ast_lambda} and $\hat{f}^\ast$ in~\eqref{seq:hat_f_ast}, respectively.
  Thus, we have the conclusion.
\end{proof}

Moreover, we show that we can use the bound on $\|\hat{f}^\ast_\lambda-f^\ast\|_{L^2(d\rho_\cX)}$ in Proposition~\ref{sthm:M} to bound $\|\hat{f}_{v,\alpha}-f^\ast\|_{L^2(d\rho_\cX)}$,
where $\hat{f}_{v,\alpha}$ is a function in the form of~\eqref{seq:estimate} described by $v_0,\ldots,v_{M-1}$ and $\alpha_0,\ldots,\alpha_{2M-1}$ obtained from Algorithm~\ref{salg:classification}.
The analysis is based on the bound shown by~\citet{H3} on an estimate of the solution $\hat{f}^\ast_\lambda\in\mathcal{F}_M$ in the minimization of~\eqref{seq:hat_f_ast_lambda} by the SGD in Algorithm~\ref{salg:classification} after $N$ iterations.

\begin{proposition}[\label{sthm:T}Bound on $\|\hat{f}_{v,\alpha}-f^\ast\|_{L^2(d\rho_\cX)}$]
  Let $\hat{f}_{v,\alpha}\in\mathcal{F}_M$ be a function in the form of~\eqref{seq:estimate} described by $v_0,\ldots,v_{M-1}$ and $\alpha_0,\ldots,\alpha_{2M-1}$ obtained from Algorithm~\ref{salg:classification}.
  Then,
  it holds with high probability greater than $1-\epsilon$ that
  \begin{equation}
    \label{seq:condition_hat_f_ast}
    \|\hat{f}_{v,\alpha}-f^\ast\|_{L^2(d\rho_\cX)}=O\left(\sqrt{\frac{1}{N}\times\frac{\|f^\ast\|_\mathcal{F}^2}{\lambda q_{\min}^2}\log\left(\frac{1}{\epsilon}\right)+12\lambda\|f^\ast\|_\mathcal{F}^2}\right).
  \end{equation}
\end{proposition}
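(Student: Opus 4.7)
\textit{Plan of proof.} My strategy is to decompose $\|\hat{f}_{v,\alpha}-f^\ast\|_{L^2(d\rho_\cX)}^2$ into an approximation-error piece, which is already controlled by Proposition~\ref{sthm:M}, and an optimization-error piece, which I will control via a high-probability SGD bound for strongly convex objectives. The pivotal identity is that for square loss, $\cL(f)-\cL(f^\ast)=\|f-f^\ast\|_{L^2(d\rho_\cX)}^2$, since $f^\ast(x)=\mathbb{E}[Y|X=x]$ makes the cross term vanish. Combining this with the definition $\cL_\lambda(\alpha)=\cL(\alpha)+\lambda M q_{\min}\|\alpha\|_2^2$ and using Proposition~\ref{sthm:M} in the form $\|\hat{f}^\ast_\lambda-f^\ast\|_{L^2(d\rho_\cX)}^2+\lambda M q_{\min}\|\alpha^\ast_\lambda\|_2^2\leqq 12\lambda\|f^\ast\|_\mathcal{F}^2$, a direct rearrangement yields
\[
\|\hat{f}_{v,\alpha}-f^\ast\|_{L^2(d\rho_\cX)}^2\leqq\bigl[\cL_\lambda(\hat{f}_{v,\alpha})-\cL_\lambda(\hat{f}^\ast_\lambda)\bigr]+12\lambda\|f^\ast\|_\mathcal{F}^2,
\]
so the remaining work is to bound the regularized-loss gap achieved by SGD after $N$ iterations with suffix averaging.

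\textit{SGD bound.} The regularized loss $\cL_\lambda$ is $\mu$-strongly convex in $\alpha$ with $\mu=\lambda M q_{\min}$, and the iterates are constrained to the ball $\mathcal{W}$ of radius $2\sqrt{2}\|f^\ast\|_\mathcal{F}/\sqrt{Mq_{\min}}$ (see~\eqref{seq:alpha}), so the high-probability convergence bound of~\citet{H3} applies: with probability at least $1-\epsilon$,
\[
\cL_\lambda(\hat{f}_{v,\alpha})-\cL_\lambda(\hat{f}^\ast_\lambda)=O\!\left(\frac{G^2\log(1/\epsilon)}{\mu N}\right),
\]
where $G$ is an almost-sure upper bound on $\|\hat{g}^{(t)}\|_2$. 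I will bound $G$ as follows. From~\eqref{seq:unbiased_estimate}, writing $\Phi(x)$ for the $2M$-dimensional vector of $\cos/\sin$ features, we have $\|\Phi(x_t)\|_2^2=M$; the prefactor satisfies $|C(\alpha^{(t)})|\leqq 2(1+\sqrt{2M}\,\|\alpha^{(t)}\|_2)=O(1+\|f^\ast\|_\mathcal{F}/\sqrt{q_{\min}})$ on $\mathcal{W}$; and the regularization part contributes $2\lambda M q_{\min}\|\alpha^{(t)}\|_2=O(\lambda\sqrt{Mq_{\min}}\,\|f^\ast\|_\mathcal{F})$. Collecting these gives $G^2=O(M\|f^\ast\|_\mathcal{F}^2/q_{\min})$ in the relevant regime, so
\[
\frac{G^2}{\mu N}=O\!\left(\frac{\|f^\ast\|_\mathcal{F}^2}{\lambda q_{\min}^2 N}\right),
\]
which, substituted into the decomposition above, gives exactly the claimed bound after taking a square root.

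\textit{Putting it together.} The final step is just to combine the two displayed bounds inside a single square root, noting that the high-probability event for SGD has probability at least $1-\epsilon$ and that Proposition~\ref{sthm:M}'s deterministic bound on $\hat{f}^\ast_\lambda$ holds on the (independent) event where the RF sampling succeeds; a union bound lets us absorb constants into $O(\cdot)$.

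\textit{Main obstacle.} The trickiest part is the uniform bound on $\|\hat{g}^{(t)}\|_2$ and verifying that the $M$-dependence genuinely cancels: the gradient norm scales as $\sqrt{M}$ through $\|\Phi(x)\|_2$, while the strong convexity parameter scales as $M$ through $\mu=\lambda M q_{\min}$, so $G^2/\mu$ is $M$-independent, which is what makes the SGD convergence rate dimension-free in $M$. Care is also needed that the projection onto $\mathcal{W}$ does not alter the~\citet{H3} analysis, which requires $\mathcal{W}$ to contain the unconstrained minimizer of $\cL_\lambda$; this is ensured because $\hat{f}^\ast_\lambda$ is, by construction in~\eqref{seq:hat_f_ast_lambda}, taken as the minimizer over that ball, and~\eqref{seq:alpha} was chosen precisely to accommodate the minimizer arising from Proposition~\ref{sprp:complex_optimized_random_feature}.
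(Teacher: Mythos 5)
Your proposal is correct and follows essentially the same route as the paper's proof: the identity $\cL(f)-\cL(f^\ast)=\|f-f^\ast\|_{L^2(d\rho_\cX)}^2$ that you use is exactly the paper's device of passing to the shifted objective $\cL^\prime_\lambda$, and the rest — the high-probability bound of \citet{H3} with $\mu=\lambda Mq_{\min}$, the gradient bound $G=O(\sqrt{M}\|f^\ast\|_\mathcal{F}/\sqrt{q_{\min}})$ with the $M$-dependence cancelling in $G^2/\mu$, and the combination with Proposition~\ref{sthm:M} — matches the paper step for step. Your explicit accounting of the union bound over the RF-sampling and SGD events is, if anything, slightly more careful than the paper's.
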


\begin{proof}
  To analyze the left-hand side of~\eqref{seq:condition_hat_f_ast}, in place of the regularized testing loss $\cL_\lambda(\alpha)$ in~\eqref{seq:regularized_loss_alpha}, we use another function
  \begin{equation}
    \cL^\prime_\lambda(\alpha)\coloneqq\left\|\sum_{m=0}^{M-1}\left(\alpha_{2m}\cos(-2\pi v_m\cdot (\cdot))+\alpha_{2m+1}\sin(-2\pi v_m\cdot (\cdot))\right)-f^\ast\right\|_{L^2(d\rho_\cX)}^2+\lambda Mq_{\min}\|\alpha\|_2^2.
  \end{equation}
  As observed in~\eqref{seq:minimizer_loss} and~\eqref{seq:minimizer_loss_ast},
  $\cL^\prime_\lambda(\alpha)$ is different from $\cL_\lambda(\alpha)$ just by a constant term.
  Hence, $\hat{f}^\ast_\lambda\in\mathcal{F}_M$ defined in~\eqref{seq:hat_f_ast_lambda} as a minimizer of the regularized testing loss $\cL_\lambda$ is also a minimizer of $\cL^\prime_\lambda(\alpha)$, that is,
  \begin{align}
    \hat{f}^\ast_\lambda=\argmin_{\hat{f}}&\Big\{\cL^\prime_\lambda(\alpha):\hat{f}=\sum_{m=0}^{M-1}\left(\alpha_{2m}\cos(-2\pi v_m\cdot (\cdot))+\alpha_{2m+1}\sin(-2\pi v_m\cdot (\cdot))\right),\\
                        &\|\alpha\|_2\leqq\frac{2\sqrt{2}\|f\|_\mathcal{F}}{\sqrt{Mq_{\min}}}\Big\}\in\mathcal{F}_M.
  \end{align}
  Moreover, the gradients of $\cL_\lambda(\alpha)$ and $\cL^\prime_\lambda(\alpha)$ coincide; that is, for any $\alpha\in\mathbb{R}^{2M}$, we have
  \begin{equation}
    \nabla\cL^\prime_\lambda(\alpha)=\nabla\cL_\lambda(\alpha),
  \end{equation}
  and hence,
  with
  \begin{equation}
    \label{seq:prefactor}
    C\Big(\alpha^{(t)}\Big)\coloneqq 2\left(\sum_{m=0}^{M-1}\left(\alpha_{2m}^{(t)}\cos(-2\pi v_m\cdot x_t)+\alpha_{2m+1}^{(t)}\sin(-2\pi v_m\cdot x_t)\right)-y_t\right),
  \end{equation}
  the unbiased estimate $\hat{g}^{(t)}$ of $\nabla\cL_\lambda(\alpha)$
  \begin{equation}
    \label{seq:unbiased_estimate}
    \hat{g}^{(t)}=C\left(\alpha^{(t)}\right)\left(\begin{matrix}
        \cos(-2\pi v_0\cdot x_t)\\
        \sin(-2\pi v_0\cdot x_t)\\
        \vdots\\
        \cos(-2\pi v_{M-1}\cdot x_t)\\
        \sin(-2\pi v_{M-1}\cdot x_t)
        \end{matrix}\right)+2\lambda Mq_{\min}\left(\begin{matrix}
        \alpha^{(t)}_0\\
        \alpha^{(t)}_1\\
        \vdots\\
        \alpha^{(t)}_{2M-2}\\
        \alpha^{(t)}_{2M-1}
    \end{matrix}\right)
    \in\mathbb{R}^{2M}
  \end{equation}
  also serves as an unbiased estimate of $\nabla\cL^\prime_\lambda(\alpha^{(t)})$ for each $t\in\{0,\ldots,N-1\}$.
  Thus, the SGD in Algorithm~\ref{salg:classification} is equivalent to minimization of $\cL^\prime_\lambda(\alpha)$.

  We here bound
  \begin{equation}
    \cL^\prime_\lambda(\alpha)-\cL^\prime_\lambda(\alpha^\ast),
  \end{equation}
  where $\alpha$ in the following is the coefficient after the $N$ iterations of the SGD in Algorithm~\ref{salg:classification}, and we let $\alpha^\ast\in\mathcal{W}$ denote the coefficients of $\hat{f}^\ast_\lambda$.
  To bound this, we use an upper bound of the number of iterations in SGD with suffix averaging given by~\citet{H3} in terms of high probability, which shows that if we have the following:
  \begin{itemize}
    \item $\cL^\prime_\lambda$ is $\mu$-strongly convex;
    \item for any $\alpha\in\mathcal{W}$,
      \begin{align}
        \left\|\nabla \cL^\prime_\lambda(\alpha)\right\|_2\leqq L;
      \end{align}
    \item the unbiased estimate $\hat{g}$ for any point $\alpha\in\mathcal{W}$ almost surely satisfies
      \begin{align}
        \left\|\hat{g}\right\|_2\leqq L;
      \end{align}
  \end{itemize}
  then, after $N$ iterations, with high probability greater than $1-\epsilon$, the final $N$th iterate of the SGD in Algorithm~\ref{salg:classification} returns $\alpha$ satisfying
  \begin{equation}
    \label{seq:bound_iteration}
    \cL^\prime_\lambda(\alpha)-\cL^\prime_\lambda(\alpha^\ast)=O\left(\frac{L^2}{\mu}\frac{\log\left(\frac{1}{\epsilon}\right)}{N}\right).
  \end{equation}
  In our case, due to the regularization term $\lambda Mq_{\min}\|\alpha\|_2^2$ of $\cL^\prime_\lambda$, $\mu$ can be given by
  \begin{equation}
    \label{seq:mu}
    \mu=\lambda Mq_{\min}.
  \end{equation}
  Note that for minimization of a $\mu$-strongly convex function,~\citet{H3} assumes to use step size scaling as
  \begin{equation}
    \label{seq:eta}
    \eta^{(t)}=O\left(\nicefrac{1}{\mu t}\right),
  \end{equation}
  which is indeed used in Algorithm~\ref{salg:classification}.
  As for $L$, the constraint~\eqref{seq:constraint_complex} of $\alpha$ yields
  \begin{align}
    \|\nabla\cL^\prime(\alpha^{(t)})\|_2=\|\nabla\cL(\alpha^{(t)})\|_2&= O\left(\sqrt{\sum_{m=0}^{2M-1}{\left(\sum_{m^\prime=0}^{2M-1}|\alpha_{m^\prime}|+\lambda Mq_{\min}\alpha_m\right)}^2}\right)\\
                                                                      &= O\left(\sqrt{M\|\alpha\|_1^2+\lambda Mq_{\min}\|\alpha\|_1^2+{(\lambda Mq_{\min})}^2\|\alpha\|_2^2}\right)\\
                                                                      &=O\left(\sqrt{M^2\|\alpha\|_2^2}\right)=O\left(\sqrt{M}\frac{\|f\|_\mathcal{F}}{\sqrt{q_{\min}}}\right),\\
    \|\hat{g}^{(t)}\|_2&=O\left(\sqrt{M^2\|\alpha\|_2^2}\right)=O\left(\sqrt{M}\frac{\|f\|_\mathcal{F}}{\sqrt{q_{\min}}}\right),
  \end{align}
  where we use $\|\alpha\|_1^2\leqq 2M\|\alpha\|_2^2$ and the fact that we consider the case of $\lambda q_{\min}=O(1)$.
  Thus, we have
  \begin{equation}
    \label{seq:L}
    L=O\left(\sqrt{M}\frac{\|f\|_\mathcal{F}}{\sqrt{q_{\min}}}\right).
  \end{equation}
  Consequently, from~\eqref{seq:mu} and~\eqref{seq:L}, we obtain
  \begin{equation}
    \cL^\prime(\alpha)-\cL^\prime(\alpha^\ast)=O\left(\frac{\log\left(\frac{1}{\epsilon}\right)}{N}\times\frac{\|f\|_\mathcal{F}^2}{\lambda q_{\min}^2}\right),
  \end{equation}
  that is,
  \begin{align}
    \label{seq:hat_f_T_f_ast_bound}
    &\|\hat{f}_{v,\alpha}-f^\ast\|_{L^2(d\rho_\cX)}^2+\lambda Mq_{\min}\|\alpha(\hat{f}_{v,\alpha})\|_2^2\\
    &\leqq\frac{1}{N}\times\frac{\|f\|_\mathcal{F}^2}{\lambda q_{\min}^2}\log\left(\frac{1}{\epsilon}\right)+\|\hat{f}^\ast_\lambda-f^\ast\|_{L^2(d\rho_\cX)}^2+\lambda Mq_{\min}\|\alpha(\hat{f}^\ast_\lambda)\|_2^2\\
    &\leqq\frac{1}{N}\times\frac{\|f\|_\mathcal{F}^2}{\lambda q_{\min}^2}\log\left(\frac{1}{\epsilon}\right)+12\lambda\|f^\ast\|_\mathcal{F}^2,
  \end{align}
  where $\alpha(\hat{f}_{v,\alpha})$ and $\alpha(\hat{f}^\ast_\lambda)$ in the same way as~\eqref{seq:hat_f_ast_lambda_f_ast_bound} are coefficients $\alpha$ of $\hat{f}_{v,\alpha}$ and $\hat{f}^\ast_\lambda$, respectively, and the last inequality follows from Proposition~\ref{sthm:M}.
  Therefore, we have
  \begin{equation}
    \|\hat{f}_{v,\alpha}-f^\ast\|_{L^2(d\rho_\cX)}=O\left(\sqrt{\frac{1}{N}\times\frac{\|f\|_\mathcal{F}^2}{\lambda q_{\min}^2}\log\left(\frac{1}{\epsilon}\right)+12\lambda\|f^\ast\|_\mathcal{F}^2}\right).
  \end{equation}
\end{proof}

Consequently, by combining Propositions~\ref{sprp:complex_optimized_random_feature},~\ref{sprp:Gaussian} and~\ref{sthm:T}, we obtain the following bound in terms of the $L^\infty$ norm.

\begin{proposition}[\label{sthm:T_infty}Bound on $\|\hat{f}_{v,\alpha}-f^\ast\|_{L^\infty(d\rho_\cX)}$]
  Fix an arbitrarily small $p\in(0,1)$.
  For any $\epsilon>0$,
  it holds with high probability greater than $1-\epsilon$ that
  \begin{align}
    \|\hat{f}_{v,\alpha}-f^\ast\|_{L^\infty(d\rho_\cX)}=O\left({\left(\|\hat{f}_{v,\alpha}\|_{\mathcal{F}_M}+\|f^\ast\|_{\mathcal{F}}\right)}^{p}{\left(\frac{1}{N}\times\frac{\|f\|_\mathcal{F}^2}{\lambda q_{\min}^2}\log\left(\frac{1}{\epsilon}\right)+12\lambda\|f^\ast\|_{\mathcal{F}}^2\right)}^{\frac{1-p}{2}}\right).
  \end{align}
\end{proposition}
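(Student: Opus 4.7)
The plan is to chain together Propositions~\ref{sprp:Gaussian} and~\ref{sthm:T} applied to the difference function $h\coloneqq\hat{f}_{v,\alpha}-f^\ast$, which belongs to $\mathcal{F}_M^+$ because $\hat{f}_{v,\alpha}\in\mathcal{F}_M$ and $f^\ast\in\mathcal{F}$ and the RKHS of the sum kernel $k+k_M$ contains both summands. The strategy is essentially: translate the $L^2$-bound from SGD into an $L^\infty$-bound via the interpolation inequality of Proposition~\ref{sprp:Gaussian}, while controlling the RKHS factor using the subadditivity of the sum-RKHS norm.

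First, I would apply Proposition~\ref{sprp:Gaussian} to $h\in\mathcal{F}_M^+$, yielding with probability at least $1-\epsilon/2$ (after rescaling $\epsilon$) a bound of the form
\begin{equation*}
\|h\|_{L^\infty(d\rho_\cX)} = O\!\left({\bigl(1+\sqrt{\tfrac{1}{M}\log(\tfrac{1}{\epsilon})}\bigr)}^{p/2}\|h\|_{\mathcal{F}_M^+}^{p}\|h\|_{L^2(d\rho_\cX)}^{1-p}\right).
\end{equation*}
Next, from the characterization~\eqref{seq:rkhs_norm_f_M_plus} of the sum-RKHS norm, taking the explicit decomposition $h=(-f^\ast)+\hat{f}_{v,\alpha}$ with $-f^\ast\in\mathcal{F}$ and $\hat{f}_{v,\alpha}\in\mathcal{F}_M$ gives
\begin{equation*}
\|h\|_{\mathcal{F}_M^+}\leqq\|\hat{f}_{v,\alpha}\|_{\mathcal{F}_M}+\|f^\ast\|_{\mathcal{F}}.
\end{equation*}
Then I would apply Proposition~\ref{sthm:T}, which provides (with probability at least $1-\epsilon/2$, again after rescaling) the $L^2$-bound
$\|h\|_{L^2(d\rho_\cX)}^2 = O\!\bigl(\tfrac{1}{N}\cdot\tfrac{\|f^\ast\|_\mathcal{F}^2}{\lambda q_{\min}^2}\log(\tfrac{1}{\epsilon})+12\lambda\|f^\ast\|_\mathcal{F}^2\bigr)$. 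Substituting both bounds into the interpolation inequality and taking a union bound over the two events yields the claim at confidence $1-\epsilon$.

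The main technical subtlety is handling the prefactor ${(1+\sqrt{\log(1/\epsilon)/M})}^{p/2}$, which does not appear in the target statement of Proposition~\ref{sthm:T_infty}. The resolution is that this factor is absorbed into the $O(\cdot)$: in the regime of interest $M=\Omega(\log(1/\epsilon))$ (which is enforced anyway by the requirement~\eqref{seq:M_achievablity_complex} in Proposition~\ref{sprp:complex_optimized_random_feature} that enters Proposition~\ref{sthm:T}), the prefactor is $O(1)$ and hence may be dropped. One must also check that Proposition~\ref{sprp:complex_optimized_random_feature} (invoked inside Proposition~\ref{sthm:T}) and Proposition~\ref{sprp:Gaussian} are simultaneously in force by a union bound over the same random draw of optimized RFs.

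The hard part will be the bookkeeping of the three independent high-probability events — the event that sampling of optimized RFs yields~\eqref{seq:error_bound_complex}, the event governing the concentration used in Proposition~\ref{sprp:Gaussian}, and the event of SGD convergence in Proposition~\ref{sthm:T} — so that a single $\epsilon$-union bound captures all of them while preserving the poly-logarithmic dependence on $1/\epsilon$ that is essential to the eventual exponential error convergence. Beyond this accounting, the remaining steps are algebraic substitution and absorption of constants into the $O$-notation.
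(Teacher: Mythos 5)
Your proposal is correct and follows essentially the same route as the paper's proof: interpolate via Proposition~\ref{sprp:Gaussian}, bound $\|\hat{f}_{v,\alpha}-f^\ast\|_{\mathcal{F}_M^+}$ by $\|\hat{f}_{v,\alpha}\|_{\mathcal{F}_M}+\|f^\ast\|_{\mathcal{F}}$ using~\eqref{seq:rkhs_norm_f_M_plus}, substitute the $L^2$ bound from Proposition~\ref{sthm:T}, and absorb the prefactor ${\left(1+\sqrt{\nicefrac{\log(\nicefrac{1}{\epsilon})}{M}}\right)}^{\nicefrac{p}{2}}$ into the constant using the lower bound on $M$ from~\eqref{seq:M_achievablity_complex}. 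Your explicit attention to the union bound over the several high-probability events is in fact slightly more careful than the paper's presentation, which treats each event at level $1-\epsilon$ without rescaling.
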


\begin{proof}
  Due to Proposition~\ref{sprp:Gaussian}, it holds with probability greater than
  $1-\epsilon$ that
  \begin{align}
    &\|\hat{f}_{v,\alpha}-f^\ast\|_{L^\infty(d\rho_\cX)}\\
    &=O\left({\left(1+\sqrt{\frac{1}{M}\log\left(\frac{1}{\epsilon}\right)}\right)}^{\frac{p}{2}}\|\hat{f}_{v,\alpha}-f^\ast\|_{\mathcal{F}_M^+}^{p}\|\hat{f}_{v,\alpha}-f^\ast\|_{L^2(d\rho)}^{1-p}\right)\\
    \label{seq:1_}
    &=O\left({\left(1+\sqrt{\frac{1}{M}\log\left(\frac{1}{\epsilon}\right)}\right)}^{\frac{p}{2}}{\left(\|\hat{f}_{v,\alpha}\|_{\mathcal{F}_M}+\|f^\ast\|_{\mathcal{F}}\right)}^{p}\|\hat{f}_{v,\alpha}-f^\ast\|_{L^2(d\rho)}^{1-p}\right),
  \end{align}
  where~\eqref{seq:1_} follows from~\eqref{seq:rkhs_norm_f_M_plus}.
  As a result,
  with probability greater than $1-\epsilon$,
  we obtain from Proposition~\ref{sthm:T}
  \begin{align}
    &\|\hat{f}_{v,\alpha}-f^\ast\|_{L^\infty(d\rho_\cX)}=\\
    &O\left({\left(1+\sqrt{\frac{1}{M}\log\left(\frac{1}{\epsilon}\right)}\right)}^{\frac{p}{2}}{\left(\|\hat{f}_{v,\alpha}\|_{\mathcal{F}_M}+\|f^\ast\|_{\mathcal{F}}\right)}^{p}{\left(\frac{1}{N}\times\frac{\|f\|_\mathcal{F}^2}{\lambda q_{\min}^2}\log\left(\frac{1}{\epsilon}\right)+12\lambda\|f^\ast\|_{\mathcal{F}}^2\right)}^{\frac{1-p}{2}}\right).
  \end{align}
  This bound holds for any $M$ satisfying the condition~\eqref{seq:M_achievablity_complex} in Proposition~\ref{sprp:complex_optimized_random_feature}, i.e.,
  \begin{equation}
    M=\Omega\left(d(\lambda)\log\left(\frac{d(\lambda)}{\epsilon}\right)\right).
  \end{equation}
  In this case, we have
  \begin{align}
    \label{seq:M_loose_bound}
    &{\left(1+\sqrt{\frac{1}{M}\log\left(\frac{1}{\epsilon}\right)}\right)}^{\frac{p}{2}}\\
    &=O\left({\left(1+\sqrt{\frac{1}{d(\lambda)\log\left(\frac{d(\lambda)}{\epsilon}\right)}\log\left(\frac{1}{\epsilon}\right)}\right)}^{\frac{p}{2}}\right)\\
    &=O\left({\left(1+\sqrt{\frac{1}{d(\lambda)}}\right)}^{\frac{p}{2}}\right)\\
    &=O\left(1\right),
  \end{align}
  which yields
  \begin{align}
    &\|\hat{f}_{v,\alpha}-f^\ast\|_{L^\infty(d\rho_\cX)}\\
    &=O\left({\left(\|\hat{f}_{v,\alpha}\|_{\mathcal{F}_M}+\|f^\ast\|_{\mathcal{F}}\right)}^{p}{\left(\frac{1}{N}\times\frac{\|f^\ast\|_\mathcal{F}^2}{\lambda q_{\min}^2}\log\left(\frac{1}{\epsilon}\right)+12\lambda\|f^\ast\|_{\mathcal{F}}^2\right)}^{\frac{1-p}{2}}\right).
  \end{align}
\end{proof}

\section{Proof of Theorem~1}%
\label{sec:proof_of_theorem_1}

Using Propositions~\ref{sprp:l_infty_bound} and~\ref{sthm:T_infty},
we show Theorem~\ref{sthm:generalization} on the generalization property of optimized RF in Algorithm~\ref{salg:classification}.

\begin{proof}[Proof of Theorem~\ref{sthm:generalization}]
  We analyze $\lambda$ and $N$ since the bound of $M$ immediately follows from Proposition~\ref{sprp:complex_optimized_random_feature}.
  To bound the left-hand side of~\eqref{seq:expected_excess_testing_error} using Proposition~\ref{sprp:l_infty_bound},
  we bound
  \begin{equation}
    \|\hat{f}_{v,\alpha}-f^\ast\|_{L^\infty(d\rho_\cX)},
  \end{equation}
  where $\hat{f}_{v,\alpha}$ is the function obtained by Algorithm~\ref{salg:classification} after $N$ iterations of the SGD as in Proposition~\ref{sthm:T_infty}.
  Then, as we have shown in Proposition~\ref{sthm:T_infty}, it holds with probability greater than $1-\epsilon$ that
  \begin{align}
    \label{seq:inf_norm_bound}
    &\|\hat{f}_{v,\alpha}-f^\ast\|_{L^\infty(d\rho_\cX)}=O\left({\left(\|\hat{f}_{v,\alpha}\|_{\mathcal{F}_M}+\|f^\ast\|_{\mathcal{F}}\right)}^{p}{\left(\frac{1}{N}\times\frac{\|f^\ast\|_\mathcal{F}^2}{\lambda q_{\min}^2}\log\left(\frac{1}{\epsilon}\right)+12\lambda\|f^\ast\|_{\mathcal{F}}^2\right)}^{\frac{1-p}{2}}\right).
  \end{align}
  To achieve the assumption~\eqref{seq:l_infty_condition} of Proposition~\ref{sprp:l_infty_bound}, i.e.,
  \begin{equation}
    \label{seq:bound_infinity_norm}
    \|\hat{f}_{v,\alpha}-f^\ast\|_{L^\infty(d\rho_\cX)}<\delta,
  \end{equation}
  we want to set $\lambda$ and $N$ in such a way that we have
  \begin{align}
    \label{seq:bound_lambda}
    \lambda&=O\left(\frac{\delta^{2}}{\|f^\ast\|_\mathcal{F}^2}{\left(\frac{\|\hat{f}_{v,\alpha}\|_{\mathcal{F}_M}+\|f^\ast\|_{\mathcal{F}}}{\delta}\right)}^{-\frac{2p}{1-p}}\right),\\
    \label{seq:bound_T}
    N&=O\left(\frac{\|f^\ast\|_\mathcal{F}^2\log\left(\frac{1}{\epsilon}\right)}{\delta^{2}\lambda q_{\min}^2}{\left(\frac{\|\hat{f}_{v,\alpha}\|_{\mathcal{F}_M}+\|f^\ast\|_{\mathcal{F}}}{\delta}\right)}^{\frac{2p}{1-p}}\right)\\
     &=O\left(\frac{\|f^\ast\|_\mathcal{F}^4\log\left(\frac{1}{\epsilon}\right)}{\delta^{4} q_{\min}^2}{\left(\frac{\|\hat{f}_{v,\alpha}\|_{\mathcal{F}_M}+\|f^\ast\|_{\mathcal{F}}}{\delta}\right)}^{\frac{4p}{1-p}}\right).
  \end{align}
  If these inequalities are satisfied, then Proposition~\ref{sprp:l_infty_bound} yields
  \begin{equation}
    \mathbb{E}\left[\cR(\hat{f})-\cR^\ast\right]\leqq\epsilon,
  \end{equation}
  as desired.
  To complete the proof, we need to eliminate $\|\hat{f}_{v,\alpha}\|_{\mathcal{F}_M}$ appearing in these inequalities.

  We here prove that $\|\hat{f}_{v,\alpha}\|_{\mathcal{F}_M}$ in~\eqref{seq:inf_norm_bound} does not blow up to infinity and indeed can be upper bounded explicitly.
  By definition~\eqref{seq:rkhs_norm_F_M}, we have
  \begin{equation}
    \|\hat{f}_{v,\alpha}\|_{\mathcal{F}_M}\leqq\sqrt{\sum_{m=0}^{M-1}Mq_\lambda^\ast(v_m)\alpha_m^2}.
  \end{equation}
  Similarly to $q_{\min}$ in~\eqref{seq:q_min}, we here write
  \begin{equation}
    q_{\max}\coloneqq\max\{q_\lambda^\ast(v_m):M=0,\ldots,M-1\},
  \end{equation}
  so the right-hand side of the last inequality can be bounded by
  \begin{equation}
    \sqrt{\sum_{m=0}^{M-1}Mq_\lambda^\ast(v_m)\alpha_m^2}\leqq\sqrt{\sum_{m=0}^{M-1}Mq_{\max}\alpha_m^2}.
  \end{equation}
  Due to~\eqref{seq:constraint_complex}, the right-hand side can be bounded by
  \begin{equation}
    \sqrt{\sum_{m=0}^{M-1}Mq_{\max}\alpha_m^2}\leqq 2\sqrt{2}\|f^\ast\|_{\mathcal{F}}\sqrt{\frac{q_{\max}}{q_{\min}}}.
  \end{equation}
  An upper bound of $q_{\max}$ is
  \begin{equation}
    \label{seq:q_max_bound}
    q_{\max}\leqq\max\{q_\lambda^\ast(v):v\in\mathcal{V}\},
  \end{equation}
  where the optimized distribution $q_\lambda^\ast(v)$ is defined as~\citep{B1}
  \begin{equation}
    q_\lambda^\ast(v)=\frac{\Braket{\varphi(v,\cdot)|{(\Sigma+\lambda\mathbbm{1})}^{-1}\varphi(v,\cdot)}_{L^2(d\rho_\mathcal{X})}}{d(\lambda)}.
  \end{equation}
  Since the largest eigenvalue of ${(\Sigma+\lambda\mathbbm{1})}^{-1}$ is upper bounded by $\nicefrac{1}{\lambda}$, we can evaluate the upper bound of $q_{\max}$ in~\eqref{seq:q_max_bound} as
  \begin{equation}
\label{seq:worst_case_bound}
    q_{\max}\leqq\frac{\nicefrac{1}{\lambda}}{d(\lambda)}=O\left(\frac{1}{\lambda}\right),
  \end{equation}
  where the right-hand side is obtained by considering the worst-case bound, i.e., the case where $d(\lambda)$ is as small as constant.
  Note that even in our advantageous cases discussed in the main text, we indeed have a better bound
  \begin{equation}
    d(\lambda)=O(\polylog(\nicefrac{1}{\lambda})),
  \end{equation}
  and then $q_{\max}$ is bounded by
  \begin{equation}
    q_{\max}=O\left(\frac{1}{\lambda\polylog(\nicefrac{1}{\lambda})}\right),
  \end{equation}
  while the use of the worst-case bound~\eqref{seq:worst_case_bound} suffices for the proof here.
  Thus, $\|\hat{f}_{v,\alpha}\|_{\mathcal{F}_M}$ is bounded by
  \begin{equation}
    \label{seq:bound_f}
    \|\hat{f}_{v,\alpha}\|_{\mathcal{F}_M}\leqq 2\sqrt{2}\|f^\ast\|_{\mathcal{F}}\sqrt{\frac{q_{\max}}{q_{\min}}}=O\left(\frac{\|f^\ast\|_{\mathcal{F}}}{\lambda\sqrt{q_{\min}}}\right).
  \end{equation}

  Consequently, due to~\eqref{seq:bound_f}, we can fulfill~\eqref{seq:bound_lambda} with $\lambda$ satisfying
  \begin{equation}
    \lambda^{\frac{1+p}{1-p}}=O\left(\frac{\delta^{2}}{\|f^\ast\|_\mathcal{F}^2}{\left(\frac{\|f^\ast\|_{\mathcal{F}}}{\delta\sqrt{q_{\min}}}\right)}^{-\frac{2p}{1-p}}\right),
  \end{equation}
  and hence
  \begin{equation}
    \lambda=O\left(\frac{\delta^{2}}{\|f^\ast\|_\mathcal{F}^2}{\left(\frac{\delta}{\|f^\ast\|_{\mathcal{F}}\sqrt{q_{\min}}}\right)}^{-\frac{2p}{1+p}}\right).
  \end{equation}
  We can fulfill~\eqref{seq:bound_T} with $N$ satisfying
  \begin{equation}
    N=O\left(\frac{\|f^\ast\|_\mathcal{F}^4\log\left(\frac{1}{\epsilon}\right)}{\delta^{4} q_{\min}^2}{\left(\frac{\|f^\ast\|_{\mathcal{F}}}{\lambda\delta\sqrt{q_{\min}}}\right)}^{\frac{4p}{1-p}}\right),
  \end{equation}
  which yields the conclusion.

  We remark that if we ignore factors of an arbitrarily small degree $O(p)$ for simplicity,
  our bounds reduce to
  \begin{align}
    \lambda&=O\left(\frac{\delta^{2}}{\|f^\ast\|_\mathcal{F}^2}\right),\\
    N&=O\left(\log\left(\frac{1}{\epsilon}\right)\times\frac{\|f^\ast\|_\mathcal{F}^4}{\delta^{4} q_{\min}^2}\right).
  \end{align}
  In this case, the excess classification error is bounded by
  \begin{equation}
    \mathbb{E}\left[\cR(\hat{f})-\cR^\ast\right]=O\left(\exp\left(-N\times\frac{\delta^4 q_{\min}^2}{\|f^\ast\|_\mathcal{F}^{4}}\right)\right).
  \end{equation}
  As a result, the exponential convergence of the excess classification error~\eqref{seq:expected_excess_testing_error} in terms of $N$ is obtained.
\end{proof}

\section{\label{sec:feasibility}A feasible implementation of the input model of the quantum algorithm in our setting}

We here clarify the runtime of the input model of the quantum algorithm of~\citet{NEURIPS2020_9ddb9dd5} in our setting of classification.
As the input model, this quantum algorithm uses the preparation of a quantum state
\begin{equation}
  \label{eq:state}
  \sum_x\sqrt{\hat{\rho}_\mathcal{X}(x)}\Ket{x}
\end{equation}
whose amplitude represents the square root of the empirical distribution $\hat{\rho}_\mathcal{X}$ of the $N_0$ unlabeled examples.
We here explain that the runtime of this input model can be bounded by $O(D\polylog(N_0))$ including the runtime of quantum random access memory (QRAM)~\citep{PhysRevA.78.052310,PhysRevLett.100.160501} used for the implementation.
Note that, in our setting of $N_0\approx \nicefrac{1}{\Delta^2}$ with precision $\Delta$ used for fixed- or floating-point number representation, poly-logarithmic overhead factors such as $O(\polylog(N_0))=O(\polylog(\nicefrac{1}{\Delta}))$ are ignored to simplify the presentation and clarify more essential factors.
The data structure used for the implementation of this input model is well defined in the work by~\citet{kerenidis_et_al:LIPIcs:2017:8154}.
It is known that QRAM used in this input model is implementable at a poly-logarithmic depth by a parallelized quantum circuit shown by~\citet{PhysRevA.78.052310,PhysRevLett.100.160501,PRXQuantum.2.020311}.
A detailed explanation on the feasible implementation of the input model has also been given in Sec.~B in Supplementary Material of the original paper of this quantum algorithm by~\citet{NEURIPS2020_9ddb9dd5}.
However, we here summarize these facts to avoid any potential confusion about feasibly of the quantum algorithm with QRAM in our setting.

The runtime of the quantum algorithm of~\citet{NEURIPS2020_9ddb9dd5} in terms of $N_0$ is determined by a step for preparing the quantum state~\eqref{eq:state}, using $N_0$ unlabeled examples in our setting.
To achieve this state preparation feasibly, along with collecting the $N_0$ unlabeled examples, we are to perform a preprocessing to count the number of unlabeled examples and store the empirical distribution in a sparse binary tree data structure proposed by~\citet{kerenidis_et_al:LIPIcs:2017:8154}.
In particular, using fixed-point number representation with precision $\Delta$, we represent the set $\mathbb{R}$ of real numbers as $\{0,\pm\Delta,\pm2\Delta,\ldots\}$ using $O(\log(\nicefrac{1}{\Delta}))$ bits.
The $D$-dimensional space $\mathcal{X}=\mathbb{R}^D$ of input data is represented using $O(D\log(\nicefrac{1}{\Delta}))$ bits, which divides $\mathcal{X}$ into $O(\nicefrac{1}{\Delta^D})$ grid regions.
Each leaf of the sparse binary tree counts and stores the number of unlabeled examples in each grid region, in such a way that the sparse binary tree does not store leaves with zero unlabeled example; that is, the number of leaves stored in the sparse binary tree is at most $N_0$.
Each parent in the binary tree stores the sum of the counts in its children, so that the root should store the number of all the unlabeled examples, i.e., $N_0$.
To construct this sparse binary tree, we increment the elements of the binary tree for each of the $N_0$ unlabeled examples, where each increment requires poly-logarithmic time~\citep{kerenidis_et_al:LIPIcs:2017:8154}.
Therefore, the runtime of this preprocessing is $\widetilde{O}(N_0)$, where $\widetilde{O}$ ignores the poly-logarithmic factors.
Remarkably, this runtime is the same scaling as just collecting the examples up to the poly-logarithmic factors.

We use this sparse binary tree with QRAM so that the empirical distribution can be input as amplitude of the quantum state~\eqref{eq:state}.
In particular, the preparation of this state is achieved by a parallelized quantum circuit at depth $O(D\log(\nicefrac{1}{\Delta}))$ to run a quantum algorithm of~\citet{G3}, where the QRAM is queried at each time step, in total $O(D\log(\nicefrac{1}{\Delta}))$ times~\citep{NEURIPS2020_9ddb9dd5,kerenidis_et_al:LIPIcs:2017:8154}.
Each of the queries to the QRAM is implementable within runtime $O(\polylog(N_0))$, as detailed below.
Thus, the overall runtime is bounded by
\begin{equation}
  O(D\log(\nicefrac{1}{\Delta})\times\polylog(N_0)).
\end{equation}
Since our analysis ignores the factor $\log(\nicefrac{1}{\Delta})$, this bound yields
\begin{equation}
  \label{seq:runtime}
  O(D\polylog(N_0)).
\end{equation}
This poly-logarithmic runtime in $N_0$ is achievable because the counts of the unlabeled examples are stored in the binary tree of height $O(D\log(\nicefrac{1}{\Delta}))$ to cover all the $O(\nicefrac{1}{\Delta^D})$ grid regions by its leaves, each of the $O(D\log(\nicefrac{1}{\Delta}))$ levels of this binary tree is used only in one of the $O(D\log(\nicefrac{1}{\Delta}))$ queries to QRAM\@, and each query to the QRAM is implemented by a $O(\polylog(N_0))$-depth quantum circuit to use $O(N_0)$ nodes at each level.
Importantly, the QRAM never measures and reads out classical bit values stored in the $O(N_0)$ nodes, which would require $O(N_0)$ runtime, but just performs $O(\polylog(N_0))$-depth unitary gates in parallel to maintain quantum superposition.

The QRAM is an architecture for using classical data in a quantum algorithm without destroying superposition, defined as (1) in the work by~\citet{PhysRevLett.100.160501}.
In our setting, the sparse binary tree has at most $N_0$ nodes as leaves, and each of the levels of the tree has at most $1,2,4,8,\ldots,N_0$ nodes, respectively, storing the classical data of the counting.
Each query to the QRAM performs a quantum circuit depending on one of the collections of these $1,2,4,8,\ldots,N_0$ nodes at each level.
The number of nodes to be used in each query is bounded by $O(N_0)$.
Then, the QRAM for using these $O(N_0)$ nodes is implementable by a $O(\polylog(N_0))$-depth parallelized quantum circuit on $O(N_0)$ qubits per query, e.g., given in Fig. 10 of the work by~\citet{PRXQuantum.2.020311}, which is combined with the above data structure to achieve the runtime~\eqref{seq:runtime}.

\vskip 0.2in

\bibliography{a}

\end{document}